\tikzset{
->, 
>=stealth, 
node distance=3cm, 
initial text=$$ 
}
\newcommand{\gui}[1]{}
\newcommand{\pa}[1]{}
\DeclareFontFamily{U}{mathx}{\hyphenchar\font45}
\DeclareFontShape{U}{mathx}{m}{n}{<-> mathx10}{}
\DeclareSymbolFont{mathx}{U}{mathx}{m}{n}
\DeclareMathAccent{\widebar}{0}{mathx}{"73}
\newcommand{\CFL}{CFL\xspace}
\newcommand{\ww}{\textup{\textsf{WW}}\xspace}
\newcommand{\nw}{\mathsf{NW}\xspace}
\newcommand{\nwa}{\textup{\textsf{NWA}}\xspace}
\newcommand{\nwl}{\textup{\textsf{NWL}}\xspace}
\newcommand{\tnw}{\mathsf{2NW}\xspace}
\newcommand{\tnwl}{\textup{\textsf{2NWL}}\xspace}
\newcommand{\tnwa}{\textup{\textsf{2NWA}}\xspace}
\newcommand{\il}{\textup{\textsf{IL}}\xspace}
\newcommand{\lil}{\textup{\textsf{LIL}}\xspace}
\newcommand{\existsw}{\textup{\textsf{$\exists$W}}\xspace}
\newcommand{\trans}[2]{\Delta^{#1}_{#2}}
\newcommand{\transd}[3]{{#1}^{#2}_{#3}}
\newcommand{\Ientff}[2]{[\![ #1,#2 ]\!]}
\newcommand{\Ientof}[2]{]\!] #1,#2 ]\!]}
\newcommand{\Ientoo}[2]{]\!] #1,#2 [\![}
\newcommand{\Ientfo}[2]{[\![ #1,#2 [\![}
\newcommand{\interv}[1]{[\![#1 ]\!]}
\newcommand{\set}[1]{\{ #1 \}}
\newcommand{\eps}{\varepsilon}
\newcommand{\dyck}{\mathcal{D}}
\newcommand{\cpl}{M}
\newcommand{\match}[1]{\cpl(#1)}
\newcommand{\imatch}[1]{\cpl^{-1}(#1)}
\newcommand{\cpli}{\cpl_1}
\newcommand{\cplii}{\cpl_2}
\newcommand{\imatchun}[1]{\cpli^{-1}(#1)}
\newcommand{\imatchdeux}[1]{\cplii^{-1}(#1)}
\newcommand{\qqs}{\forall}
\newcommand{\ilx}{\exists}
\newcommand{\non}{\lnot}
\newcommand{\et}{\wedge}
\newcommand{\imp}{\Longrightarrow}
\newcommand{\ssi}{\Longleftrightarrow}
\newcommand{\exec}[4]{#1\xlongrightarrow[#4]{#2}#3}
\title{A robust class of languages of 2-nested words} 
\author{Séverine Fratani}{LIS, Aix-Marseille Univ, CNRS, Marseille, France}{}{}{}
\author{Guillaume Maurras}{LIS, Aix-Marseille Univ, CNRS, Marseille, France}{}{}{}
\author{Pierre-Alain Reynier}{LIS, Aix-Marseille Univ, CNRS, Marseille, France}{}{}{}
\authorrunning{S. Fratani, G. Maurras and P.-A. Reynier} 
\keywords{Nested word, Determinization, Indexed languages} 
\begin{document}

\maketitle

\begin{abstract}
Regular nested word languages (a.k.a. visibly pushdown languages)
strictly extend regular word languages, while preserving
their main closure and decidability properties. Previous works have shown that 
considering languages of 2-nested words, \emph{i.e.} words enriched with two matchings 
(a.k.a. $2$-visibly pushdown languages), 
is not as successful: the corresponding model of automata is not closed
under determinization.
In this work, inspired by homomorphic representations of indexed languages,
we identify a subclass of $2$-nested words, which we call $2$-wave words.
This class strictly extends the class of nested words, while preserving its main properties.
More precisely, we prove closure under determinization of the corresponding automaton model, 
we provide a logical characterization of the recognized languages,
and show that the corresponding graphs have bounded treewidth. As a consequence, we derive important
closure and decidability properties. Last, we show that the word projections of the languages
we define belong to the class of linear indexed languages.
\end{abstract}

\section{Introduction}
\label{sec:introduction}



The class of regular languages constitutes a cornerstone of theoretical computer
science, thanks to its numerous closure and decidability properties.
A long line of research  studied extensions of this class, while preserving
its robustness. 
Context free languages (\CFL for short) constitute a very important class: they admit multiple presentations,
by means of pushdown automata, context-free grammars and more, and have 
led to numerous applications. Unfortunately, \CFL do not satisfy several 
important properties enjoyed by regular languages. More precisely, the corresponding
automaton model, namely pushdown automata, does not admit determinization.
In addition, the class of \CFL is not closed under intersection nor complement,
and universality, inclusion and equivalence are undecidable properties.

A simple way to patch this is by considering as input the word together 
with the inherent matching relation, resulting in what is known as a \emph{nested 
word}~\cite{DBLP:journals/jacm/AlurM09}.
Indeed, as soon as a word belongs to
a \CFL, one can identify a matching relation on (some of) the positions of the word,
whatever the presentation of the \CFL. For instance, if the \CFL is given as a pushdown automaton,
then this relation associates push/pop positions. Another way to define
the matching relation is to use an alternative presentation of \CFL
given in~\cite{conf/dlt/Okhotin12}, which refines the Chomsky-Sch\"utzenberger theorem.
Following this work, one can show that a language $L$ is a \CFL
iff there exists a regular language $R$, a Dyck language $D_2$ over two pairs of brackets, and two homomorphisms $h$ ($h$ is non-erasing), $g$
such that $L = h(g^{-1}(D_2)\cap R)$. This alternative presentation also leads to
a natural matching relation, induced by $D_2$. 
The model of \emph{nested word automata} naturally extends finite-state automata by allowing 
to label edges of the matching relation with states (often called hierarchical). This model
accepts the so-called class of \emph{regular nested word languages},
allowing to recover most of the
nice properties
of regular languages. 
More precisely, nested word automata can always be determinized. The class of regular nested word languages is closed under all the boolean operations, 
admits an equivalent presentation by means of logic (monadic-second order logic with a binary predicate corresponding to the matching relation), and expected
decidability properties (emptiness, universality, inclusion and equivalence).
It is worth noticing that another way to present this class is by splitting the alphabet into call/return/internal symbols. This leads to so-called \emph{visibly pushdown languages}~\cite{DBLP:conf/stoc/AlurM04}, and the associated model
of visibly pushdown automata. 

Several works tried to extend the class of regular nested word languages 
while preserving its closure and decidability properties. In particular, a focus has been put
on words with
multiple matching relations.
In~\cite{DBLP:conf/lics/TorreMP07}, the authors consider multiple stacks 
with 
a semantical restriction of push/pop operations known as $k$ phases. 
That way, they obtain
decidability of the emptiness problem. However, their model of 
automata cannot be determinized.
In~\cite{DBLP:journals/tcs/CarotenutoMP16}, the authors consider
visibly pushdown automata with multiple stacks, with an ordering
on stacks, and prove the closure under complement of their model.
Their proof does not rely on the determinization of the model:
indeed, as shown in~\cite{DBLP:conf/lics/TorreMP07}, this class cannot be determinized in general. 
This corrects a previous result published in~\cite{CMP07}, which states that
the general class of 2-stack
visibly pushdown automata is closed under complementation, which  
does not hold, as shown in~\cite{DBLP:journals/lmcs/Bollig08}.
The crux in their proof was the use of determinization of 2-stack
visibly pushdown automata.
In~\cite{DBLP:journals/lmcs/Bollig08}, Bollig studies 2-stack visibly pushdown automata
in their unrestricted form. He proves the equivalence with the existential fragment of monadic second-order logic, but that quantifier alternation leads to an infinite hierarchy in this
setting. As a corollary, the resulting class of languages is not closed under complementation.
Another restriction, known as scope-bounded pushdown languages, has been introduced
in~\cite{TorreMP10,TorreNP16}, for which the authors manage
to prove that the automaton model can be determinized.

The previous survey of related works 
shows the difficulty in identifying a class of $2$-nested words
for which the corresponding automaton model can be determinized. 
As a consequence, the decidability results presented in these papers require
ad-hoc involved proofs. 
Graphs of bounded treewidth constitute an alternative approach for
obtaining decidability properties. 
Indeed, in~\cite{MP11}, the authors show that most of the previous classes 
with good decidability properties
actually correspond to graphs of bounded treewidth, for which
MSO decidability follows from \cite{Courcelle97,Seese91}.
Yet, determinization of nested word automata is the keystone of the
nice properties of this model, and thus constitutes a highly desirable feature.
In the present
work, taking inspiration in indexed languages, we identify a class of $2$-nested words for which automata can be
determinized.
Our class is incomparable with those of~\cite{DBLP:conf/lics/TorreMP07}, \cite{DBLP:journals/tcs/CarotenutoMP16} and~\cite{TorreMP10,TorreNP16}.
Intuitively, between two matched positions of the first matching, they bound 
the number of switches between matchings, while we do not.
In addition, the proof of determinizability of~\cite{TorreMP10,TorreNP16} is  different from ours, as their proof
is a kind of superviser that uses determinization of~\cite{DBLP:journals/jacm/AlurM09} as a subroutine, while
ours generalizes the construction of~\cite{DBLP:journals/jacm/AlurM09} to two nestings.

Indexed languages ~\cite{DBLP:journals/jacm/Aho68} correspond to the level 2 of the infinite
hierarchy of higher-order languages \cite{Mas74}.
With numerous applications in computational linguistics, they
have been much studied during the seventies and the eighties  \cite{Mas76,Damm82,Engel83}. 
Homomorphic characterizations of \CFL that we presented before
have been extended to (linear) indexed languages in
several works, including~\cite{w88,DBLP:journals/iandc/FrataniV19}.
One of them (see~\cite{DBLP:journals/iandc/FrataniV19}) shows that 
$L$ is an indexed language
iff there exists a regular language $R$, two Dyck languages $D_2$ and $D_k$ over two and $k$ pairs of brackets respectively, and two homomorphisms $h$, $g$
such that $L = h(g^{-1}(D_2)\cap R \cap D_k)$, with some additional conditions on $g$. 
This presentation allows to associate with a word $w\in L$ two matchings, induced by
$D_2$ and $D_k$, which interact in a very particular way. Graphically speaking, this interaction
yields kinds of \emph{waves} (see Figure~\ref{fig:waveintro}). When restricted to linear indexed languages, the length of these
waves is upper bounded by $2$, yielding the structure of $2$-waves that will be of interest to us
in this paper. All these notions will be formally presented in the paper.
We also refer the reader to Section~\ref{sec:discussion}, in which we explore the
relationship between our work and (linear) indexed languages.



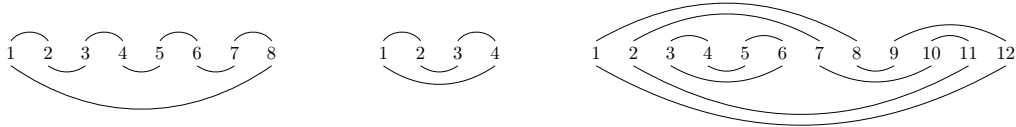
\begin{figure}[h]
\centering
\scalebox{0.7}{
\begin{tikzpicture}
\tikzset{node distance=0.7cm,}
\node (0) {$1$};
\node[right of = 0] (1) {$2$};
\node[right of = 1] (2) {$3$};
\node[right of = 2] (3) {$4$};
\node[right of = 3] (4) {$5$};
\node[right of = 4] (5) {$6$};
\node[right of = 5] (6) {$7$};
\node[right of = 6] (7) {$8$};
\draw[-] (0.north) edge[bend left =55] (1.north);
\draw[-] (2.north) edge[bend left =55] (3.north);
\draw[-] (4.north) edge[bend left =55] (5.north);
\draw[-] (6.north) edge[bend left =55] (7.north);

\draw[-] (1.south) edge[bend right =35] (2.south);
\draw[-] (3.south) edge[bend right =35] (4.south);
\draw[-] (5.south) edge[bend right =35] (6.south);

\draw[-] (0.south) edge[bend right =35] (7.south);

\begin{scope}[shift={(7,0)}]
\node (0) {$1$};
\node[right of = 0] (1) {$2$};
\node[right of = 1] (2) {$3$};
\node[right of = 2] (3) {$4$};
\draw[-] (0.north) edge[bend left =55] (1.north);
\draw[-] (2.north) edge[bend left =55] (3.north);

\draw[-] (1.south) edge[bend right =35] (2.south);

\draw[-] (0.south) edge[bend right =35] (3.south);
\end{scope}

\begin{scope}[shift={(11,0)}]
\node (0) {$1$};
\node[right of = 0] (1) {$2$};
\node[right of = 1] (2) {3};
\node[right of = 2] (3) {4};
\node[right of = 3] (4) {5};
\node[right of = 4] (5) {6};
\node[right of = 5] (6) {7};
\node[right of = 6] (7) {8};
\node[right of = 7] (8) {9};
\node[right of = 8] (9) {10};
\node[right of = 9] (10) {11};
\node[right of = 10] (11) {12};
\draw[-] (0.north) edge[bend left =30] (7.north);
\draw[-] (1.north) edge[bend left =30] (6.north);
\draw[-] (2.north) edge[bend left =30] (3.north);
\draw[-] (4.north) edge[bend left =30] (5.north);
\draw[-] (8.north) edge[bend left =30] (11.north);
\draw[-] (9.north) edge[bend left =30] (10.north);

\draw[-] (0.south) edge[bend right =30] (11.south);
\draw[-] (1.south) edge[bend right =30] (10.south);
\draw[-] (2.south) edge[bend right =30] (5.south);
\draw[-] (3.south) edge[bend right =30] (4.south);
\draw[-] (6.south) edge[bend right =30] (9.south);
\draw[-] (7.south) edge[bend right =30] (8.south);
\end{scope}

\end{tikzpicture}
}
\vspace{-.4cm}
\caption{A $4$-wave (left), a $2$-wave (middle), and a combination of
$2$-waves (right).}
\label{fig:waveintro}
\end{figure}

In this paper, we consider $2$-nested words whose 
matchings satisfy the structural restriction of $2$-waves;
we call them $2$-wave words. The main result of this paper
is to show that $2$-nested word automata are closed under
determinization on the class of $2$-wave words. 
While determinization of nested word automata extends the 
well-known powerset construction with a reference state, our construction
is more involved in order to be able to reconcile the labels of the different
arches, and we give a detailed proof of correction.
This allows us to prove the equivalence of automata with monadic-second order
logic on $2$-wave words, as well as with its existential fragment. 
This contrasts with results of~\cite{DBLP:journals/lmcs/Bollig08} which
show that on $2$-nested words, quantifier alternation yields an infinite hierarchy.
We also prove that the graphs associated with $2$-wave words are MSO-definable,
and have bounded treewidth. As a consequence, we obtain 
the following decidability results for the class of $2$-wave words: satisfiability of MSO, 
emptiness of automata (in polynomial time), universality, inclusion and
equivalence of automata (in exponential time).

In Section~\ref{sec:preliminaries}, we introduce the definitions of matchings and nested words, and provide a grammar
for $2$-wave words. In Section~\ref{sec:automata}, we introduce the automaton model, and present
in Section~\ref{sec:determinization} the main result of the paper: the closure under
determinization over $2$-wave words. Applications to logic and decidability 
are presented in Sections~\ref{sec:applications} and~\ref{sec:decision}. Last, a discussion on relations with (linear) 
indexed languages is given in Section~\ref{sec:discussion}. 
Omitted proofs can be found in Appendix.

\section{Words and matchings}
\label{sec:preliminaries}


\paragraph*{Words and relations}
For any positive integer $n$, $[n] =\{1,\dots,n\}$ is the set of all positive integers ranging from $1$ to $n$.
When referencing a position in a word, integers might be called \emph{positions} or \emph{index}.
$\Sigma$ denotes a finite alphabet. The empty word is denoted $\epsilon$, and the set of finite
words on $\Sigma$ is denoted $\Sigma^*$. The length of $w\in \Sigma^*$ is denoted $|w|$.
Given a non-empty word $w\in \Sigma^*$, its positions are numbered from $1$ to $|w|$.
\todo{PA: added def factor}
We say that $u$ is a \emph{factor} of $w$ if there exist two words $x,y$
such that $w = xuy$.
Given an interval $I\subseteq [|w|]$, we denote by $w_{|I}$ the factor of $w$
corresponding to positions in $I$.
Unless specified otherwise, words and automata introduced in this paper 
are defined on $\Sigma$.

\begin{definition}  A \emph{matching relation} of length  $n\geq 0$ is a binary 
relation $M$ on $[n]$ such that:
\begin{enumerate}
\item if $\match{i,j}$ then $i<j$, i.e. $\cpl$ is compatible with the natural order on integers
\item if $\match{i,j}$ and $\match{k,l}$ then $\{i, j\} \cap \{k, l\}\neq\emptyset \imp i=k\et j=l$, i.e. any integer is related at most once by the matching
\item if $\match{i,j}$ and $\match{k,l}$ then $i<k<j\imp l<j$ i.e. a matching is non-crossing.
\end{enumerate}
\end{definition}

 Since a matching is an injective functional relation, we will often use functional notations: $\match{i}=j$ or $\imatch{j}=i$ rather than $ \match{i,j}$.
If $\match{i,j}$, we call $i$ a \emph{call} position, $j$ a \emph{return} position and if $k\in [n]$ is neither a call nor a return position we call it an \emph{internal} position. 

If $I$ is a subset of $[n]$ we denote by $I^c$ the subset of call positions of $I$,  and by $I^r$ the subset of return positions and $I^{\text{int}}$ the subset of internal position.
We say that
$I$ is \emph{without pending arch} (wpa) if it has no pending call nor pending return, 
 \emph{i.e.}  $\match{ I^c} \cup  \imatch{ I^r}  \subseteq  I$. Note that this definition holds when $I$ is an interval, but
 even for an arbitrary subset of $[n]$. In particular, 
we say a pair $(I_1,I_2)$ of intervals is without pending arch if $I_1\cup I_2$ is.
In this case, we may also say that $(I_1,I_2)$ is a pair of \emph{matched intervals}.

\paragraph*{(2-)Nested Words}
We first recall the classical definition of nested words:
\begin{definition} A \emph{nested word} on $\Sigma$ is a pair $\omega = (w, \cpl)$ where $w\in\Sigma^*$ and $\cpl$ is a matching of length $|w|$. We write $\nw(\Sigma)$ the set of nested words on $\Sigma$, and
$\nwl(\Sigma)$ the set of languages of nested words. 
\end{definition}

Words equipped with two matchings are called $2$-nested words:
\begin{definition} A \emph{2-nested word} on $\Sigma$ is a triple $\omega = (w, \cpli, \cplii)$ where $w\in\Sigma^*$ and $\cpli$, $\cplii$ are matchings of length $|w|$. We write $\tnw(\Sigma)$ for the set of 2-nested words on $\Sigma$, and
$\tnwl(\Sigma)$ for the set of languages of 2-nested words. 
\end{definition}

\begin{example}
An example of a nested word (resp. two examples of $2$-nested words)
is depicted on the left (resp. on the middle and right)
of Figure~\ref{fig:2nw}. 
For 2-nested words, 
the matching  $\cpli$
is depicted above, while matching $\cplii$ is depicted below.
\end{example}

Given a $2$-nested word $\omega = (w,\cpli,\cplii)$ and an interval $I\subseteq [|w|]$
which is wpa w.r.t. both $\cpli$ and $\cplii$, we denote by $\omega_{|I}$
the $2$-nested word consisting of $w_{|I}$ and of the two matchings
$\cpli'$ and $\cplii'$ obtained from $\cpli$ and $\cplii$ by restricting them to
$I$, and then shifting them to the interval $[|I|]$. It is routine to verify that if
$\omega$ is a $2$-nested word, then so is
$\omega_{|I}$. 
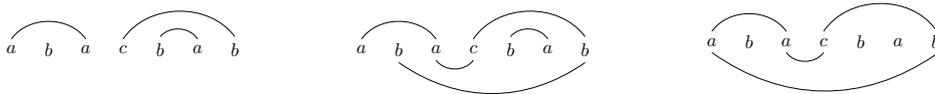
\begin{figure}[h]
\centering
\begin{subfigure}[b]{.3\textwidth}
\scalebox{0.7}{
\begin{tikzpicture}
\tikzset{node distance=0.7cm,}
\node (0) {$a$};
\node[right of = 0] (1) {$b$};
\node[right of = 1] (2) {$a$};
\node[right of = 2] (3) {$c$};
\node[right of = 3] (4) {$b$};
\node[right of = 4] (5) {$a$};
\node[right of = 5] (6) {$b$};
\draw[-] (0.north) edge[bend left =55] (2.north);
\draw[-] (3.north) edge[bend left =55] (6.north);
\draw[-] (4.north) edge[bend left =55] (5.north);

\draw[-, color=white, opacity=0] (1.south) edge[bend right =35] (6.south);
\draw[-,color=white, opacity=0] (2.south) edge[bend right =55] (3.south);

\end{tikzpicture}
}
\end{subfigure}
\hfil
\begin{subfigure}[b]{.3\textwidth}
\scalebox{0.7}{
\begin{tikzpicture}
\tikzset{node distance=0.7cm,}
\node (0) {$a$};
\node[right of = 0] (1) {$b$};
\node[right of = 1] (2) {$a$};
\node[right of = 2] (3) {$c$};
\node[right of = 3] (4) {$b$};
\node[right of = 4] (5) {$a$};
\node[right of = 5] (6) {$b$};
\draw[-] (0.north) edge[bend left =55] (2.north);
\draw[-] (3.north) edge[bend left =55] (6.north);
\draw[-] (4.north) edge[bend left =55] (5.north);

\draw[-] (1.south) edge[bend right =35] (6.south);
\draw[-] (2.south) edge[bend right =55] (3.south);
\end{tikzpicture}
}
\end{subfigure}
\hfil
\begin{subfigure}[b]{.3\textwidth}
\scalebox{0.7}{
\begin{tikzpicture}
\tikzset{node distance=0.7cm,}
\node (0) {$a$};
\node[right of = 0] (1) {$b$};
\node[right of = 1] (2) {$a$};
\node[right of = 2] (3) {$c$};
\node[right of = 3] (4) {$b$};
\node[right of = 4] (5) {$a$};
\node[right of = 5] (6) {$b$};
\draw[-] (0.north) edge[bend left =55] (2.north);
\draw[-] (3.north) edge[bend left =55] (6.north);

\draw[-] (0.south) edge[bend right =35] (6.south);
\draw[-] (2.south) edge[bend right =55] (3.south);
\end{tikzpicture}
}
\end{subfigure}
\caption{A nested word (left) and two $2$-nested words (middle and right).}
\label{fig:2nw}
\end{figure}

\paragraph*{Waves and wave words}
In the sequel, we introduce the restriction of $2$-nested words on which we will focus.
Intuitively, wave structures are graphs obtained from the two matchings consisting of cycles alternating $\cpli$-arches and $\cplii$-arches whose shape evokes waves. 
\begin{definition}
Let $n$ be an integer, and  $(\cpli,\cplii)$ be a
pair of matching relations of length $n$.
A sequence of $4$ integers $1\leq i_1 < i_2 < i_3 < i_4 \le n$
is a \emph{$2$-wave} 
if
the following holds:
\begin{itemize}
\item $\cpli(i_1,i_2)$ and $\cpli(i_3,i_4)$ (top arches),
\item $\cplii(i_2,i_3)$ (bottom arch), and $\cplii(i_1,i_4)$ (support arch)
\end{itemize}
A pair $(\cpli,\cplii)$ is a \emph{2-wave structure} if any arch in $\cpli\cup \cplii$ belongs to a $2$-wave.  
\end{definition}
\begin{remark}
One could allow 2-wave structures to admit 1-waves, 
\emph{i.e.} pairs of indices $(i_1,i_2)$ with  $i_1<i_2$, $\cpli(i_1,i_2)$ and
$\cplii(i_1,i_2)$. All our results would also hold for this generalization.
However, in order to simplify the presentation of the paper, we do not
consider them in this extended abstract.
\end{remark}

\begin{definition}
A $2$-wave word is a 2-nested word $\omega = (w,\cpli, \cplii)$ such that $(\cpli,\cplii)$ is a $2$-wave structure. We denote by $\ww_2(\Sigma)$  
the set of $2$-wave words  over the alphabet $\Sigma$.
\end{definition}

\begin{example}
Examples of waves are given on Figure~\ref{fig:waveintro}.
Let us consider the 2-nested words depicted on Figure~\ref{fig:2nw}.
The one on the middle is not a 2-wave word (the upper arch $(5,6)$ does not belong to a $2$-wave), while the one on the right is.
\end{example}

\paragraph*{Grammar}
In order to proceed with structural induction, we present
an inductive presentation of 2-wave words based on
multiple context-free grammars (MCFG for short~\cite{Seki91}). To this end, we turn to
2-visibly pushdown languages: the alphabet $\Sigma$ is duplicated into
five copies $\Sigma^c_c$, $\Sigma^c_r$, $\Sigma^r_c$, $\Sigma^r_r$ 
and $\Sigma^\textup{int}_\textup{int}$, whose disjoint 
is denoted $\tilde{\Sigma}$. This way, the
two matchings are encoded in the types of the symbols: the upper index is for the first and the lower index for the second matching relation. More formally,
given $\omega \in \ww_2(\Sigma)$, we denote by $\tilde{\omega}\in \tilde{\Sigma}^*$
its visibly pushdown version. 

In MCFG, non-terminals allow to express tuples of words. We present
a grammar with two non-terminals $\mathsf{W}$
and $\mathsf{H}$ which represent respectively words (denoted $w\in \tilde{\Sigma}^*$), and
pairs of words (denoted $(x,y)\in \tilde{\Sigma}^* \times \tilde{\Sigma}^*$). The grammar is
defined by the following rules:
$$
\begin{array}{lll}
\mathsf{W} \ni w & ::= & \epsilon \mid i  \mid w_1 w_2 \mid x w y \\
\mathsf{H} \ni (x,y) & ::= & (\epsilon,\epsilon) \mid (x_1x_2,y_2y_1) \mid
(w_1x w'_1, w_2 y w'_2) \mid (axb,cyd) 
\end{array}
$$
where $i\in \Sigma^\textup{int}_\textup{int}$
and 
$(a,b,c,d)\in \Sigma^c_c  \times \Sigma^r_c \times  \Sigma^c_r \times \Sigma^r_r $.

\begin{restatable}{lemma}{twowaves}
\label{lemma:twowaves}
$L(\mathsf{W}) = \{\tilde{\omega}\in \tilde{\Sigma}^* \mid \omega \in \ww_2(\Sigma)\}$
\end{restatable}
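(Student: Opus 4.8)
The goal is to show that the language generated by the nonterminal $\mathsf{W}$ is exactly the set of visibly-pushdown encodings $\tilde\omega$ of $2$-wave words $\omega$. This is a set equality, so I would split it into two inclusions, each proved by induction. The grammar is a \emph{multiple} context-free grammar, so the induction must track both nonterminals simultaneously: for $\mathsf{W}$ I claim the derivable words are exactly the encodings $\tilde\omega$ of $2$-wave words $\omega$, while for $\mathsf{H}$ I need a companion invariant describing which pairs $(x,y)$ are derivable. The main intellectual content of the proof is \emph{guessing the correct invariant for $\mathsf{H}$}, and this is where I expect the difficulty to lie.

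\textbf{The invariant for $\mathsf{H}$.}
Inspecting the rules, a pair $(x,y)$ derived from $\mathsf{H}$ should correspond to an interval split into two matched pieces that together close off a half-opened wave: intuitively, $x$ carries a sequence of open top-arches together with open support-arches (symbols in $\Sigma^c_c$ and $\Sigma^c_r$), and $y$ carries the matching returns (symbols in $\Sigma^r_c$ and $\Sigma^r_r$), arranged so that concatenating a $\mathsf{W}$-word between them via the rule $\mathsf{W} \ni xwy$ completes every wave. Concretely I would formulate the invariant as: $(x,y)\in L(\mathsf{H})$ iff there is a $2$-wave word $\omega = (v,\cpli,\cplii)$ and a position splitting its domain such that $x$ and $y$ are the two factors of $\tilde v$ on a pair of matched intervals $(I_1,I_2)$ (in the wpa sense of the preliminaries) whose $\cpli$- and $\cplii$-arches cross from $I_1$ to $I_2$ in the nested, wave-completing pattern dictated by the four production rules. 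Getting this phrasing exactly right — so that each of the four $\mathsf{H}$-rules preserves it and so that the rule $\mathsf{W} \ni x\mathsf{W} y$ produces precisely a $2$-wave word — is the crux.

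\textbf{The two inductions.}
For the soundness direction ($L(\mathsf{W}) \subseteq \{\tilde\omega\}$), I would induct on the structure of the derivation tree, checking for each of the four $\mathsf{W}$-rules and four $\mathsf{H}$-rules that the matching relations produced satisfy the three axioms of a matching (order, injectivity, non-crossing) and that every arch belongs to a $2$-wave. The rules $(x_1x_2,y_2y_1)$ and $(w_1 x w_1', w_2 y w_2')$ are where non-crossing and the nesting of support arches must be verified carefully; the rule $(axb,cyd)$ is the base case that introduces one fresh $2$-wave with $a\in\Sigma^c_c$, $b\in\Sigma^r_c$, $c\in\Sigma^c_r$, $d\in\Sigma^r_r$ as its four corners. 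For completeness ($\{\tilde\omega\} \subseteq L(\mathsf{W})$), I would argue by induction on $|w|$: given a $2$-wave word, I examine the leftmost position. If it is internal or if the word factors into two wpa pieces, the rules $i$ and $w_1w_2$ apply directly. Otherwise position $1$ is the $i_1$-corner of some wave; its support arch $\cplii(i_1,i_4)$ with $i_4$ the last relevant return carves out a decomposition $x\,\mathsf{W}\,y$, and I must show the border pair $(x,y)$ lies in $L(\mathsf{H})$ by a parallel induction peeling off nested support arches.

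\textbf{Main obstacle.}
The hard part will not be any single rule but establishing that the $\mathsf{H}$-invariant is \emph{simultaneously} strong enough to let the $\mathsf{W}$-rule $x\mathsf{W} y$ reconstruct a genuine $2$-wave structure and weak enough to be closed under all four $\mathsf{H}$-rules; in particular I expect to spend most effort showing, in the completeness direction, that an arbitrary border region between matched intervals of a $2$-wave word can always be decomposed according to exactly one of the four $\mathsf{H}$-productions, which amounts to a careful case analysis on the outermost arch of the region and its wave partner.
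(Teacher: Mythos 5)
Your proposal follows essentially the same route as the paper: a simultaneous induction over $\mathsf{W}$ and $\mathsf{H}$ in which the $\mathsf{H}$-invariant is exactly "pairs of matched (wpa) intervals of a $2$-wave word", with the completeness direction driven by a case analysis on the leftmost position (internal, splittable into two wpa factors, or the call-call corner of a $2$-wave whose support arch induces the $x\mathsf{W}y$ decomposition), and the soundness direction treated as a routine structural induction. The paper's proof is precisely this argument, organized as an induction on the size $n$ of the interval(s).
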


\begin{proof}[Proof sketch]
We give some hints on how to show the right to left implication. 
Let $\omega = (w,M_1,M_2) \in \ww_2(\Sigma)$.
We show, by induction on $n\leq |w|$, the following properties:
\begin{itemize}
\item Let $I \subseteq [|w|]$ such that $|I|=n$ and $I$ is wpa, then
$\tilde{\omega}_{|I} \in L(\mathsf{W})$.
\item Let $I_1,I_2 \subseteq [|w|]$ such that $|I_1|+|I_2|=n$
and $(I_1,I_2)$ is wpa, then $(\tilde{\omega}_{|I_1},\tilde{\omega}_{|I_2}) \in L(\mathsf{H})$.
\end{itemize}
The proof decomposes the 2-wave, by distinguishing cases
according to the structure of the two matchings. One can then
verify that in all cases, one can produce the words using one of the rules of the 
 grammar.
\end{proof}

\section{2-Nested Word Automata}
\label{sec:automata}


Nested word automata have been introduced
in~\cite{DBLP:journals/jacm/AlurM09} as an extension of finite-state 
automata intended to recognize nested words. 
They label arches of the matching relation with so-called \emph{hierarchical}
states:
if the position corresponds to a call (resp. a return), then the automaton
"outputs" (resp. "receives") the hierarchical state used to label the arch,
hence we place it after the input letter (resp. before).
This corresponds to push/pop operations
performed by a (visibly) pushdown automaton. 
The extension of this model to multiple matchings is natural, and
has already been considered in~\cite{DBLP:journals/lmcs/Bollig08,DBLP:journals/tcs/CarotenutoMP16}: 
with two matchings, automata label edges of both
matchings with hierarchical states.

We first introduce some notations. Let us assume that two matching
relations $\cpli$, $\cplii$ of length $n$ are given.
Then, each index $i\in [n]$
can be labelled in 9 different ways depending on it's call, return and internal 
status with regard to matchings $\cpli$ and $\cplii$. 
We say that a position $i$ is a \emph{call-return} if it is a call w.r.t. $\cpli$,
and a return w.r.t. $\cplii$. We extend this convention to other
possible types of positions (\emph{call-call}, \emph{return-call}, \emph{call-internal}\ldots).
Let $I\subseteq[n]$ and $x, y\in\set{c, r, \text{int}}$. Following our graphical representation of 2-nested words, in which
$\cpli$ is depicted above the word, and $\cplii$ is depicted below, 
we denote by $I^x_y$  the subset of $I$ with $x$ status on $\cpli$, and 
$y$ status on $\cplii$. For instance, given a position $i\in I$, we have
$i\in I^c_r$ if $i$ is a call w.r.t. $\cpli$ and a return w.r.t. $\cplii$.
We also introduce the following shortcuts:
for $x\in\set{c, r}$, $I^x = I^x_c \cup I^x_r \cup I^x_{\text{int}}$
and $I_x = I^c_x \cup I^r_x \cup I^{\text{int}}_x$.
For instance, $I^c$ (resp. $I_c$) denotes the set of positions in $I$ which are a call w.r.t.
$\cpli$ (resp. $\cplii$).

\begin{definition}
A \emph{2-nested word automaton} is a tuple $A=(Q, Q_0, Q_f, P, \Sigma, \Delta)$ where :
\begin{itemize}
\item $Q$ is a finite set of states and $Q_0, Q_f\subseteq Q$ are respectively the initial and final states
\item $P$ is a set of hierarchical states
\item $\Delta=(\Delta^x_{y})_{x, y \in\{c, r, \text{int}\}}$ is a set of transitions : 
$\Delta^x_{y}             \subseteq Q \times P^{\mathsf{in}_{x,y}} \times\Sigma\times P^{\mathsf{out}_{x,y}} \times Q$, where $\mathsf{in}_{x,y}$ (resp. $\mathsf{out}_{x,y}$) 
is the number of $r$ (resp. $c$) in $\{x,y\}$. For instance, $\mathsf{in}_{c,r}=\mathsf{in}_{r,c}=1$  and $\mathsf{out}_{c,c}=2$.
\end{itemize}
\end{definition}

\begin{remark}
In the previous definition, elements in $P^{\mathsf{in}_{x,y}}$
correspond to hierarchical states that label closing arches, which can be interpreted 
as popped symbols, while elements in $P^{\mathsf{out}_{x,y}}$
correspond to hierarchical states that label opening arches, which can be interpreted 
as pushed symbols.
Elements of $Q$ are called \emph{linear} states,
as they follow the edges of the linear order, in contrast to hierarchical states.
\end{remark}

\begin{definition}[Run/Language of a \tnwa]
Let $\omega=(a_1\dots a_n, \cpli, \cplii)\in \tnw$ and $A$ be a \tnwa. 
Let  $\ell = (\ell_i)_{i\in\Ientff{0}{n}}$ be a sequence of states,
$h^1 = (h^1_i)_{i\in [n]^c}$ and
$h^2 = (h^2_i)_{i\in[n]_c}$
be two sequences of elements of $P$.
  For all $i\in[n]$, we write $\text{run}^A_i(\omega, \ell,h^1,h^2)$ 
  if one of the following cases holds: 
 (the first four cases are illustrated on Figure~\ref{fig:transition})
\begin{description}
\item {\bf call-call:}  $i\in [n]^c_c $,  and $(\ell_{i-1}, a_i, h^1_i,h^2_i, \ell_i)   \in \Delta^c_{c}$
\item {\bf return-call:} $ i\in [n]^r_c$, and $(\ell_{i-1}, h^1_{\imatchun{i}}, a_i, h^2_i, \ell_i) \in \Delta^r_c$
\item {\bf call-return:} $ i\in [n]^c_r$, and $(\ell_{i-1}, h^2_{\imatchdeux{i}}, a_i, h^1_i, \ell_i) \in \Delta^c_r$
\item {\bf return-return:} $ i\in [n]^r_r$, and $(\ell_{i-1}, h^1_{\imatchun{i}},h^2_{\imatchdeux{i}}, a_i,  \ell_i) \in \Delta^r_r$
\item {\bf call-internal:} $i\in [n]^c_\text{int}$,  and $(\ell_{i-1}, a_i, h^1_i, \ell_i)   \in \Delta^c_{\text{int}}$
\item {\bf internal-call:} $i\in [n]^\text{int}_c$,  and $(\ell_{i-1}, a_i, h^2_i, \ell_i)   \in \Delta^{\text{int}}_{c}$
\item {\bf return-internal:} $i\in [n]^r_\text{int} $,  and $(\ell_{i-1}, h^1_{\imatchun{i}}, a_i, \ell_i)   \in \Delta^r_{\text{int}}$
\item {\bf internal-return:} $i\in [n]^\text{int}_r $,  and $(\ell_{i-1}, h^2_{\imatchdeux{i}}, a_i, \ell_i)   \in \Delta^\text{int}_{r}$
\item {\bf internal-internal:} $i\in [n]^\text{int}_\text{int} $,  and $(\ell_{i-1}, a_i, \ell_i)   \in \Delta^\text{int}_{\text{int}}$
\end{description}
\begin{figure}
\begin{subfigure}[b]{.2\textwidth}
\scalebox{0.7}{
\begin{tikzpicture}
\tikzset{node distance=1cm,}
\node(0a){$a_{i-1}$};
\node (0)[below of = 0a, yshift = +0.7cm] {$\bullet$};
\node (0l)[above of = 0, yshift = - 1.2 cm] {${\ell_{i-1}}$};
\node[right of = 0l] (1l) {$\ell_i$};
\node[right of = 0] (1) {$\bullet$};
\node[right of = 0a] (1a) {$a_i$};
\node[right of = 1] (2) {};
\node[above of = 2] (3) {};
\node[below of = 2] (4) {};
\draw[-] (1a.north) edge[bend left =30] node[below,yshift =+0.6cm]{$h^1_i$} (3.south);
\draw[-] (1l.south) edge[bend right =30] node[below,yshift =-0.05cm]{$h^2_i$}(4.north);
\draw[-] (0.east) edge (1.west);
\node(deltarc)[below of = 1,yshift =-1cm ]{$(\ell_{i-1}, a_i, h^1_i,h^2_i, \ell_i)   \in \Delta^c_{c}$};
%
\end{tikzpicture}
}
\end{subfigure}
\hfil
\begin{subfigure}[b]{.23\textwidth}
\scalebox{0.7}{
\begin{tikzpicture}
\tikzset{node distance=1cm,}
\node(0a){$a_{i-1}$};
\node (0)[below of = 0a, yshift = +0.7cm] {$\bullet$};
\node (0l)[above of = 0, yshift = - 1.2 cm] {${\ell_{i-1}}$};
\node[right of = 0l] (1l) {$\ell_i$};
\node[right of = 0] (1) {$\bullet$};
\node[right of = 0a] (1a) {$a_i$};
\node[right of = 1] (2) {};
\node[above of = 0] (3) {};
\node[below of = 2] (4) {};
\draw[-] (1a.north) edge[bend right =30] node[below,yshift =+0.7cm]{$h^1_{M_1^{-1}(i)}$} (3.south);
\draw[-] (1l.south) edge[bend right =30] node[below,yshift =-0.05cm]{$h^2_i$}(4.north);
\draw[-] (0.east) edge (1.west);
\node(deltacc)[below of = 1,yshift =-1cm]{$(\ell_{i-1}, h^1_{\imatchun{i}}, a_i, h^2_i, \ell_i) \in \Delta^r_c$};
%
\end{tikzpicture}
}
\end{subfigure}
\hfil
\begin{subfigure}[b]{.23\textwidth}
\scalebox{0.7}{
\begin{tikzpicture}
\tikzset{node distance=1cm,}
\node(0a){$a_{i-1}$};
\node (0)[below of = 0a, yshift = +0.7cm] {$\bullet$};
\node (0l)[above of = 0, yshift = - 1.2 cm] {${\ell_{i-1}}$};
\node[right of = 0l] (1l) {$\ell_i$};
\node[right of = 0] (1) {$\bullet$};
\node[right of = 0a] (1a) {$a_i$};
\node[right of = 1] (2) {};
\node[above of = 2] (3) {};
\node[below of = 0] (4) {};
\draw[-] (1a.north) edge[bend left =30] node[below,yshift =+0.7cm]{$h^1_i$} (3.south);
\draw[-] (1l.south) edge[bend left =30] node[below,yshift =-0.05cm]{$h^2_{M_2^{-1}(i)}$}(4.north);
\draw[-] (0.east) edge (1.west);
\node(deltacc)[below of = 1,yshift =-1cm ]{$(\ell_{i-1}, h^2_{\imatchdeux{i}}, a_i, h^1_i, \ell_i) \in \Delta^c_r$};

\end{tikzpicture}
}
\end{subfigure}
\hfil
\begin{subfigure}[b]{.25\textwidth}
\scalebox{0.7}{
\begin{tikzpicture}
\tikzset{node distance=1cm,}
\node(0a){$a_{i-1}$};
\node (0)[below of = 0a, yshift = +0.7cm] {$\bullet$};
\node (0l)[above of = 0, yshift = - 1.2 cm] {${\ell_{i-1}}$};
\node[right of = 0l] (1l) {$\ell_i$};
\node[right of = 0] (1) {$\bullet$};
\node[right of = 0a] (1a) {$a_i$};
\node[right of = 1] (2) {};
\node[above of = 0] (3) {};
\node[below of = 0] (4) {};
\draw[-] (1a.north) edge[bend right =30] node[below,yshift =+0.7cm]{$h^1_{M_1^{-1}(i)}$} (3.south);
\draw[-] (1l.south) edge[bend left =30] node[below,yshift =-0.05cm]{$h^2_{M_2^{-1}(i)}$}(4.north);
\draw[-] (0.east) edge (1.west);
\node(deltacc)[below of = 1,yshift =-1cm ]{ $(\ell_{i-1}, h^1_{\imatchun{i}},h^2_{\imatchdeux{i}}, a_i,  \ell_i) \in \Delta^r_r$};

\end{tikzpicture}
}
\end{subfigure}
\caption{Examples of transition steps.}
\label{fig:transition}
\end{figure}
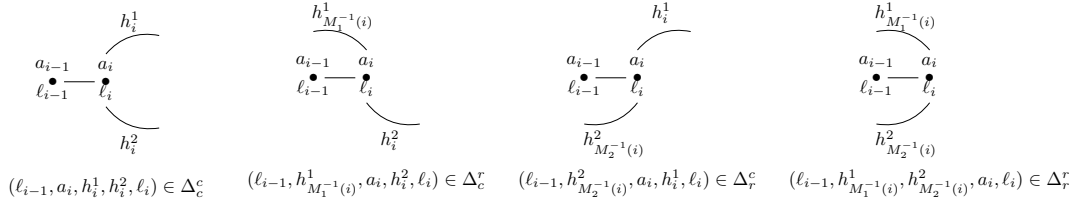

If for all $i\in[n]$,  $\text{run}^A_i(\omega, \ell, h^1, h^2)$ holds, the triple 
$(\ell,h^1,h^2)$ is said to be a \emph{run of $A$ over $\omega$}; it is an \emph{accepting run} if  $\ell_0\in Q_0$ and $\ell_n\in Q_f$.

We will write 
$\exec{q}{\omega}{q'}{A}$ when there exists a triple $(\ell,h^1,h^2)$ 
which is a run of $A$ on $\omega$, and whose first (resp. last) element
of $\ell$ is equal to $q$ (resp. $q'$). If we denote by $n$ the length of $\omega$, and
if $I\subseteq [n]$ is an interval wpa, then we write $\exec{q}{\omega, I}{q'}{A}$
as a shortcut for $\exec{q}{\omega_{|I}}{q'}{A}$.

A 2-nested word is \emph{accepted by $A$} if it admits an accepting run. The set of all 2-nested words accepted by $A$ is denoted $L(A)$, and a language of 2-nested words is called \emph{regular} if it is accepted by a 2-nested word automaton. 
In the sequel, we will also be interested in restricting the language of a \tnwa
to 2-wave words. Hence, we denote by $L_{\ww_2}(A)$ 
 the set of 2-wave words  accepted by
$A$, \emph{i.e.} $L(A)\cap\ww_2(\Sigma)$.
\end{definition}

\begin{example}\label{ex:Aex}
Let $A_{ex}=(Q, Q_0, Q_f, P, \{a,b,c,d\}, \Delta)$, with 
$Q=\{q_a,q_b,q_c,q_d\}$, $Q_0=\{ q_a\}$, $Q_f=\{q_d\}$,
$P=Q$ and $\Delta$ defined as follows (we only give non-empty transition sets): 

\begin{tabular}{l}
$\trans{c}{c}=\{(q_a, a, q_a, q_a ,q_a) \}$\\ 
$\trans{r}{c}=\{ (q_x, q_a, b, q_b, q_b) \mid x\in \{a,b\}\}$\\  
$\trans{c}{r}=\{(q_x, q_b, c, q_c,  q_c) \mid x\in \{b,c\}\}$\\ 
$\trans{r}{r}=\{ (q_x, q_c, q_a, d, q_d) \mid x\in \{c,d\}\}$
\end{tabular}
\end{example}


\begin{wrapfigure}{r}{0.4\textwidth}
 \centering
 \vspace{-1.8cm}


\scalebox{0.9}{
\begin{tikzpicture}
\tikzset{node distance=0.7cm,}
\node (d) {$$};
\node[right of = d] (11) {$a$};
\node[right of = 11] (12) {$a$};
\node[right of = 12] (14) {$b$};
\node[right of = 14] (15) {$b$};
\node[right of = 15] (21) {$c$};
\node[right of = 21] (22) {$c$};
\node[right of = 22] (24) {$d$};
\node[right of = 24] (25) {$d$};

\node[below of = d, yshift =0.4cm] (ds) {$q_a$};
\node[below of = 11, yshift =0.4cm] (11s) {$q_a$};
\node[below of = 12, yshift =0.4cm] (12s) {$q_a$};
\node[below of = 14, yshift =0.4cm] (14s) {$q_b$};
\node[below of = 15, yshift =0.4cm] (15s) {$q_b$};
\node[below of = 21, yshift =0.4cm] (21s) {$q_c$};
\node[below of = 22, yshift =0.4cm] (22s) {$q_c$};
\node[below of = 24, yshift =0.4cm] (24s) {$q_d$};
\node[below of = 25, yshift =0.4cm] (25s) {$q_d$};


\draw[-] (11.north) edge[bend left =40]  node[yshift=0.12cm]{$q_a$}  (15.north);
\draw[-] (12.north) edge[bend left =40] node[yshift=0.12cm]{$q_a$} (14.north);

\draw[-] (21.north) edge[bend left =40] node[yshift=0.12cm]{$q_c$} (25.north);
\draw[-] (22.north) edge[bend left =40] node[yshift=0.12cm]{$q_c$} (24.north);


\draw[-] (11s.south) edge[bend right =40]node[yshift=-0.12cm]{$q_a$} (25s.south);
\draw[-] (12s.south) edge[bend right =40]node[yshift=-0.12cm]{$q_a$} (24s.south);
\draw[-] (14s.south) edge[bend right =40]node[yshift=-0.12cm]{$q_b$} (22s.south);
\draw[-] (15s.south) edge[bend right =40]node[yshift=-0.12cm]{$q_b$} (21s.south);

\end{tikzpicture}}
\caption{A run of $A_{ex}$ over $\omega_2$.}
\label{fig:cyclicword}
\end{wrapfigure}
We illustrate the semantics of \tnwa by giving on Figure~\ref{fig:cyclicword} a graphical representation
of a run of $A_{ex}$ on the $\tnw$ 
$\omega_2=(a^{2}b^{2}c^{2}d^{2}, M_1, M_2)$, 
with $M_1, M_2$ depicted on Figure~\ref{fig:cyclicword}.
We let the reader check that the projection
 of $L(A_{ex})$ on $\Sigma^*$ is equal to $\{a^nb^nc^nd^n\mid n \ge 1\}$, 
 hence not context-free.

\begin{definition}[deterministic 2NWA]  A 2NWA  is deterministic iff $Q_0=\set{q_0}$
and
for all $x, y\in\set{c, r,\text{int}}$,  $\trans{x}{y}$ induces a function  $Q \times P^{\mathsf{in}_{x,y}} \times\Sigma \rightarrow  P^{\mathsf{out}_{x,y}} \times Q$.  
\end{definition}

\begin{example}
The automaton $A_{ex}$ considered in Example~\ref{ex:Aex} is deterministic. 
\end{example}

\paragraph*{Normal form}
To ease further constructions, we present a normal form for \tnwa
that requires the hierarchical state of an arch to be equal to the target linear state of its call index.
More formally, we say that a \tnwa is in weakly-hierarchical post form (post form for short)
if  $P=Q$ and for all $x, y\in\set{c, r,\text{int}}$,  $\trans{x}{y} \subseteq \bigcup_{q\in Q} Q \times Q^{\mathsf{in}_{x,y}} \times \Sigma \times  \{q\}^{\mathsf{out}_{x,y}}\times \{q\}$. 
As a consequence, transitions of an automaton in post form can be simplified:    
$\trans{x}{y} \subseteq Q \times Q^{\mathsf{in}_{x,y}} \times \Sigma \times  Q$.

It is worth observing that in a \tnwa in post form, a run is completely characterized by the linear
states. Hence, we can omit the hierarchical states in the formula 
$\text{run}^A_i$, and we can say that a sequence of (linear) states $\ell$ is a run of $A$
on a $2$-nested word $\omega$.
\begin{example}
The automaton $A_{ex}$ considered in Example~\ref{ex:Aex} is in post form.
\end{example}

\begin{restatable}{lemma}{normalform}
\label{lm:post}
Given a \tnwa $A=(Q, Q_0, Q_f, \Sigma, P, \Delta)$, we can build a \tnwa $A' = (Q', q'_0, Q'_f, \Sigma, Q',\Delta')$
which is in weakly-hierarchical post form and such that $L(A)=L(A')$.
\end{restatable}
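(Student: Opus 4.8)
The plan is to build $A'$ by a product construction that bakes into each linear state the hierarchical state(s) that $A$ pushes at the corresponding position, so that in post form the pushed symbol (now forced to equal the target linear state) still carries exactly the information needed when the arch is later closed. Concretely, writing $P_\bot := P\cup\{\bot\}$, I would take $Q' = Q\times P_\bot\times P_\bot$, with $P'=Q'$ as post form requires. The intended meaning of a state $(q,p^1,p^2)$ reached at position $i$ is: $A$ is in linear state $q$ after reading $a_i$, and the $\cpli$-arch (resp.\ $\cplii$-arch) opened at $i$ carries hierarchical state $p^1$ (resp.\ $p^2$), the value being $\bot$ exactly when $i$ is not a call for that matching. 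Accordingly I set $Q'_f = Q_f\times P_\bot\times P_\bot$.

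Transitions are defined case by case following the nine cases of the run predicate $\text{run}^A_i$, and are given directly in the simplified post-form signature $\transd{\Delta'}{x}{y}\subseteq Q'\times (Q')^{\mathsf{in}_{x,y}}\times\Sigma\times Q'$ (the pushed symbols being implicitly the target state). For a push on matching $k$, the choice of $h^k_i$ made by $A$ is copied into the $k$-th extra component of the target; for a pop of the $\cpli$- (resp.\ $\cplii$-)arch closed at $i$, we read the first (resp.\ second) extra component of the popped $Q'$-element and require the corresponding $A$-transition. For instance, to a call-call transition $(\ell_{i-1}, a_i, h^1,h^2, \ell_i)\in\trans{c}{c}$ I associate, for all $q_1,q_2\in P_\bot$,
\[
\big((\ell_{i-1},q_1,q_2),\ a_i,\ (\ell_i,h^1,h^2)\big)\in \transd{\Delta'}{c}{c},
\]
whose target, pushed on both stacks, remembers $h^1,h^2$; and to a return-call transition $(\ell_{i-1}, h^1, a_i, h^2, \ell_i)\in\trans{r}{c}$ I associate, for all $q_1,q_2,q,p^2\in P_\bot$,
\[
\big((\ell_{i-1},q_1,q_2),\ (q,h^1,p^2),\ a_i,\ (\ell_i,\bot,h^2)\big)\in \transd{\Delta'}{r}{c},
\]
reading $h^1$ off the popped state (which at a return $i$ will be the target of $\imatchun{i}$) and storing the freshly pushed $h^2$ in the target. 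The remaining seven cases are analogous, internal positions using $\bot$ and pushing/popping nothing in the corresponding component. Crucially, no transition inspects the extra components $q_1,q_2$ of its source state, since those are consumed only when the arch opened at $i-1$ is closed; hence the initial set $Q_0\times\{(\bot,\bot)\}$ can be collapsed into one fresh source-only state $q'_0$ by duplicating, with source $q'_0$, every transition whose source linear state lies in $Q_0$. As $q'_0$ never occurs as a target it is never pushed, so post form is preserved.

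For correctness I would prove the two inclusions by transporting runs. Given an accepting run $(\ell,h^1,h^2)$ of $A$ on an $\omega$ of length $n$, the sequence $\ell'_i:=(\ell_i,p^1_i,p^2_i)$, where $p^1_i=h^1_i$ if $i\in[n]^c$ and $\bot$ otherwise, and $p^2_i=h^2_i$ if $i\in[n]_c$ and $\bot$ otherwise, is an accepting run of $A'$ by the case analysis above; conversely, from an accepting run $\ell'$ of $A'$ one reads off the linear states and recovers $h^1_i,h^2_i$ as the extra components at call positions, obtaining an accepting run of $A$. The heart of the argument, and the only place needing real care, is that the extra component read at a return equals the value stored at the matching call: this holds because in post form the symbol popped at $i$ is exactly the target state pushed at $\imatchun{i}$ (resp.\ $\imatchdeux{i}$), whose $k$-th extra component was set to $h^k$ at that call. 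Since the two matchings are tracked by two independent components, a call-call position—both of whose arches close at possibly distant, unrelated returns—causes no interference. I expect the main obstacle to be purely organisational: checking uniformly, across the nine position types and in both directions, that writing into a component at a call and reading it back at the matching return is in exact correspondence with $A$'s push and pop of $h^1$ and $h^2$.
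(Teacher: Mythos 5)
Your construction is correct and is essentially the paper's own: both enrich each linear state with the hierarchical symbol(s) pushed at that position (the paper uses words $q\alpha$ with $\alpha\in\{\epsilon\}\cup P\cup PP$ where you use triples over $P\cup\{\bot\}$, a purely cosmetic difference that if anything disambiguates which component belongs to which matching), and both establish $L(A)=L(A')$ via the same one-to-one transport of runs $\ell'_i=(\ell_i,h^1_i,h^2_i)$. The only detail to add is that if $Q_0\cap Q_f\neq\emptyset$ your fresh initial state $q'_0$ must also be made final so that the empty word is handled.
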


\paragraph*{Closure properties}
Applying classical automata constructions to \tnwa, one can prove the following
closure properties of regular \tnwl. Detailed proofs can be found in Appendix~\ref{app:closure}.

\begin{restatable}{proposition}{closure}[See also~\cite{DBLP:journals/lmcs/Bollig08}]
\label{proposition:closure}
Regular \tnwl are closed under union, intersection, and direct and reciprocal image by a non erasing alphabetic morphism.  
\end{restatable}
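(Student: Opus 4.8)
The plan is to adapt the classical constructions for (nested) word automata to the two-matching setting. The feature that makes all four constructions go through is that, for a $2$-nested word $\omega = (w,\cpli,\cplii)$, the \emph{type} of each position $i$ (its $c/r/\text{int}$ status with respect to $\cpli$ and $\cplii$, and hence which transition set $\trans{x}{y}$ applies at $i$, together with the number $\mathsf{in}_{x,y}$, $\mathsf{out}_{x,y}$ of hierarchical states popped/pushed) depends only on the two matchings and not on the letters of $w$. Since we consider only non-erasing alphabetic morphisms, which are length-preserving and leave positions unchanged, such a morphism $\phi$ sends $(w,\cpli,\cplii)$ to $(\phi(w),\cpli,\cplii)$, preserving the two matchings and therefore all position types. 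Every construction below will thus manipulate only letters and states while keeping the matching skeleton, and hence the global run structure, fixed.

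For \emph{union}, given automata $A_1$ and $A_2$ I would take the disjoint union: linear states $Q_1\uplus Q_2$, hierarchical states $P_1\uplus P_2$, initial and final states the respective disjoint unions, and each $\trans{x}{y}$ the disjoint union of the corresponding transition sets of $A_1$ and $A_2$. Since $\ell_0$ lies in exactly one $Q_j$ and every transition keeps source and target in the same component, an easy induction shows that the linear states of any run stay within one component; the hierarchical state pushed at a call then lies in the matching $P_j$ and is popped by a transition of the same component at the matched return, so the run is in fact a run of $A_j$ alone, yielding $L(A)=L(A_1)\cup L(A_2)$. For \emph{intersection} I would form the product: linear states $Q_1\times Q_2$, hierarchical states $P_1\times P_2$, and a transition in $\trans{x}{y}$ on a letter $a$ obtained by combining a transition of $A_1$ and a transition of $A_2$ on the same $a$, pairing linear states and pairing hierarchical states coordinate-wise. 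Because both automata read the same $\omega$ they agree on every position type, so the push at a call and the pop at its matched return are synchronized in both coordinates; a run of the product therefore projects to a run of $A_1$ and a run of $A_2$ on $\omega$, and conversely, giving $L(A)=L(A_1)\cap L(A_2)$.

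For the morphism operations, let $\phi$ be non-erasing alphabetic. For the \emph{reciprocal image} I keep the states and hierarchical states of $A$ and replace each transition reading $b$ by transitions reading every $a$ with $\phi(a)=b$ (same states); then a run of the new automaton on $\omega$ is in bijection with a run of $A$ on $\phi(\omega)$, so $\omega$ is accepted iff $\phi(\omega)\in L$, i.e. the automaton recognizes $\phi^{-1}(L)$. For the \emph{direct image} I instead replace each transition reading $a$ by one reading $\phi(a)$; here nondeterminism guesses a preimage letter at each position, so an accepting run on a word $w'$ exactly witnesses some $\omega\in L$ with $\phi(\omega)=(w',\cpli,\cplii)$, which gives $\phi(L)$.

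The only point requiring genuine care—and the one I would write out in full—is the consistency of hierarchical states along matched arches in the union and product constructions: one must check that the element pushed at a call position and the element expected by the transition at the matched return always originate from the same component (for union) or are paired coordinate-wise (for product). This follows because the linear states determine the component, respectively the coordinate, and because the position types are identical across the automata; but it is precisely the place where the two-matching setting departs from the plain word case, and so it deserves an explicit verification rather than an appeal to the classical argument.
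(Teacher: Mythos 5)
Your proposal is correct and follows essentially the same route as the paper: disjoint union for union, a coordinate-wise product on both linear and hierarchical states for intersection, and relabelling/duplicating transitions for the direct and inverse images of a non-erasing alphabetic morphism. The extra care you take in checking that pushed and popped hierarchical states stay in the same component (resp.\ are paired coordinate-wise) is exactly the point the paper's proof relies on implicitly, so there is no substantive difference.
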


\section{Determinization of 2NWA over 2-wave words}
\label{sec:determinization}




\begin{theorem}\label{determinisation 2-vagues}
\tnwa are determinizable on the subclass of 2-wave words, \emph{i.e.}, given
a \tnwa $A$, we can build a deterministic \tnwa $A'$ such that
$L_{\ww_2}(A) = L_{\ww_2}(A')$.
\end{theorem}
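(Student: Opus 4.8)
### Plan for Proving Theorem (Determinization over 2-wave words)

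The plan is to generalize the powerset-with-reference-state construction of Alur and
Madhusudan from one matching to two, exploiting the rigid combinatorial skeleton imposed
by the 2-wave structure (as captured by Lemma~\ref{lemma:twowaves}). The key observation
is that in a $2$-wave word, whenever we read an arch we must be able to pair a closing
position with the correct opening position, and thanks to the post-form
(Lemma~\ref{lm:post}) a run is determined by its sequence of linear states alone. The
deterministic machine $A'$ will maintain, at each position $i$, a \emph{summary} of all
partial runs of $A$ on the prefix read so far. As in the one-matching case, this summary
cannot be a plain set of states, because when we later close an arch we must reconcile the
state reached at the opening position with the state consumed at the closing position;
hence the summary is a set of \emph{pairs} of states (a source state together with a
currently-reachable target state), which plays the role of the reference-state technique.
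The difficulty here is that with two interleaved matchings a single position can
simultaneously open one arch and close another (the \textbf{call-return} and
\textbf{return-call} cases), so the bookkeeping must track both matchings coherently.

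First I would fix $A$ to be in weakly-hierarchical post form, so that hierarchical states
coincide with linear states and a run is a sequence $\ell$ of linear states. I would then
define the states of $A'$ to be sets $S \subseteq Q\times Q$ of \emph{summary pairs},
where $(q,q')\in S$ records that some partial run starting in source state $q$ has reached
$q'$ at the current position. The transition function of $A'$ must update $S$ while
processing each of the nine position types in
$\run^A_i$. The internal-internal, call-internal, internal-call cases are the easy,
set-theoretic updates. The genuinely new work is in the four arch-closing/opening cases:
upon a \textbf{call-call} we push the current summary onto a stack together with the letter,
upon a \textbf{return-return} we pop and compose the stored summary with the current one,
and the mixed \textbf{call-return} and \textbf{return-call} cases must simultaneously pop
from one matching's stack and push to the other's. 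The $2$-wave restriction is exactly what
guarantees the stacks are used in a disciplined, non-crossing way: by the grammar of
Lemma~\ref{lemma:twowaves}, every arch sits inside a wave with the canonical top/bottom/support
pattern, so the reconciliation needed when closing a $\cplii$ arch always finds its
matching $\cpli$ arches in the expected stack position. I would make the correspondence
between stack contents and wave structure precise using the nonterminals $\mathsf{W}$ and
$\mathsf{H}$: a $\mathsf{W}$-derivation corresponds to a balanced stack, and an
$\mathsf{H}$-derivation corresponds to a pair of matched intervals carried on the stack.

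The correctness proof proceeds by induction on the length of the input, following the
inductive structure of Lemma~\ref{lemma:twowaves}. For each wpa interval $I$ (and each wpa
pair $(I_1,I_2)$) I would prove the invariant that the summary $S$ computed by $A'$ after
reading $\omega_{|I}$ is exactly $\{(q,q') \mid \exec{q}{\omega,I}{q'}{A}\}$, and a
corresponding pair-indexed invariant for the $\mathsf{H}$-nonterminal carrying the two
interval-summaries on the stack. Acceptance is then read off from the final summary by
checking whether it contains a pair $(q_0,q_f)$ with $q_0\in Q_0$, $q_f\in Q_f$.
Determinism of $A'$ is immediate, since the summary update is a function. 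The main obstacle,
and the place where I expect most of the technical effort to go, is the mixed cases: closing
a bottom $\cplii$ arch while the two enclosing top $\cpli$ arches are still open (and
symmetrically) forces the construction to reconcile a summary popped from one stack against
a summary being maintained for the other, and getting the composition of summary relations
right so that the invariant is preserved is the delicate step. This is precisely the
subtlety the authors allude to when they say their construction is ``more involved in order
to reconcile the labels of the different arches.''
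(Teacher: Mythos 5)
Your overall strategy coincides with the paper's: put $A$ in post form, run a powerset construction whose elements carry reference information, let the hierarchical states of $A'$ store the $A'$-state and the input letter of call positions, and prove correctness by induction following the grammar of Lemma~\ref{lemma:twowaves} (wpa intervals for $\mathsf{W}$, pairs of matched intervals for $\mathsf{H}$). However, your concrete data structure --- sets of \emph{pairs} $(q,q')$, one source state and one current state --- is not sufficient, and the place where it breaks is exactly the ``delicate step'' you defer. With two matchings, a position is in general covered by two different arches (one per matching) with two different call positions, and when each of these arches closes, the automaton must recover the state of $A$ at \emph{its own} call. A single source slot can remember only one of them: at the position preceding the return--return index $i_4$ of a wave $(i_1,i_2,i_3,i_4)$, the transition of $A$ consumes both $\ell_{i_3}$ (hierarchical state of the second top arch) and $\ell_{i_1}$ (hierarchical state of the support arch); a pair whose source is taken relative to the covering $\cpli$-arch loses the correlation between the current state and $\ell_{i_1}$, and that correlation cannot be recomputed from the set stored on the support arch. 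This is why the paper stores \emph{triples} $(\overline{r},\underline{r},q)$ with one reference per matching.

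Even plain triples in $Q\times Q\times Q$ would not suffice, which is the second idea missing from your plan. Since the states of $A'$ are \emph{sets}, the summaries computed on the first top arch, the bottom arch and the second top arch mix fragments of different runs of $A$, and composing them naively at $i_4$ certifies ``phantom'' runs obtained by gluing incompatible fragments. The paper's fix has two ingredients: (i) at the mixed positions the hierarchical state pushed on the new arch \emph{contains} the one just popped ($\delta^r_c$ pushes $(S_1,\alpha_1,S_2,\alpha_2)$ on the bottom arch, $\delta^c_r$ pushes $(S_2,\alpha_2,S_3,\alpha_3)$ on the second top arch), so that at $i_4$ the automaton sees the $A'$-states preceding all four positions of the wave; and (ii) the $\cpli$-reference above the second top arch is enriched to a word $q'_1q'_2q'_3\in Q^3$ recording the states chosen after $i_1$, $i_2$ and $i_3$, together with the compatibility tests $\overline{r}_1\preceq\overline{r}_3$ in $\phi_3$ and $(q'_1q'_2q'_3,q'_1,q_4)\in S_4$ in $\phi_4$. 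Without these, the definition of $S'_4$ cannot enforce that the run fragment used on $\Ientoo{i_3}{i_4}$ starts from a state consistent with the \emph{same} $q'_1,q'_2,q'_3$ used on the other arches, and the inclusion $L_{\ww_2}(A')\subseteq L_{\ww_2}(A)$ fails. Your proposal correctly names the obstacle but does not supply the mechanism that overcomes it.
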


Thanks to Lemma~\ref{lm:post}, we start from 
a \tnwa in post normal form $A=(Q, Q_0, Q_f, \Delta)$.
This normal form will allow to lighten the presentation, 
as less arguments are needed to write the transitions
and the runs.
We will describe the construction of a deterministic
\tnwa $A'$ (not in post normal form), which accepts 
the same language of 2-wave words.

\paragraph*{Introduction to the construction}
The determinization of finite-state automata, known as the powerset construction, 
registers all the states reachable by a run of $A$, which in the sequel 
are named \emph{current} states.
The determinization procedure of~\cite{DBLP:journals/jacm/AlurM09} for nested word automata 
requires the recording of two states. In addition to the current state, they also store 
the state labelling the call of the arch covering the current position. This state
is named \emph{reference} state. Intuitively, it stores where we come from, and
will be used when closing the arch to reconcile the global run with what happened
below this arch.

In our setting, as we consider a pair of matchings $(\cpli,\cplii)$ instead of
a single one, we need
to store triples of states, composed of two reference states (one for $\cpli$
and one for $\cplii$), and one current state. Hence, states of $A'$ will be sets of such triples
of states.


The very particular shape of $2$-wave words, and in particular the fact that 
the support arch and the top arches do end up at the same return-return position,
ensure that we can gather the information collected along these two paths
to compute, in a deterministic fashion, the set of possible
current states.

However, when trying to address the setting of $2$-wave words, 
we face another difficulty related to reference states.
Indeed, the computations we will perform on each arch of the $2$-wave word are
somehow disconnected. In order to be sure that they can be reconciled, we will enrich
the reference state of the second top arch of the $2$-wave with the
state of the bottom arch and that of the first top arch. This allows us
to check whether the reference associated with the second
top arch is \emph{compatible} with that chosen for the first top arch.

%
%

\paragraph*{Construction}
%
%
We will define the deterministic \tnwa $A'=(Q',\{q'_0\}, Q_f', P', \delta)$
in the following way.
We first introduce the reference states for $\cpli$ and $\cplii$:~\footnote{Observe that
the reference state for the second top arch is a triple of states, as explained before.}
$$\begin{array}{lll}
\overline{\mathcal{R}}^1 &:= Q & \text{the reference for positions at the surface of the first top arch}\\
\overline{\mathcal{R}}^2 &:= Q^3 &\text{the reference for positions at the surface of the second top arch}\\
\underline{\mathcal{R}}  &:= Q &\text{the reference for $\cplii$}
\end{array}$$
We denote by $\overline{\mathcal{R}}$ the union 
$\overline{\mathcal{R}}^1\cup\overline{\mathcal{R}}^2$.
This allows us to define:
$$\begin{array}{lllll}
Q'   &:=2^{\overline{\mathcal{R}}^1\times \underline{\mathcal{R}} \times Q}\cup 2^{\overline{\mathcal{R}}^2\times \underline{\mathcal{R}} \times Q}& \quad &
q'_0 &:=\set{(q_0,q_0,q_0);\ q_0\in Q_0}\\
Q'_f &:=\set{S\in Q';\ S\cap Q\times Q\times Q_f \neq \emptyset}& \quad &
P'   &:= (Q'\times \Sigma) \cup (Q'\times \Sigma)^2
\end{array}$$

\begin{figure}

\newcommand{\red}[1]{\textcolor{red}{#1}}
\scalebox{0.8}{
\begin{tikzpicture}
\tikzset{node distance=1cm}
\node (q1){$\red{q_1}$};
\node[right of = q1, xshift = 1 cm] (q1p) {$\red{q_1'}$};

\node[right of = q1p, xshift = 1.5 cm] (q2) {$\red{q_2}$};
\node[right of = q2, xshift = 1 cm] (q2p) {$\red{q_2'}$};

\node[right of = q2p, xshift = 1.5cm] (q3) {$\red{q_3}$};
\node[right of = q3, xshift = 1 cm] (q3p) {$\red{q_3'}$};

\node[right of = q3p, xshift = 1.5cm] (q4) {$\red{q_4}$};
\node[right of = q4, xshift = 1 cm] (q4p) {$\red{q_4'}$};

\draw[->] (q1) edge[color= red] node[above, color =red](a1){$\alpha_1$}  node[below, yshift = -0.3cm](b1){} (q1p);
\draw[->] (q2) edge[color= red]  node[above, color =red](a2){$\alpha_2$} node[below, yshift = -0.3cm](b2){}  (q2p);
\draw[->] (q3) edge[color= red]  node[above, color =red](a3){$\alpha_3$}  node[below, yshift = -0.3cm](b3){} (q3p);
\draw[->] (q4) edge[color= red]  node[above, color =red](a4){$\alpha_4$}  node[below, yshift = -0.3cm](b4){} (q4p);

\draw[-] (b1) edge[bend right=90] node[above, color =red]{$q'_1$}  node[below, color = blue]{$(S_1,\alpha_1)$}  (b4);
\draw[-] (b2) edge[bend right=90] node[above, color =red]{$q'_2$}  node[below, color = blue]{$(S_1,\alpha_1,S_2,\alpha_2)$} (b3);
\draw[-] (a1) edge[bend left=90] node[above, color =red]{$q'_1$} node[below, color = blue, yshift = -0.2cm]{$(S_1,\alpha_1)$} (a2);
\draw[-] (a3) edge[bend left=90] node[above, color =red]{$q'_3$} node[below, color = blue, yshift = -0.2cm]{$(S_2,\alpha_2,S_3,\alpha_3)$} (a4);

\node[below of = q1, color = blue, yshift =0.5 cm] (S1){${S_1}$};
\node[below of = q1p, color = blue, yshift =0.5 cm] (S1p) {${S_1'}$};

\node[below  of = q2, color = blue, yshift =0.5 cm] (S2) {${S_2}$};
\node[below  of = q2p, color = blue, yshift =0.5 cm] (S2p) {${S_2'}$};

\node[below  of = q3, color = blue, yshift =0.5 cm] (S3) {${S_3}$};
\node[below  of = q3p, color = blue, yshift =0.5 cm] (S3p) {${S_3'}$};

\node[below  of = q4, color = blue, yshift =0.5 cm] (S4) {${S_4}$};
\node[below  of = q4p, color = blue, yshift =0.5 cm] (S4p) {$S_4'$};

\node[below of = q1, color = blue, yshift =0 cm] (s1){$(\overline{r_1}, \underline{r_0},q_1)$};
\node[below of = q1p, color = blue, yshift =0 cm] (s1p) {$(q_1',q_1',q_1')$};

\node[below  of = q2, color = blue, yshift =0 cm] (s2) {$(q_1', \underline{r_2},q_2)$};
\node[below  of = q2p, color = blue, yshift =0 cm] (s2p) {$(\overline{r_1}, q_2',q_2')$};

\node[below  of = q3, color = blue, yshift =0 cm] (s3) {$(\overline{r_3}, q_2',q_3)$};
\node[below  of = q3p, color = blue, yshift =0 cm] (s3p) {$\: (q_1'q_2'q_3', \underline{r_2},q_3')$};

\node[below  of = q4, color = blue, yshift =0 cm] (s4) {$(q_1'q_2'q_3', q_1',q_4)\:\:\:$};
\node[below  of = q4p, color = blue, yshift =0 cm] (s4p) {$(\overline{r_3}, \underline{r_0},q_4')$};

\node[below of = a1, yshift = 3.5cm, xshift = -2cm](1){};
\node[below of = a2, yshift = 3.5cm, xshift = 2cm](2){};
\draw[-] (1) edge  node[below, color = red, xshift = -3.5cm]{$\overline{r_1}$}  (2);

\node[below of = a3, yshift = 3.5cm, xshift = -2cm](3){};
\node[below of = a4, yshift = 3.5cm, xshift = 2cm](4){};
\draw[-] (3) edge  node[below, color = red, xshift = -3cm]{$\overline{r_3}$}  (4);

\node[below of = a1, yshift = - 4.5cm, xshift = -2cm](5){};
\node[below of = a4, yshift = - 4.5cm, xshift = 2cm](6){};
\draw[-] (5) edge  node[above, color = red, xshift = -8cm]{$\underline{r_0}$}  (6);

\node[below of = a2, yshift = - 2.2cm, xshift = -2cm](7){};
\node[below of = a3, yshift = - 2.2cm, xshift = 2cm](8){};
\draw[-] (7) edge  node[above, color = red, xshift = -3.5cm]{$\underline{r_2}$}  (8);
\end{tikzpicture}
}
\vspace{-.5cm}
\caption{Illustration of the determinization. The (original) run in $A$ is depicted in red, while 
the (new) one in $A'$ is in blue. Elements of states of $A'$ are illustrated by triples 
depicted 
below them.
}\label{fig:construction}
\end{figure}
The transition function $\delta$ for $A'$ is defined as follows, by distinguishing cases according to the nature of the symbol. The construction is illustrated on Figure~\ref{fig:construction}. 
In order to ease the writing, the arguments of the formula are not explicitly written.

\begin{itemize}
\item $\delta_\text{int}(S, \alpha) := \set{ (\overline{r}, \underline{r}, q') ;\ \exists q \ (\overline{r}, \underline{r}, q)\in S \land  (q,\alpha,q')\in \Delta_\text{int}}$ 


\item $\delta_c^c(S_1, \alpha_1) := (S'_1, (S_1,\alpha_1), (S_1,\alpha_1))$ where 
\begin{align*}
S_1'   &:= \set{(q_1', q_1', q_1') ;\ \exists \overline{r}_1, \underline{r}_0, q_1 \ \phi_1} \\
\phi_1 &:= (\overline{r}_1, \underline{r}_0, q_1) \in S_1 \land (q_1,\alpha_1, q_1')\in \Delta_c^c.
\end{align*}

\item $\delta^r_c(S_2, (S_1,\alpha_1), \alpha_2) := (S'_2, (S_1,\alpha_1, S_2,\alpha_2))$ where  
 \begin{align*}
 S'_2   &:= \set{(\overline{r}_1, q'_2, q'_2); \ \exists \underline{r}_0, \underline{r}_2, q_1, q'_1,  q_2 \ \phi_2 }\\
 \phi_2 &:= \phi_1 \land (q'_1,\underline{r}_2,q_2) \in S_2 \land (q_2,q'_1,\alpha_2,q'_2)\in \Delta^r_c  
 \end{align*}

\item $\delta^c_r(S_3, (S_1,\alpha_1,S_2,\alpha_2),\alpha_3) := (S'_3, (S_2,\alpha_2,S_3,\alpha_3))$ where~\footnote{$\preceq$ denotes the prefix partial order on strings.}  
 \begin{align*}
 S'_3 &:= \set{(q'_1q'_2q'_3, \underline{r}_2, q'_3); \ \exists \overline{r}_1,  \overline{r}_3, \underline{r}_0,  q_1,q_2, q_3 \ \phi_3 }\\
 \phi_3 &:= \phi_2 \land  \overline{r}_1\preceq \overline{r}_3 \land (\overline{r}_3, q'_2,q_3) \in S_3 \land (q_3,q'_2,\alpha_3,q'_3)\in \Delta^c_r
 \end{align*}

 
\item $\delta^r_r(S_4, (S_2,\alpha_2,S_3,\alpha_3), (S_1,\alpha_1),\alpha_4) :=  S'_4$ where  
 \begin{align*}
 S'_4 &:= \set{(\overline{r}_3, \underline{r}_0, q'_4); \ \exists  \overline{r}_1,\underline{r}_2, q_1, q'_1, q_2, q'_2, q_3, q'_3, q_4  \ \phi_4 }\\
 \phi_4 &:= \phi_3 \land  (q'_1 q'_2 q'_3, q'_1, q_4) \in S_4  \land (q_4, q'_3, q'_1, \alpha_4, q'_4)\in \Delta^r_r  
 \end{align*}

\end{itemize}
\begin{remark}
Formulas $(\phi_j)_{j\in [4]}$ 
are not quantified at all, hence all their variables are free. Their objective is to link parameters extracted from the triplets of states of $A'$.
\end{remark}

\paragraph*{Proof of correctness}
We fix a 2-wave word $\omega=(w, \cpli, \cplii)$, with $w = a_1\dots a_n$.
The following proposition states that on an interval without pending arch, 
runs of $A'$ exactly capture possible runs of $A$, while keeping track
of the reference states, as explained before. 
\begin{restatable}{proposition}{determinisation}
\label{lemme de determinisation}
Let $\Ientff{s}{f} \subseteq [n]$ be wpa, 
and $S, S' \in Q'$ such that $\exec{S}{\omega, \Ientff{s}{f}}{S'}{A'}$, then:
$$\forall  \overline{r} \in \overline{\mathcal{R}},\ \underline{r} \in \underline{\mathcal{R}},\ q'\in Q, \ \left((\overline{r},\underline{r},q')\in S' \iff \exists q \in Q \ (\overline{r},\underline{r},q)\in S \land
 \exec{q}{\omega, \Ientff{s}{f}}{q'}{A}\right)$$
\end{restatable}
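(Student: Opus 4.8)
The plan is to prove the proposition by induction following the grammar of Lemma~\ref{lemma:twowaves}. As $\Ientff{s}{f}$ is wpa, the restriction $\omega_{|\Ientff{s}{f}}$ is a $2$-wave word and thus derivable from $\mathsf{W}$, so I case-split on its derivation. The rule $x w y$ forces a companion statement for $\mathsf{H}$: a pair $(x,y)$ is a context $x\,\bullet\,y$ with a hole $\bullet$ between its two blocks, and the two blocks are tied together by the arches crossing from the $x$-block to the $y$-block; a straightforward induction on the $\mathsf{H}$-rules shows that all such crossing arches belong to $\cplii$. I would therefore state and prove simultaneously, by induction on total size, statement (W) --- the proposition itself --- and a statement (H) describing the effect of a pair-context on the reachable triples, where the behaviour assigned to the hole $\bullet$ is a parameter and where $A'$ carries across $\bullet$ exactly the $\cplii$-hierarchical states of the crossing arches.

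The base cases are immediate: for $\epsilon$ both runs are empty and references are unchanged, and for a single internal-internal position the equivalence is literally the defining clause of $\delta_\text{int}$ read against the unique matching transition of $\Delta_\text{int}$. The concatenation rule $w_1 w_2$ splits $\Ientff{s}{f}$ into two wpa pieces across which no arch passes, so references are preserved and one composes the $A'$-run $S \to S'' \to S'$ with the factorisation $q \to q'' \to q'$ of any $A$-run, invoking (W) on each half in both directions. The rule $x w y$ is handled by gluing: the run of $A'$ decomposes as a run over $x$, a run over the wpa content $w$, and a run over $y$, and one applies (H) to the context $(x,y)$ with $\bullet$ filled by the $A'$- and $A$-runs over $w$ provided by (W), checking that the reference seen by $w$ (the surface of the innermost crossing $\cplii$-arch, or the enclosing reference if there is none) is the one (W) preserves.

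The heart of the argument, and the main obstacle, is the $\mathsf{H}$-rule $(a\,x\,b,\;c\,y\,d)$, which produces a full wave $i_1<i_2<i_3<i_4$ and brings in the four corner transitions $\delta^c_c,\delta^r_c,\delta^c_r,\delta^r_r$ at $a,b,c,d$. Reading $a\,x\,b\,\bullet\,c\,y\,d = a\,(x\,[\,b\,\bullet\,c\,]\,y)\,d$, the run of $A'$ over this context wraps the inner context $(x,y)$ --- handled by (H) --- between $\delta^c_c$ (at $a$) and $\delta^r_r$ (at $d$), while the inner hole is realised by $\delta^r_c$ (at $b$), the real hole $\bullet$, and $\delta^c_r$ (at $c$). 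The chained formulas $\phi_1,\dots,\phi_4$ accumulate exactly the identities reconciling the four arches: at $i_2$ the label popped from the first top arch must equal the top-reference $q'_1$ recorded below it; at $i_3$ the label popped from the bottom arch must equal the bottom-reference $q'_2$; and at $i_4$ the labels popped from the second top arch and the support arch must equal the third component $q'_3$ of the enriched reference and the support-reference $q'_1$. I would verify both directions: \emph{completeness}, by turning an $A$-run over the wave into witnesses making $\phi_4$ true; and \emph{soundness}, by reading off from any satisfying assignment of $\phi_4$ a genuine $A$-run, gluing the sub-runs over $x$ and $y$ supplied by (H) along the reconciled corner states.

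The truly delicate point is the reconciliation of the two top arches, which is the purpose of the enriched reference $\overline{\mathcal{R}}^2 = Q^3$ and of the compatibility test $\overline{r}_1 \preceq \overline{r}_3$. The enriched triple carried along any second top arch begins with the surface state of its first top arch; hence, when $i_3$ lies under a second top arch whose first top arch encloses $i_1$, the prefix test checks precisely that the reference recorded at $i_1$ is the first component of the reference recorded at $i_3$, i.e. that $i_1$ and $i_3$ are crossed by one and the same run of the enclosing structure. Proving that this test is neither too permissive (for soundness) nor too restrictive (for completeness), so that the (H)-hypothesis on the inner pair glues correctly with the four corner transitions, is where the real work lies. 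Once (W) and (H) are established, the determinization theorem follows by instantiating the proposition on $\Ientff{1}{n}$ with $S = q'_0$, which yields $S' \in Q'_f$ iff $A$ admits an accepting run over $\omega$.
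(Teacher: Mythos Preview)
Your approach is genuinely different from the paper's, which does \emph{not} run a single symmetric induction over the grammar but treats the two implications separately and asymmetrically.

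For the reverse implication $(\Leftarrow)$ the paper avoids the grammar entirely: it defines reference mappings $\overline{R},\underline{R}$ explicitly via the surface functions $s_{\cpli},s_{\cplii}$ and proves by forward induction on the position $k\in\Ientff{s-1}{f}$ that $(\overline{R}(k),\underline{R}(k),\ell_k)\in L_k$. This is a plain simulation argument with no structural decomposition.

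For the direct implication $(\Rightarrow)$ the paper does use a structural induction on the wpa interval (called IH$_1$), but only over $\mathsf{W}$-style cases (internal, split into two wpa pieces, outermost $2$-wave); it never states a companion invariant for $\mathsf{H}$. The outermost-wave case is discharged by a dedicated Lemma~\ref{determinisation non wpa} about one fixed wave $(i_1,i_2,i_3,i_4)$, and \emph{that} lemma is proven by a second, independent induction (IH$_2$) on $i_4-i_1$, with an auxiliary predicate $H(k,K,p,q)$ and an iterative inner-wave peeling step (Lemma~\ref{lemmeHI3}) that repeatedly combines IH$_1$ on wpa subintervals with IH$_2$ on strictly smaller inner waves.

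Your plan --- prove (W) and an (H) simultaneously, both directions at once, along the grammar --- is more uniform, and your reading of the $\preceq$ test is correct. The cost is that you must \emph{state} (H) precisely, and this is where the bookkeeping piles up: the references $(\overline{r},\underline{r})$ are reset at $\delta^c_c$, partially restored at $\delta^r_c$, enriched to a $Q^3$-element at $\delta^c_r$, and fully restored at $\delta^r_r$, so (H) has to specify how triples at the four boundary points of $(I_1,I_2)$ relate to partial $A$-runs on $I_1\cup I_2$, while being parameterised by an arbitrary hole-relation \emph{and} by the full stack of $A'$-hierarchical states on the pending crossing $\cplii$-arches (there can be arbitrarily many, one per nesting level of $\mathsf{H}$). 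The paper dodges this by collapsing the entire $\mathsf{H}$-layer into the single concrete formula $\psi_4$ tied to the four positions of one wave. Your route buys uniformity at the price of discovering the right (H); the paper's buys concreteness at the price of two interleaved inductions and an ad-hoc peeling argument.
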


Before sketching the proof of this proposition, we fix some notations.
Let $\Ientff{s}{f} \subseteq [n]$ be an interval wpa, 
and $S, S' \in Q'$ such that $\exec{S}{\omega, \Ientff{s}{f}}{S'}{A'}$.
Observe that as $A'$ is deterministic, it has a unique
run on the $2$-wave word $\omega$ restricted to $\Ientff{s}{f}$.
We let $(L_i)_{i\in \Ientff{s-1}{f}}$ denote the (linear) states of this run.
In particular, we have $L_{s-1}=S$ and $L_f=S'$. As $A'$ is deterministic,
hierarchical states are completely determined from linear ones.

In order to prove Proposition \ref{lemme de determinisation}, we separate the direct from the 
indirect case. For the indirect case, it is easy to show by induction that 
any run of $A$ on $\omega$ (restricted to $\Ientff{s}{f}$) will appear in the run of $A'$ on $\omega$ 
(restricted to $\Ientff{s}{f}$). This only
requires to define carefully the corresponding reference states.
\begin{restatable}{lemma}{indirect}
Let $\ell = (\ell_i)_{i\in \Ientff{s-1}{f}}$ be a run of $A$ on $\omega$ (restricted to $\Ientff{s}{f}$) 
such that $\ell_{s-1} = q$, $\ell_{f} = q'$ and $(\overline{r}, \underline{r}, q)\in S=L_{s-1}$. 
Then we can define reference mappings $\overline{R}$ and $\underline{R}$
from $\Ientff{s-1}{f}$ to $\overline{\mathcal{R}}$
and $\underline{\mathcal{R}}$ respectively, such that 
$\overline{R}(f)=\overline{r}$, $\underline{R}(f)=\underline{r}$, 
and for all $i\in \Ientff{s-1}{f}$,
$(\overline{R}(i),\underline{R}(i),\ell_i)\in L_i$.
\end{restatable}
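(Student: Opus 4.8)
The plan is to decouple the two tasks hidden in the statement: first fix, once and for all, an explicit definition of the two reference mappings driven purely by the matching structure of $\omega$ and the given run $\ell$; then verify the membership condition $(\overline{R}(i),\underline{R}(i),\ell_i)\in L_i$ by a single strong induction on the position $i\in\Ientff{s-1}{f}$, replaying the definition of $\delta$ case by case. For $\underline{R}(i)$ I would take $\ell_j$, where $j$ is the call of the innermost $\cplii$-arch still open just after $i$ inside $\Ientff{s}{f}$, and $\underline{R}(i):=\underline{r}$ when $i$ lies on the $\cplii$-surface. For $\overline{R}(i)$ I distinguish the flavour of the innermost open $\cpli$-arch: a first top arch with call $i_1$ gives $\overline{R}(i):=\ell_{i_1}\in\overline{\mathcal{R}}^1$; the second top arch of a $2$-wave $(i_1,i_2,i_3,i_4)$ gives $\overline{R}(i):=(\ell_{i_1},\ell_{i_2},\ell_{i_3})\in\overline{\mathcal{R}}^2$; and $\overline{R}(i):=\overline{r}$ on the $\cpli$-surface. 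Because every arch of a $2$-wave word belongs to a $2$-wave and the four corners are recoverable from the matchings, these cases are exhaustive and the definition is well posed. At $i=s-1$ both references are surface values, so $(\overline{R}(s-1),\underline{R}(s-1),\ell_{s-1})=(\overline{r},\underline{r},q)\in S$ anchors the induction; and since a wpa interval cannot end on a position that is a call for $\cpli$ or $\cplii$, $f$ is either internal--internal or a return--return corner of a top-level wave, whence $\overline{R}(f)=\overline{r}$ and $\underline{R}(f)=\underline{r}$ as required.

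The inductive step splits into the five position types handled by $\delta$. The internal--internal and call--call cases are immediate: the triple output by $\delta_\text{int}$ (resp.\ $\delta_c^c$) is witnessed by our triple at $i-1$, available by induction, together with the $A$-transition read at $i$, which holds since $\ell$ is a run of $A$. The three remaining cases are return positions for at least one matching, so $\delta$ pops a hierarchical state and re-evaluates the formulas $\phi_j$. Here I would invoke two structural facts about $2$-wave words, both immediate from the grammar of Lemma~\ref{lemma:twowaves}: the region strictly enclosed by a first top arch (resp.\ a bottom arch, resp.\ a second top arch) is without pending arch for the relevant matching, so the hierarchical state popped at a return is exactly the one pushed at the matching call; and at the return--return corner $i_4$ the innermost open $\cplii$-arch is the support arch $\cplii(i_1,i_4)$, so the two pops at $i_4$ refer to the calls $i_1$ and $i_3$ of one and the same wave.

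Granting these facts, each return case goes through by feeding the inductively available triples at $\imatchun{i}$, $\imatchdeux{i}$ (and, for $\delta^r_r$, at all four corners) as the existential witnesses of $\phi_j$: taking these witnesses to be our own $\ell$- and reference-values makes $\phi_j$ true, so the intended triple lands in $L_i$. In particular $\delta^c_r$ at $i_3$ builds precisely the triple reference $(\ell_{i_1},\ell_{i_2},\ell_{i_3})=q_1'q_2'q_3'$, and the free $\cplii$-reference variables $\underline{r}_0,\underline{r}_2$ are absorbed by the actual innermost $\cplii$-references at $i_1-1$ and $i_2-1$, whatever nested arches occur.

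The hard part will be the return--return reconciliation, and inside it the compatibility test $\overline{r}_1\preceq\overline{r}_3$ imposed by $\phi_3$. I would discharge it by the observation that $\overline{r}_1=\overline{R}(i_1-1)$ and $\overline{r}_3=\overline{R}(i_3-1)$ both equal the $\cpli$-reference of the arch enclosing the entire wave: $i_1-1$ lies just before the wave and the region strictly between the two top arches sits at the same $\cpli$-level, so the two references coincide and the prefix test holds with equality, independently of the flavour of the enclosing arch. The rest is bookkeeping --- matching our mappings against the components $q_1'q_2'q_3'$, $\underline{r}_0$, $\underline{r}_2$ that $\delta^r_r$ manipulates, and checking that the $A$-transitions read at the four corners are exactly those required by $\phi_4$. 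Instantiating the finished induction at $i=f$ then yields $(\overline{r},\underline{r},q')\in S'$, the indirect inclusion feeding Proposition~\ref{lemme de determinisation}.
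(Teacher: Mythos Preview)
Your overall plan coincides with the paper's: define $\overline{R},\underline{R}$ from the innermost open arches (the paper formalises this via surface functions $s_{\cpli}$, $s_{\cplii}$, $s^c_c$), then prove $(\overline{R}(i),\underline{R}(i),\ell_i)\in L_i$ by strong induction on $i$, case-splitting on the position type. The base case, the values at $f$, and the internal, call--call and return--call steps are as you describe.

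There is, however, a genuine gap in your handling of the compatibility test $\overline{r}_1\preceq\overline{r}_3$ (note that it already sits in $\phi_3$, hence is needed at the call--return step, not only at return--return). Your claim that $\overline{R}(i_1-1)$ and $\overline{R}(i_3-1)$ coincide because ``the region strictly between the two top arches sits at the same $\cpli$-level'' is false: the interval $(i_2,i_3)$ need \emph{not} be wpa for~$\cpli$. The wave $(i_1,i_2,i_3,i_4)$ may be nested inside a larger wave $(j_1,j_2,j_3,j_4)$ so that $j_2,j_3\in(i_2,i_3)$; concretely, $j_1<i_1<i_2<j_2<j_3<i_3<i_4<j_4$ is consistent with both matchings being non-crossing. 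Then $s_{\cpli}(i_2)=j_1$ while $s_{\cpli}(i_3-1)=j_3$, so $\overline{R}(i_1-1)=\overline{R}(i_2)=\ell_{j_1}\in\overline{\mathcal{R}}^1$ but $\overline{R}(i_3-1)=\ell_{j_1}\ell_{j_2}\ell_{j_3}\in\overline{\mathcal{R}}^2$: a \emph{strict} prefix, not an equality. The paper establishes $\overline{R}(i_2)\preceq\overline{R}(i_3-1)$ by precisely this two-case analysis (same covering $\cpli$-arch, versus same enclosing wave but different arch). Once you replace ``equality'' by this prefix argument, the remainder of your proof goes through.
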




The direct implication will be proven by induction on $\Ientff{s}{f}$.
More precisely, we rely on the grammar allowing to describe all 
$2$-wave words given in Section~\ref{sec:preliminaries}. 
\todo{New sentence here}
We use a slight abuse of notation here,
as the grammar is based on the visibly pushdown presentation, while we work here
with $2$-nested words, but we believe the correspondence is without ambiguity.
The three first cases of the grammar (internal, empty word, and concatenation of
$2$-wave words) are easy. The last case  decomposes 
the $2$-wave word $\omega$ as
$\omega = x \omega' y$, with $(x,y)$ being a pair of "matched" words, \emph{i.e.} given by the non-terminal $\mathsf{H}$. Intuitively, the grammar gives a decomposition of the wpa interval associated with $\omega$ into a wpa interval corresponding to $\omega'$,
and a pair of matched intervals corresponding to $(x,y)$.
Hence, this requires to study pairs of matched intervals, \emph{i.e.} a pair of intervals which is without pending arch.
To this end, we introduce a notation for a "partial" run on a pair of matched intervals:
\begin{definition}
Let $(I_1= \Ientff{i_1}{j_1}, I_2= \Ientff{i_2}{j_2})$ be a pair of matched intervals w.r.t. $\omega$. Let
$q_1, q_2,q'_1,q'_2$ be four states. We let $J = I_1 \cup I_2 \cup \set{i_1-1,i_2-1}$.
We write 
$\exec{q_1, q_2}{\omega, I_1, I_2}{q'_1, q'_2}{A}$
if there exists $(\ell_i)_{i\in J}$ such that $\forall i\in J$, $\text{run}^A_i(\omega, \ell)$ and $\forall k\in [2]$, $q_k=\ell_{i_k-1}$ and $q'_k=\ell_{j_k}$
\end{definition}

We are now ready to state the following lemma:
\begin{restatable}{lemma}{direct}\label{determinisation non wpa} Let $(i_1, i_2, i_3, i_4)$ be a 2-wave such that $i_1\in\Ientff{s}{f}$ and $(\overline{r}_3, \underline{r}_0, q'_4)\in L_{i_4}$. For any $\overline{r}_1, \underline{r}_2, q_1, q'_1, q_2, q'_2, q_3, q'_3, q_4$ satisfying $\psi_4$ we have 
$\exec{q_1,q_3}{\omega,  \Ientff{i_1}{i_2}, \Ientff{i_3}{i_4}}{q'_2,q'_4}{A}$.
\end{restatable}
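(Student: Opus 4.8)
The plan is to read the four ``corner'' transitions of $A$ off the assignment satisfying $\psi_4$, to recover the runs inside the two top arches by induction, and to glue these into a single partial run over the matched pair $(\Ientff{i_1}{i_2},\Ientff{i_3}{i_4})$. I would argue by induction on the number of positions strictly between the corners, $|\Ientff{i_1+1}{i_2-1}|+|\Ientff{i_3+1}{i_4-1}|$, where the displayed statement treats a single wave wrapping a strictly smaller matched pair and the induction hypothesis is the analogous statement for that inner pair. First, note that the corners have types $i_1\in[n]^c_c$, $i_2\in[n]^r_c$, $i_3\in[n]^c_r$, $i_4\in[n]^r_r$, so that $\psi_4$ (which is $\phi_4$ instantiated with $S_1=L_{i_1-1}$, $S_2=L_{i_2-1}$, $S_3=L_{i_3-1}$, $S_4=L_{i_4-1}$ and the given $\overline{r}_3,\underline{r}_0,q'_4$) decomposes into exactly one $A$-transition per corner: $\phi_1$ yields $(q_1,a_{i_1},q'_1)\in\Delta^c_c$, $\phi_2$ yields $(q_2,q'_1,a_{i_2},q'_2)\in\Delta^r_c$, $\phi_3$ yields $(q_3,q'_2,a_{i_3},q'_3)\in\Delta^c_r$, and $\phi_4$ yields $(q_4,q'_3,q'_1,a_{i_4},q'_4)\in\Delta^r_r$.

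Because $A$ is in post form, the hierarchical state labelling an arch equals the current state reached at its call, so these four transitions are mutually compatible: the first top arch $\cpli(i_1,i_2)$ and the support arch $\cplii(i_1,i_4)$ both carry $q'_1$ (pushed at $i_1$, popped at $i_2$ and $i_4$ respectively), the bottom arch $\cplii(i_2,i_3)$ carries $q'_2$ (pushed at $i_2$, popped at $i_3$), and the second top arch $\cpli(i_3,i_4)$ carries $q'_3$ (pushed at $i_3$, popped at $i_4$) --- and these are precisely the states appearing in the $\Delta$-memberships of $\psi_4$. It then remains to fill in $x'=\omega_{|\Ientff{i_1+1}{i_2-1}}$ and $y'=\omega_{|\Ientff{i_3+1}{i_4-1}}$, which form a strictly smaller matched pair; I would obtain a partial run over it from $(q'_1,q'_3)$ to $(q_2,q_4)$ by the induction hypothesis, feeding it the closing triple $(q'_1q'_2q'_3,q'_1,q_4)\in L_{i_4-1}$ supplied by $\phi_4$ and reading the entry triples off $L_{i_1}=S'_1$ (all of the form $(q'_1,q'_1,q'_1)$) and $L_{i_3}=S'_3$. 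Concatenating the four corner steps with this interior run yields a family $(\ell_i)_{i\in J}$ with $J=\Ientff{i_1}{i_2}\cup\Ientff{i_3}{i_4}\cup\set{i_1-1,i_3-1}$ for which $\text{run}^A_i$ holds at every position, i.e. exactly the required partial run.

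The hard part will be the reconciliation of the reference states, which is the whole purpose of the enriched construction. Both top arches terminate at the same return-return position $i_4$, yet the run below the second top arch is computed independently of the one below the first, so nothing a priori forces the $\cpli$-reference fixed inside the second arch to agree with the one fixed inside the first. This is resolved by the triple-valued reference $q'_1q'_2q'_3\in\overline{\mathcal R}^2$ of the second top arch --- recording the current states produced at the first top arch, the bottom arch and the second top arch --- together with the prefix test $\overline{r}_1\preceq\overline{r}_3$ in $\phi_3$, which forces $\overline{r}_1$ to be the first component $q'_1$ of $\overline{r}_3$ and thereby transports the reference invariant from the first arch to the second. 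A lighter but similar care is needed for the $\cplii$-reference $\underline{r}_2$, which the construction must thread from the right end of $x'$, across the bottom arch, to the left end of $y'$; here the non-crossing shape of $\cplii$ inside the wave guarantees that the innermost bottom reference agrees on both sides. Verifying that these invariants are preserved through the inductive step, so that the two independently built top-arch runs glue into one globally coherent $A$-run, is the crux of the proof.
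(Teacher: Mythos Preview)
Your extraction of the four corner transitions from $\psi_4$ is correct, and the overall shape --- peel off the corners and fill in the interior --- is the right intuition. The gap is in the sentence ``I would obtain a partial run over it from $(q'_1,q'_3)$ to $(q_2,q_4)$ by the induction hypothesis, feeding it the closing triple $(q'_1q'_2q'_3,q'_1,q_4)\in L_{i_4-1}$ \ldots\ and the induction hypothesis is the analogous statement for that inner pair.'' The lemma you are proving is about a $2$-wave equipped with the formula $\psi_4$; the inner pair $(\Ientff{i_1+1}{i_2-1},\Ientff{i_3+1}{i_4-1})$ is an arbitrary matched pair, not a $2$-wave, and there is no $\psi_4$-like formula attached to it. You never say what ``the analogous statement'' is, and this is not a detail: by the grammar for $\mathsf{H}$ the inner pair may decompose as $(w_1xw'_1,w_2yw'_2)$ with four wpa sub-intervals $w_j$, so any workable statement for matched pairs must already invoke the direct implication of Proposition~\ref{lemme de determinisation} on wpa intervals. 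That proposition, in turn, invokes the present lemma in its limit case, so the two results are proved by mutual induction; your single induction on the size of the inner pair does not account for this.

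The paper makes the missing invariant explicit. It introduces a predicate $H(k,K,p,q)$ recording, for a position $k$ inside the first top arch and a matched position $K$ inside the second, that a partial run of $A$ has already been built on $\Ientof{i_1}{k}\cup\Ientfo{K}{i_4}$ together with a triple $(q'_1q'_2q'_3,p,q)\in L_{K-1}$. A technical lemma then shows that from any $H(k,K,p,q)$ one can either conclude (when $\Ientoo{i_3}{K}$ is wpa) or manufacture a strictly smaller instance $H(k',K',p',q')$ by locating the outermost inner $2$-wave ending at some $K''<K$, applying the present lemma's induction hypothesis to that inner wave, and filling the surrounding wpa gaps with $(\text{IH}_1)$. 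Iterating this step is exactly what replaces your unspecified ``analogous statement for the inner pair.'' Your final paragraph about reference reconciliation via $\overline{r}_1\preceq\overline{r}_3$ and the enriched reference $q'_1q'_2q'_3$ is correct in spirit, but it operates inside this iteration: each time an inner wave is peeled off, the prefix test forces its $\overline{r}_{\bullet 1}$ to equal $q'_1$, which is what ties the independently computed pieces back to the outer run.
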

Lemma \ref{determinisation non wpa} confirms the intuition that a 2-wave is an encapsulation in its sort: any $q_1$, $q'_2$ and $q_3$ yielded by $(\overline{r}_3, \underline{r}_0, q'_4)\in L_{i_4}$ via the construction of $L_{i_4}$ (e.g. via $\psi_4$) define with $q'_4$ a run of $A$ on $\omega$ on the subset of positions given by $\Ientff{i_1}{i_2}$ and $\Ientff{i_3}{i_4}$.


We explain how to conclude the proof of Proposition~\ref{lemme de determinisation}.
Starting from the $2$-wave word $\omega$, we obtained a decomposition
$\omega = x \omega' y$, with $(x,y)$ being a pair of "matched" words produced
by $\mathsf{H}$.
We can show in addition that extremal positions of $x$ and $y$ exactly 
correspond to a $2$-wave, in the sense of the premises
of Lemma~\ref{determinisation non wpa}. Combining it with the induction
hypothesis (direct implication of Proposition~\ref{lemme de determinisation}) applied on $\omega'$,
we can exhibit the expected run in $A$.


\begin{proof}[Proof of Theorem~\ref{determinisation 2-vagues}]
First, observe that by construction, $A'$ is deterministic and complete.
Let $\omega \in \ww_2(\Sigma)$. As $A'$ is deterministic and complete, it has a unique
run on $\omega$ starting from $q'_0$.
Let $(L_i)_{i\in\Ientff{0}{n}}$ be this run, with $L_0=q'_0$.
 We proceed by equivalence:
\begin{align*}
\omega\in L(A') &\iff L_n\in Q'_f \ \iff L_n\cap Q\times Q\times Q_f \neq \emptyset
\  \iff \exists (\overline{r}, \underline{r}, q_f)\in L_n\  q_f\in Q_f\\
           &\iff \exists q_0, q_f\in Q_f\  (\overline{r}, \underline{r}, q_0)\in q'_0 \mbox{ and } \exec{q_0}{\omega}{q_f}{A} \ \iff \omega\in L(A).\qedhere
\end{align*}
\end{proof}

\paragraph*{Closure under complementation}
It is proved in~\cite{DBLP:conf/lics/TorreMP07} that regular \tnwl are not closed under complementation, since they are not determinizable. The  definition of   \tnwl  they use is  sightly different from ours because the two matchings do not share positions, but the same negative result can be shown for our definition. 
 For example, there is no deterministic automaton recognizing the language 
 $\{(a^{n+m} b^{m}a^{n}b^{m}a^{n}, \{(i, i+2(n+m))\mid 1\leq i \leq n+m\} , \{(i, i+n+m)\mid 1\leq i \leq n+m\}) \mid n,m\geq 0 \}$. 
 However, thanks to our determinization result for regular $\ww_2$ languages, we get:  
 \begin{proposition}\label{prop:closureComplement} 
 Regular languages of $2$-wave words are closed under complementation. 
 \end{proposition}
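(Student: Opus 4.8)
The plan is to combine the determinization result of Theorem~\ref{determinisation 2-vagues} with the classical trick of exchanging accepting and non-accepting states. Let $L = L_{\ww_2}(A)$ be a regular language of $2$-wave words, for some \tnwa $A$. First I would apply Theorem~\ref{determinisation 2-vagues} to obtain a deterministic \tnwa $A'=(Q',\{q'_0\},Q'_f,P',\delta)$ such that $L_{\ww_2}(A')=L_{\ww_2}(A)=L$. As observed in the proof of Theorem~\ref{determinisation 2-vagues}, $A'$ is moreover complete, so its transition functions are total and every $2$-nested word admits a \emph{unique} run starting from $q'_0$. This is the essential use of the main theorem: one cannot complement a nondeterministic \tnwa directly, so determinizability is exactly what makes the argument go through.

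Next I would define $A''$ to be $A'$ with its accepting set complemented, namely $A''=(Q',\{q'_0\},Q'\setminus Q'_f,P',\delta)$, keeping the same (deterministic, complete) transition function $\delta$ and the same initial state. Since $A'$ and $A''$ have identical runs on every input, and $A'$ is deterministic and complete, any $2$-nested word $\omega$ has a single run from $q'_0$; this run ends in $Q'_f$ exactly when $\omega\in L(A')$, hence it ends outside $Q'_f$ exactly when $\omega\in L(A'')$. This yields the identity $L(A'')=\tnw(\Sigma)\setminus L(A')$ over the whole set of $2$-nested words.

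Finally I would restrict attention to $2$-wave words by intersecting the previous identity with $\ww_2(\Sigma)$:
$$L_{\ww_2}(A'') \;=\; \ww_2(\Sigma)\setminus L_{\ww_2}(A') \;=\; \ww_2(\Sigma)\setminus L.$$
Thus the complement of $L$ relative to the class $\ww_2(\Sigma)$ is recognized by the \tnwa $A''$ on $2$-wave words, which establishes the closure under complementation claimed in Proposition~\ref{prop:closureComplement}.

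The step requiring the most care — rather than a genuine obstacle — is that complementation here is taken \emph{within} $\ww_2(\Sigma)$ and not within all of $\tnw(\Sigma)$. Indeed, the determinized automaton $A'$ may accept $2$-nested words that are not $2$-waves, and its behaviour on those inputs is uncontrolled. This causes no trouble precisely because the operator $L_{\ww_2}(\cdot)$ already intersects with $\ww_2(\Sigma)$: the stray behaviour of $A'$ and $A''$ outside $\ww_2(\Sigma)$ is discarded symmetrically on both sides of the displayed identity, so only the equality $L_{\ww_2}(A')=L$ guaranteed by Theorem~\ref{determinisation 2-vagues} is needed.
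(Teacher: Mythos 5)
Your proof is correct and follows exactly the route the paper intends: the paper states the proposition as an immediate consequence of Theorem~\ref{determinisation 2-vagues}, relying on the fact that the determinized automaton is complete so that swapping accepting states complements the language, and that the relativization to $\ww_2(\Sigma)$ is harmless since $L_{\ww_2}(\cdot)$ intersects with $\ww_2(\Sigma)$ on both sides. Your explicit handling of the distinction between complementation within $\tnw(\Sigma)$ and within $\ww_2(\Sigma)$ is a welcome clarification of a point the paper leaves implicit.
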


\section{Logical characterization}
\label{sec:applications}



We show that languages  of $2$-wave words definable in monadic second order logic (MSO) are exactly regular languages of $2$-wave words. Let us  fix $\mathcal V= \mathcal{V}_1 \cup \mathcal{V}_2$, $\mathcal{V}_1$ is the set of first order variables (whose elements will be written using lower-cases) and $\mathcal{V}_2$ is the set of second order variables (whose elements will be written using upper-cases). 

The monadic second-order logic of 2-nested words ($\text{MSO}(\Sigma, <,M_1,M_2)$) is given by  
$$\phi := Q_a(x)\ |\ x < y\ |\ X(x) \ |\ M_i(x, y)\ |\ \non\phi\ |\ \phi \et \phi\ |\ \ilx x\ \phi\ |\ \ilx X\ \phi$$
where 
$a \in\Sigma$, $x, y\in\mathcal{V}_1$, $X \in\mathcal{V}_2$, $i\in \set{1, 2}$.

The semantics is defined over 2-nested words in a natural way. The first-order
variables are interpreted over positions of the nested word, while set variables are interpreted over sets of positions; $Q_a(x)$ holds if the symbol at the position interpreted for $x$ is a, $x < y$  holds if the position interpreted for $x$ is lesser that  the position interpreted for $y$, and $M_i(x,y)$ holds if the positions interpreted for $x$ and $y$ are related by a nesting edge
of matching $M_i$. 

Let us define some fragments of $\text{MSO}(\Sigma, <,M_1,M_2)$. The set $\text{FO}(\Sigma, <,M_1,M_2)$ of first order (FO) formulas  is the set of all  formulas in  $\text{MSO}(<,M_1,M_2)$ that do not contain any second-order quantifier.  Furthermore, the set $\text{EMSO}(\Sigma, <,M_1,M_2)$ of existential MSO (EMSO) formulas consists  of all formulas of the form $\exists X_1 \dots \exists  X_n \phi$, with $\phi \in \text{FO}(\Sigma,<,M_1,M_2)$. 
 If $\mathcal{L}$ is a logic (MSO, EMSO or FO),  and  $\phi$ is a closed formula in $\mathcal{L}(\Sigma, <,M_1,M_2)$, the \emph{language defined by} $\phi$, denoted $L(\phi)$ is the set of all 2-nested words that are a model for $\phi$, and $L_{\ww_2}(\phi)$ is the set of all $2$-wave words
  that are a model for $\phi$, that is,  $L_{\ww_2}(\phi)=L(\phi)\cap \ww_2(\Sigma).$

In~\cite{DBLP:journals/lmcs/Bollig08}, Bollig shows the equivalence between automata and EMSO for the whole
class of $2$-nested words. However, he shows that quantifier alternation yields an infinite hierarchy.
In our setting, this hierarchy collapses, and we obtain the equivalence between MSO and EMSO.
The following characterization extends that given in \cite{DBLP:journals/jacm/AlurM09} for regular nested word languages. Its proof follows classical lines, and relies on Propositions~\ref{proposition:closure} and~\ref{prop:closureComplement}.\todo{j'ai viré le sketch de preuve pour gagner de la place}

\begin{theorem} \label{MSO-waves} A language of $2$-wave words is definable in EMSO iff it is definable in MSO iff it is regular. 
 \end{theorem}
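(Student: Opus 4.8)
The plan is to establish the three-way equivalence by closing the cycle
$$\text{Regular} \subseteq \text{EMSO} \subseteq \text{MSO} \subseteq \text{Regular},$$
where the middle inclusion is immediate, since EMSO is a syntactic fragment of MSO. It thus suffices to prove that every regular language of $2$-wave words is EMSO-definable, and that every MSO-definable language of $2$-wave words is regular. The whole argument is carried out relative to the ambient class $\ww_2(\Sigma)$: throughout, the language of a formula means $L_{\ww_2}(\phi)$ and the language of an automaton means $L_{\ww_2}(A)$, and complementation is understood within $\ww_2(\Sigma)$.

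For the inclusion Regular $\subseteq$ EMSO, I would start from a \tnwa in post normal form (Lemma~\ref{lm:post}), for which a run is entirely determined by its sequence of linear states. The EMSO sentence existentially quantifies a family of second-order variables $(X_q)_{q\in Q}$ encoding this sequence (position $i$ lies in $X_q$ iff $\ell_i=q$), together with a guess of the initial state. Its first-order kernel then asserts that $(X_q)_{q}$ partitions the positions, that the first and last positions carry states in $Q_0$ and $Q_f$, and that at every position the relevant transition of $\Delta^x_y$ is respected; the type $(x,y)$ of a position and the matching call of a return arch are recovered using the predicates $M_1(x,y)$ and $M_2(x,y)$. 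This yields an EMSO sentence whose $2$-wave models are exactly the words accepted by $A$.

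For the inclusion MSO $\subseteq$ Regular, I would proceed by structural induction on the formula, following the classical B\"uchi--Elgot encoding: a formula with free variables is interpreted over $2$-wave words on $\Sigma$ enriched with one bit per free variable recording the membership of each position in that variable's valuation. The atomic formulas $Q_a(x)$, $x<y$, $X(x)$ and $M_i(x,y)$ each define a regular language of $2$-wave words, the last because a \tnwa can directly inspect the arches of $M_1$ and $M_2$. In the inductive step, conjunction is handled by closure under intersection, and existential quantification (first- and second-order) by closure under projection, \emph{i.e.} image by a non-erasing alphabetic morphism forgetting the corresponding bit, both provided by Proposition~\ref{proposition:closure}; negation is handled by closure under complementation within $\ww_2(\Sigma)$, provided by Proposition~\ref{prop:closureComplement}. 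Applied to a closed formula, this produces a regular language of $2$-wave words equal to $L_{\ww_2}(\phi)$.

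The main obstacle, and the place where the restriction to $2$-wave words is indispensable, is the negation step. On general $2$-nested words, regular languages are not closed under complement and the MSO quantifier-alternation hierarchy is strictly infinite~\cite{DBLP:journals/lmcs/Bollig08}; the induction above would then break down at the first negation occurring below a second-order quantifier. What makes the hierarchy collapse here is precisely Proposition~\ref{prop:closureComplement}, itself a consequence of the determinization result (Theorem~\ref{determinisation 2-vagues}): complementation within $\ww_2(\Sigma)$ remains regular, so every inductive step stays inside the class of regular $2$-wave word languages, and the two inclusions close the cycle, yielding the claimed equivalence between MSO, EMSO and regularity.
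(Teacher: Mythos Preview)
Your proposal is correct and follows essentially the same approach as the paper: the MSO $\subseteq$ Regular direction is handled by the classical B\"uchi--Elgot structural induction using the closure properties of Propositions~\ref{proposition:closure} and~\ref{prop:closureComplement} (the paper streamlines this slightly by passing to MSO$_0$, replacing first-order variables by singleton set variables, but this is only a presentational convenience), and the Regular $\subseteq$ EMSO direction is the standard run-encoding argument, which the paper leaves implicit as it already holds for arbitrary $2$-nested words by~\cite{DBLP:journals/lmcs/Bollig08}. Your emphasis on Proposition~\ref{prop:closureComplement} as the crux of the negation step, and hence of the collapse of the hierarchy, is exactly the point.
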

\section{Decision problems}
\label{sec:decision}



The analysis we did so far of \tnwa over $2$-wave words
allows to establish the following decidability results:

\begin{theorem}
Let $A$ and $B$ be two \tnwa over $\Sigma$. The following holds:
\begin{itemize}
\item Determining whether $L_{\ww_2}(A)=\emptyset$ can be decided in polynomial time.
\item Determining whether $L_{\ww_2}(A)=\ww_2(\Sigma)$ can be decided in exponential time.
\item Determining whether $L_{\ww_2}(A) \subseteq L_{\ww_2}(B)$ 
(resp. $L_{\ww_2}(A) = L_{\ww_2}(B)$) can both 
be decided in exponential time.
\end{itemize}
\end{theorem}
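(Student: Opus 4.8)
The plan is to establish each decidability result by reducing it to properties of the deterministic automaton obtained via Theorem~\ref{determinisation 2-vagues}, exploiting the bounded-treewidth structure of $2$-wave words where needed. Let me outline the three items in the order I would treat them, from easiest to hardest.

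\medskip

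\noindent\textbf{Emptiness.} The first step is to observe that $L_{\ww_2}(A)=\emptyset$ iff there is no accepting run of $A$ on any $2$-wave word. I would reduce this to a reachability-style question on the underlying transition structure. Concretely, using the grammar for $2$-wave words from Lemma~\ref{lemma:twowaves}, I would compute, by a least-fixpoint saturation, the set of pairs $(q,q')$ such that $\exec{q}{\omega}{q'}{A}$ for \emph{some} $2$-wave word $\omega$, together with the set of ``matched-interval'' pairs $(q_1,q_2,q_1',q_2')$ realizable through the non-terminal $\mathsf{H}$. Each grammar rule ($\epsilon$, internal, concatenation, and the wave rule combining a top arch, a bottom arch and the recursive inner part) induces a monotone step on these sets. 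Since $|Q|$ is fixed, there are only polynomially many such tuples, and the saturation reaches its fixpoint in polynomially many iterations, each doable in polynomial time. Then $L_{\ww_2}(A)\neq\emptyset$ iff some reachable pair $(q_0,q_f)$ with $q_0\in Q_0$ and $q_f\in Q_f$ is produced; this gives the polynomial-time bound. (One may run this directly on $A$ without determinizing, which is what keeps the complexity polynomial.)

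\medskip

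\noindent\textbf{Universality.} For universality I would first determinize: by Theorem~\ref{determinisation 2-vagues}, build the deterministic $A'$ with $L_{\ww_2}(A')=L_{\ww_2}(A)$, then complement $A'$ using Proposition~\ref{prop:closureComplement} to obtain $A''$ with $L_{\ww_2}(A'')=\ww_2(\Sigma)\setminus L_{\ww_2}(A)$. Now $L_{\ww_2}(A)=\ww_2(\Sigma)$ iff $L_{\ww_2}(A'')=\emptyset$, which is decided by the emptiness procedure above. The determinization incurs an exponential blow-up in the number of states (the state set $Q'$ is a powerset of triples of states), while complementation and the subsequent emptiness test are polynomial in the size of $A''$; hence the overall exponential-time bound.

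\medskip

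\noindent\textbf{Inclusion and equivalence.} For inclusion $L_{\ww_2}(A)\subseteq L_{\ww_2}(B)$ I would use the identity $L_{\ww_2}(A)\subseteq L_{\ww_2}(B)$ iff $L_{\ww_2}(A)\cap\big(\ww_2(\Sigma)\setminus L_{\ww_2}(B)\big)=\emptyset$. I would determinize $B$ (exponential blow-up) and complement it to get the complement automaton for $B$, then form the product with $A$ using closure under intersection (Proposition~\ref{proposition:closure}), and finally apply the polynomial emptiness test to the product. The only exponential cost is the determinization of $B$; the product with $A$ and the emptiness check are polynomial in the resulting sizes, giving the claimed exponential-time bound. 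Equivalence $L_{\ww_2}(A)=L_{\ww_2}(B)$ is then decided as the conjunction of the two inclusions $L_{\ww_2}(A)\subseteq L_{\ww_2}(B)$ and $L_{\ww_2}(B)\subseteq L_{\ww_2}(A)$, still in exponential time.

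\medskip

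\noindent The step I expect to require the most care is the emptiness procedure, since it is the primitive on which the other two items rest and it is the one that must genuinely run in polynomial time: one must argue that the grammar-driven saturation over the fixed-size tuples of states is both sound and complete for reachability of accepting runs over $2$-wave words, and that it terminates after polynomially many monotone steps. The alternative of invoking MSO-decidability via bounded treewidth (Theorem~\ref{MSO-waves} together with the treewidth results) would yield decidability but not the sharp complexity bounds, so the direct fixpoint argument is what delivers the stated polynomial and exponential times.
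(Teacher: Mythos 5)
Your proposal matches the paper's proof: the emptiness test is exactly the paper's grammar-driven saturation of a set $W$ of state pairs (for the non-terminal $\mathsf{W}$) and a set $H$ of state quadruples (for $\mathsf{H}$), run directly on $A$ in polynomial time, and the universality, inclusion and equivalence tests are obtained, as in the paper, by combining determinization (exponential), complementation and intersection with the emptiness procedure.
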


\begin{proof}[Proof sketch]
The first result follows from 
the grammar
presented in Section~\ref{sec:preliminaries} that produces all $2$-wave words.
More precisely, we maintain a set $W$ of pairs of states (for non-terminal \textsf{W})
and a set $H$ of quadruplets of states (for non-terminal \textsf{H}). Intuitively, 
$(p,q)\in W$ means that there exists a $2$-wave word $\omega$
and a run $\exec{p}{\omega}{q}{A}$. We initialize them with identity relations and saturate them
using the rules of the grammar, and corresponding transitions of $A$.
The other results are direct consequences from closure under complement
using the determinization procedure, with the observation
that the latter has exponential complexity.
\end{proof}

\paragraph*{On the treewidth of $2$-wave words}

Any $\tnw$ $\omega = (w,M_1,M_2)$ of length $n$ can be seen as a graph whose set of vertices is $[n]$ and set of edges is $M_1\cup M_2 \cup \{(i,i+1)\}_{i\in [n-1]}$, then  an alternative approach to decidability is by using Courcelle's Theorem
on graphs of bounded treewidth. Indeed, using the grammar presented in
Section~\ref{sec:preliminaries} for $2$-wave words 
(Lemma~\ref{lemma:twowaves}), we show:
\begin{restatable}{lemma}{lemmatw}
For all $\omega \in \ww_2(\Sigma)$, 
 $\omega$ has treewidth at most $11$.
\label{lm:tw}
\end{restatable}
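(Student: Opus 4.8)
The plan is to bound the treewidth by exhibiting a tree decomposition built directly from the grammar derivation for $2$-wave words provided by Lemma~\ref{lemma:twowaves}. The grammar has two non-terminals: $\mathsf{W}$, which produces a single factor of the word corresponding to a wpa interval, and $\mathsf{H}$, which produces a pair of matched factors $(x,y)$ corresponding to a pair of matched intervals. The key observation is that each production rule only ever glues together constant-sized pieces at their \emph{endpoints}, and the matching edges created by a rule always connect boundary positions of the intervals being combined. The idea is therefore to associate with each node of the derivation tree a bag containing the (constantly many) boundary positions of the word-factors governed by that non-terminal occurrence, together with the positions needed to realize the edges created at that node.

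First I would set up the derivation tree $T$ of $\tilde\omega$ according to the grammar and design the bags. For a $\mathsf{W}$-node producing a factor on interval $I=\Ientff{a}{b}$, the bag should contain the endpoints $a-1,a,b$ (and the predecessor $a-1$ so that the linear edge $(a-1,a)$ entering the factor can be covered). For an $\mathsf{H}$-node producing a pair $(x,y)$ on intervals $I_1=\Ientff{a_1}{b_1}$, $I_2=\Ientff{a_2}{b_2}$, the bag should contain the four endpoints together with their predecessors, i.e. $\{a_1-1,a_1,b_1,a_2-1,a_2,b_2\}$. Then for each rule I would verify the two defining conditions of a tree decomposition: (i) every vertex and every edge of the graph appears in some bag, and (ii) for each vertex the bags containing it form a connected subtree. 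Edge coverage is the crux: the rule $(axb,cyd)$ creates the four arches of a $2$-wave, namely $\cpli(a_1,b_1)$, $\cpli(a_2,b_2)$, $\cplii(a_1,b_2)$, $\cplii(a_2,a_1)$ in terms of the newly introduced boundary symbols — all of which involve only the six boundary positions, so placing them in a single augmented bag at that node covers them. Linear edges $(i,i+1)$ are covered either inside a factor (handled when the factor is ultimately produced by an internal or singleton rule) or at the concatenation points, which are again boundary positions.

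The bookkeeping step is to check connectivity and to pin down the exact constant. Connectivity holds because a position, once introduced as an interior endpoint at some node, is passed up to the parent only as long as it remains a boundary of the enclosing interval, and as soon as it becomes interior it is never reintroduced higher up; one must check this monotonicity rule-by-rule, which is routine but must be done carefully for the $\mathsf{H}$-rules that nest or concatenate pairs. The maximum bag size determines the width: an $\mathsf{H}$-node carries up to six endpoint positions, but to cover the four arches created by the wave rule together with the intervening linear edges at the gluing points one needs to momentarily hold a few additional adjacent positions, pushing the largest bag to $12$ vertices, hence treewidth at most $11$. I would make the counting explicit for the rule $(w_1 x w_1', w_2 y w_2')$ and for $(axb,cyd)$, since these are the rules producing the most simultaneous boundary and edge constraints.

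The main obstacle I anticipate is the $\mathsf{H}$ non-terminal and in particular verifying the connectivity condition across the middle rule $(w_1 x w_1', w_2 y w_2')$, where a pair of intervals is assembled from an inner $\mathsf{H}$-pair $(x,y)$ flanked by four $\mathsf{W}$-factors $w_1,w_1',w_2,w_2'$. Here several independent endpoints must be threaded consistently, and one has to make sure that no position appears in two disjoint parts of $T$; this forces a precise choice of which endpoints each bag retains versus forgets. Once the bag-passing discipline for this rule is fixed, the remaining rules are strictly simpler, and the global decomposition follows by gluing the per-node bags along $T$; the width bound of $11$ is then read off as the worst case over all rules.
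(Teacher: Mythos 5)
Your proposal is correct and follows essentially the same route as the paper: a structural induction over the grammar of Lemma~\ref{lemma:twowaves}, with bags consisting of the boundary positions of the wpa interval (for $\mathsf{W}$) or of the pair of matched intervals (for $\mathsf{H}$), the arches of the rule $(axb,cyd)$ covered by a single bag of endpoints, and the worst case arising at the rule $(w_1xw_1',w_2yw_2')$ where the union of all sub-interval bounds reaches $12$ positions, giving width $11$. The only cosmetic difference is your choice to carry predecessor positions $a-1$ in each bag to cover incoming linear edges, where the paper instead covers the linear edge at a concatenation point by putting both of its endpoints in the parent's auxiliary bag; the counting is unaffected.
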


In addition, it is easy to verify that the class of $2$-wave words can be expressed in MSO.
As a consequence, we obtain using~\cite{Courcelle97,Seese91}:
\begin{proposition}
Given a formula $\phi\in\text{MSO}(<,M_1,M_2)$, checking whether there exists 
$\omega \in \ww_2(\Sigma)$ that satisfies $\phi$ is decidable.
\end{proposition}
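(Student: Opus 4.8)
The plan is to reduce the satisfiability question to MSO satisfiability over a class of relational structures of bounded treewidth, and then to invoke the classical meta-theorem of~\cite{Courcelle97,Seese91}. First I would fix the encoding of a $2$-nested word as a relational structure: a word $\omega=(w,\cpli,\cplii)$ of length $n$ is viewed as a finite structure with universe $[n]$, unary predicates $(Q_a)_{a\in\Sigma}$ recording the letters of $w$, and binary predicates for the successor relation $\{(i,i+1)\}_{i\in[n-1]}$ and for the two matchings $\cpli,\cplii$. Under this encoding every atom of $\text{MSO}(\Sigma,<,M_1,M_2)$ becomes MSO-definable over the structure: the predicates $M_i$ map directly to the matching relations, the labels $Q_a$ to the unary predicates, and the order $<$ is definable as the irreflexive transitive closure of successor. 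Hence any closed $\phi\in\text{MSO}(\Sigma,<,M_1,M_2)$ translates into an equivalent MSO sentence $\widehat\phi$ over these structures.

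Next I would combine the two ingredients already established. On the one hand, as noted just before the statement, the class $\ww_2(\Sigma)$ is MSO-definable, so there is a closed MSO sentence $\psi_{\ww_2}$ over the above structures whose models are exactly the encodings of $2$-wave words; intuitively $\psi_{\ww_2}$ asserts that each $M_i$ is a non-crossing matching compatible with the order and that the $2$-wave condition holds for every arch. On the other hand, Lemma~\ref{lm:tw} guarantees that the Gaifman graph of any such structure, which is precisely the graph $([n],M_1\cup M_2\cup\{(i,i+1)\})$, has treewidth at most $11$. Consequently, there exists $\omega\in\ww_2(\Sigma)$ satisfying $\phi$ if and only if the sentence $\widehat\phi\wedge\psi_{\ww_2}$ has a finite model of treewidth at most $11$: indeed, every model of $\psi_{\ww_2}$ is a $2$-wave word and hence already has treewidth at most $11$ by Lemma~\ref{lm:tw}, so imposing the treewidth bound discards none of the relevant models.

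Finally I would conclude by the decidability of MSO satisfiability over structures of treewidth at most a fixed $k$~\cite{Courcelle97,Seese91}. For $k=11$, such structures are represented by their width-$k$ tree decompositions, which are finite labelled trees; Courcelle's interpretation rewrites $\widehat\phi\wedge\psi_{\ww_2}$ into an MSO sentence over these trees, which is in turn recognized by a finite tree automaton, and deciding whether $\widehat\phi\wedge\psi_{\ww_2}$ admits a model of treewidth at most $11$ amounts to testing non-emptiness of that automaton, which is decidable. This yields decidability of the original problem. The only delicate points are the faithful MSO-translation of $<$ and of the $2$-wave constraint into the chosen encoding, and, more importantly, the correctness of the bounded-treewidth meta-theorem itself; the latter is exactly the content of~\cite{Courcelle97,Seese91} and is the real engine of the argument, so I would invoke it as a black box rather than reprove it.
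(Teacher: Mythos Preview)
Your proposal is correct and follows exactly the same route as the paper: combine the MSO-definability of the class of $2$-wave words with the treewidth bound of Lemma~\ref{lm:tw}, and then invoke the decidability result of~\cite{Courcelle97,Seese91} for MSO over structures of bounded treewidth. The paper merely states ``As a consequence, we obtain using~\cite{Courcelle97,Seese91}'' without further detail, so your write-up is in fact more explicit about the encoding and the reduction than the paper itself.
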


\section{Relation to indexed languages}
\label{sec:discussion}

Let $\mathcal{L}$ be a logic (FO, EMO, or MSO),  we  denote by   $\exists \text{Match} \mathcal{L}(\Sigma,<, M)$ the class of all word languages $L=\{ u\in\Sigma^* \mid  \exists M  (u,M) \in L_M\}$ such that $L_M$ is definable in  $\mathcal{L}(\Sigma,<,M)$.
Similarly, we   denote by   $\exists \ww_2 \mathcal{L}(\Sigma,<, M_1,M_2)$ the class of languages  $L=\{ u\in\Sigma^* \mid  \exists  (u,M_1,M_2) \in L_{\ww_2}(\phi)\}$ for $\phi \in  \mathcal{L}(\Sigma,<, M_1,M_2)$.

It is proved in   \cite{DBLP:conf/csl/LautemannST94} that  $\text{CFL}=\exists \text{Match FO}(\Sigma,<,M)=\exists \text{Match  MSO}(\Sigma,<,M)$. A key of the proof is the fact that CF grammars can be put in Greibach double normal form. Then each production has the form $X\rightarrow a u b$, and the derivation tree of a word can be encoded by matching letters delimiting each production. 

Consider now indexed grammars that generate Indexed Languages (\il). They are CF grammars where each non-terminal symbol carries a pushdown stack (often denoted $X^\omega$ where $X$ is the non terminal and $\omega$ the stack). Grammars contain \emph{push productions} of the form $X\rightarrow Y^{p}$ saying that any $X^{\omega}$ can be rewritten $Y^{p\omega}$;  \emph{pop productions} of the form $X^p\rightarrow Y$ saying that any $X^{p\omega}$ can be rewritten $Y^{\omega}$ and  \emph{copy productions} of the form $X \rightarrow YZ$  saying that any $X^{\omega}$ can be rewritten $Y^{\omega}Z^{\omega}$. 

The derivation tree of a word can be encoded  using two matching relations: one  delimiting  productions (corresponding to $M_2$), and one for the stack moves   (corresponding to $M_1$). The nested structure thus obtained is  a wave structure,  where each wave follows a pushed symbol amongst the different copies.  In particular, a 2-wave corresponds to a symbol which has been popped but  has never been copied. Indexed grammars  which never process copies are called \emph{linear indexed grammars}  and generate \emph{linear indexed languages} (\lil).  In  such grammars, copy productions have the form $X \rightarrow X^\bullet Y$  or $X \rightarrow XY^\bullet$ where $\bullet$ marks the symbol on which the stack is transmitted.  

The next Proposition establishes a formal connection between
regular languages of $2$-wave words and linear indexed languages: 
\begin{proposition} \label{prop:wavesToLIL}
 Languages in   $\exists \ww_2 \text{MSO}(\Sigma, <, M_1,M_2)$  are linear indexed languages. 
\end{proposition}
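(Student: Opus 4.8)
The plan is to convert a given formula $\phi \in \text{MSO}(\Sigma, <, M_1, M_2)$ defining a language $L_{\ww_2}(\phi)$ of $2$-wave words into a linear indexed grammar generating the word projection $L = \{u \in \Sigma^* \mid \exists (u,M_1,M_2) \in L_{\ww_2}(\phi)\}$. By Theorem~\ref{MSO-waves}, $L_{\ww_2}(\phi)$ is a regular language of $2$-wave words, so I may instead start from a \tnwa $A$ with $L_{\ww_2}(A) = L_{\ww_2}(\phi)$; by Lemma~\ref{lm:post} I take $A$ in post normal form, so runs are determined by linear states and transitions have the simplified shape $\trans{x}{y} \subseteq Q \times Q^{\mathsf{in}_{x,y}} \times \Sigma \times Q$. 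The goal is then to build a linear indexed grammar whose derivations mimic accepting runs of $A$ on $2$-wave words, while existentially guessing the two matchings as the derivation proceeds.

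First I would exploit the grammar of Section~\ref{sec:preliminaries} (Lemma~\ref{lemma:twowaves}), which already gives an inductive, MCFG-style description of $2$-wave words via the non-terminals $\mathsf{W}$ (single words) and $\mathsf{H}$ (pairs of words arising from a pair of matched intervals). The non-terminal $\mathsf{H}$, producing a pair $(x,y)$, is exactly the linear indexed mechanism in disguise: the support arch and the two top arches of a $2$-wave transmit stack information along the distinguished $\bullet$-branch, so the rule $(axb, cyd)$ with $(a,b,c,d) \in \Sigma^c_c \times \Sigma^r_c \times \Sigma^c_r \times \Sigma^r_r$ corresponds to pushing onto the stack the state/symbol information carried by the first top arch and popping it at the matching return-return position. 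Concretely, I would annotate each non-terminal of the MCFG with the relevant linear states of $A$ (a pair of states for $\mathsf{W}$, a quadruple for $\mathsf{H}$, as in the emptiness-test sketch of Section~\ref{sec:decision}), and translate each grammar rule into a combination of push, pop, and copy/linear-copy productions of a linear indexed grammar. The single matching $M_2$ (delimiting productions) and $M_1$ (tracking stack moves) are precisely what the two Dyck structures in the homomorphic characterization of \lil{} encode, matching the discussion opening this section: a $2$-wave is a popped-but-never-copied symbol, which is exactly a linear (non-branching) use of the stack.

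The key steps, in order, are: (i) reduce to a \tnwa $A$ in post form via Theorem~\ref{MSO-waves} and Lemma~\ref{lm:post}; (ii) take the state-annotated version of the $\mathsf{W}/\mathsf{H}$ grammar, so that derivable tuples carry run information of $A$, with correctness guaranteed by Lemma~\ref{lemma:twowaves} together with the run semantics of Section~\ref{sec:automata}; (iii) realize the $\mathsf{H}$-productions as linear indexed productions, using the stack to carry the hierarchical state/letter $a$ of a call-call position down to its matching return-return position $d$, with the intermediate return-call and call-return letters ($b$ and $c$) read off along the way — this is where the $2$-wave structure limits stack transmission to a single branch, ensuring linearity; (iv) forget the matchings in the yield, keeping only the $\Sigma$-projection of the symbols, which performs the required $\exists M_1 M_2$ projection.

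The main obstacle I expect is step (iii): faithfully encoding the interaction of the two matchings so that a \emph{single} stack suffices. In a generic $2$-nested word one would need two independent stacks, but the defining property of $2$-waves — that the support arch $M_2(i_1,i_4)$ and the second top arch $M_1(i_3,i_4)$ close at the same return-return position, while $M_2(i_2,i_3)$ links the two top arches — is exactly what lets the two nestings be serialized onto one indexed stack, with the $\mathsf{H}$ non-terminal's pair-of-words structure providing the two synchronized branches. Making this serialization precise, and checking that the compatibility condition (the $\overline{r}_1 \preceq \overline{r}_3$ prefix check appearing in $\phi_3$ of the determinization construction) is automatically enforced by the stack discipline rather than requiring a second stack, is the delicate part. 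Once the annotated grammar is shown to generate exactly the accepting runs of $A$ over $2$-wave words, projecting to $\Sigma^*$ yields $L$, establishing that it is a linear indexed language.
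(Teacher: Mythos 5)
Your route is genuinely different from the paper's, and it has a gap at its central step. The paper does not build a grammar at all: after reducing to a regular language of $2$-wave words via Theorem~\ref{MSO-waves}, it invokes Weir's homomorphic characterization of \lil, namely $L = h(R \cap \dyck_{\Gamma} \cap g^{-1}(\dyck_{\Gamma}))$ for a specific ``indexed'' morphism $g$ (suitably adapted to allow erased letters). It first normalizes the automaton into a \emph{nice} form in which the four hierarchical states of any $2$-wave coincide, then encodes each transition as a short bracket word over an alphabet built from $P$ and $Q \cup \Sigma$, so that the chaining of linear states is enforced by one Dyck condition, the two matchings are recovered as the two Dyck structures, and the word projection is realized by a letter-erasing morphism. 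Your plan instead decorates the $\mathsf{W}/\mathsf{H}$ grammar of Lemma~\ref{lemma:twowaves} with run information and converts the result into a linear indexed grammar. The decoration itself (pairs of states on $\mathsf{W}$, quadruples on $\mathsf{H}$) is sound --- it is essentially the emptiness argument of Section~\ref{sec:decision} --- and the two strategies are both legitimate ways to attack the statement.

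The gap is your step (iii). The decorated grammar is a \emph{pair-generating} device in the style of head grammars / well-nested $2$-MCFGs: the non-terminal $\mathsf{H}$ derives two separated factors, composed by the wrapping rule $(x_1x_2, y_2y_1)$ and padded by $(w_1 x w_1', w_2 y w_2')$. Turning such a device into a \emph{single-stack, single-yield} linear indexed grammar is precisely the content of the weak-equivalence theorem of Vijay-Shanker and Weir relating head grammars, tree-adjoining grammars and \lil; it does not fall out of the $2$-wave geometry for free, and your sketch never actually exhibits the push, pop and copy productions nor proves that they generate the correct yields. To close the gap you must either carry out that conversion explicitly (routing the stack along the spine from the production introducing the call-call/return-return pair of a $2$-wave to the production closing its bottom arch, through all intermediate wrappings), or cite the equivalence and check that your decorated grammar is a well-nested $2$-MCFG. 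Finally, your stated ``main obstacle'' --- that the prefix check $\overline{r}_1 \preceq \overline{r}_3$ from $\phi_3$ must be enforced by the stack discipline --- is a red herring: that check belongs to the \emph{determinization} construction, where distinct arches must be reconciled a posteriori, and plays no role in a nondeterministic grammar simulation, which simply guesses a run of $A$ and verifies it locally at each position.
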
 
However, we don't know if the equality holds, mainly because we don't know if linear indexed languages can be put in a  Greibach double normal like form.  

We can also ask whether the logical characterization of CFL extends to indexed languages\todo{à partir d'ici ajout: à relire}. The appropriate structure seems to be $\tnw$ whose edges form \emph{waves} of unbounded length. 
Intuitively, a $k$-wave generalizes a $2$-wave as follows: it consists of $2k$ indices in increasing order,
with $k$ top arches, $k-1$ bottom arches, and an additional support arch. An example of a $4$-wave is depicted
on Figure~\ref{fig:waveintro}. This yields the notion of $k$-wave word ($\ww_k$), and that
of wave word ($\ww$), which is simply the union of $\ww_k$ over $k$. As an example, the $2$-nested word
depicted on Figure~\ref{fig:cyclicw} is in $\ww_4$.
So, the question is:  does  $\il=\exists \ww \text{FO}(\Sigma,<,M_1,M_2)=\exists \ww \text{MSO}(\Sigma,<,M_1,M_2)$  hold? 
 Proposition \ref{EMOonINW}  provides an element of response,  and we think that $\il \neq \exists \ww \text{FO}(\Sigma,<,M_1,M_2)$. 

\begin{figure}[h]
\scalebox{0.8}{
\begin{tikzpicture}
\tikzset{node distance=0.7cm,}
\node (d) {$\#$};
\node[right of = d] (1) {${a_1}$};
\node[right of = 1] (2) {$a_2$};
\node[right of = 2] (3) {$a_3$};
\node[right of = 3] (4) {$a_4$};
\node[right of = 4] (d1) {$\#$};
\node[right of = d1] (11) {$a_4$};
\node[right of = 11] (12) {${a_1}$};
\node[right of = 12] (13) {$a_2$};
\node[right of = 13] (14) {$a_3$};
\node[right of = 14] (d2) {$\#$};
\node[right of = d2] (21) {$a_3$};
\node[right of = 21] (22) {$a_4$};
\node[right of = 22] (23) {${a_1}$};
\node[right of = 23] (24) {$a_2$};
\node[right of = 24] (d3) {$\#$};
\node[right of = d3] (31) {$a_2$};
\node[right of = 31] (32) {$a_3$};
\node[right of = 32] (33) {$a_4$};
\node[right of = 33] (34) {$a_1$};
\node[right of = 34] (d4) {$\#$};
\draw[-] (1.north) edge[bend left =30] (4.north);
\draw[-] (2.north) edge[bend left =30] (3.north);

\draw[-] (11.north) edge[bend left =30] (14.north);
\draw[-] (12.north) edge[bend left =30] (13.north);

\draw[-] (21.north) edge[bend left =30] (24.north);
\draw[-] (22.north) edge[bend left =30] (23.north);

\draw[-] (31.north) edge[bend left =30] (34.north);
\draw[-] (32.north) edge[bend left =30] (33.north);


\draw[-] (1.south) edge[bend right =10] (34.south);
\draw[-] (2.south) edge[bend right =10] (33.south);
\draw[-] (3.south) edge[bend right =15] (12.south);
\draw[-] (4.south) edge[bend right =15] (11.south);
\draw[-] (13.south) edge[bend right =30] (22.south);
\draw[-] (14.south) edge[bend right =30] (21.south);
\draw[-] (23.south) edge[bend right =15] (32.south);
\draw[-] (24.south) edge[bend right =15] (31.south);

\end{tikzpicture}}
\caption{Example of a word in $L$ and its associated wave structure.}
\label{fig:cyclicw}
\end{figure}
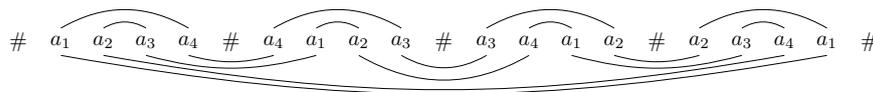

\begin{restatable}{proposition}{EMOonINW}
\label{EMOonINW}
 There exists a language which is not an indexed language but being $\exists \ww \text{EMSO}(\Sigma,<,M_1,M_2)$-definable. 
\end{restatable}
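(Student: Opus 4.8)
The plan is to exhibit a single witness language $L$, prove it is $\exists\ww\text{EMSO}$-definable by equipping its words with an explicit wave structure, and then prove it is not indexed. I would take $L$ to be the word projection of the family depicted in Figure~\ref{fig:cyclicw}: over the alphabet $\{a,\#\}$ (identifying the letters $a_i$, whose distinctness is recorded only by the matchings), the words $\# a^n\# a^n\#\cdots\# a^n\#$ having exactly $n$ blocks, that is $L=\{\,\#(a^n\#)^n\mid n\ge 1\,\}$. Its image under the homomorphism erasing $\#$ is $\{a^{n^2}\mid n\ge1\}$, which will be convenient for the last step.

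First I would establish membership in $\exists\ww\text{EMSO}$. For a word $u\in L$ with parameter $n$, I equip $u$ with the two matchings drawn in Figure~\ref{fig:cyclicw}: the top matching $M_1$ nests the $n$ letters inside each block, while the bottom matching $M_2$ links the letters of consecutive blocks following the cyclic-shift cascade, the outermost support arch tying the first block to the last. One checks that $(u,M_1,M_2)$ is a genuine wave word (an element of $\ww=\bigcup_k\ww_k$, here of unbounded wave length). It then remains to write an EMSO formula $\phi$ such that, for every word $u$, some wave structure on $u$ satisfies $\phi$ iff $u\in L$. Since the matchings are part of the structure, the predicates $M_1(x,y),M_2(x,y)$ are available, so verifying that the word has shape $\#(a^*\#)^*$, that each $M_1$-arch stays inside a block, and that each $M_2$-arch connects the correct positions of adjacent blocks is expressible in FO; a bounded number of existentially quantified monadic predicates, marking block boundaries and the phases of the waves, lets an FO core check everything, so indeed $\phi\in\text{EMSO}$. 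The crucial point I would argue is that the wave geometry is rigid enough that the cascade of bottom arches, each advancing by one block, closes consistently under the support arch only when the number of blocks equals the common block length; this forces the word projection of $L_\ww(\phi)$ to be exactly $L$ and not an overapproximation.

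The heart of the argument, and its main obstacle, is to prove $L\notin\il$. The intuition is the classical ``two independent counters'' obstruction: the single index stack of an indexed grammar cannot simultaneously maintain the block length $n$ and count the $n$ blocks, since the resource measuring $n$ would have to be consumed in two orthogonal directions (exactly as a pushdown cannot recognise $\{a^nb^nc^n\}$). To turn this into a proof I would invoke the pumping/shrinking lemma for indexed languages (Hayashi, resp.\ Gilman): assuming $L$ indexed, a sufficiently long $\#(a^n\#)^n$ admits a decomposition pumpable in a bounded number of places, and I would argue that any such pumping either alters the length of some blocks without changing their number, or changes the number of blocks without uniformly rescaling all of them, so in either case leaves $L$. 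Equivalently, since indexed languages form a full AFL and are closed under homomorphism, it suffices to show $\{a^{n^2}\mid n\ge1\}$ is not indexed, which is the standard consequence of the same lemma.

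Combining the two halves yields $L\in\exists\ww\text{EMSO}(\Sigma,<,M_1,M_2)$ while $L\notin\il$, which is exactly Proposition~\ref{EMOonINW}. The delicate engineering is concentrated in two places: designing $\phi$ so that no spurious wave structure on an ill-formed word satisfies it (guaranteeing the projection is precisely $L$), and carrying out the indexed pumping argument rigorously. I expect the latter to be the real difficulty, since the pumping lemma for indexed languages is considerably more intricate than its context-free analogue; the former is routine once the rigidity of the wave cascade is made explicit.
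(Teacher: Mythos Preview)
Your witness language $L=\{\#(a^n\#)^n\mid n\ge1\}$ is the right kind of object, and applying Gilman's shrinking lemma to $L$ \emph{directly} (one factor must contain two $\#$'s and hence a full block $\#a^n\#$, which pins down $n$ and forbids any proper subproduct) does work and is exactly what the paper does. However, two steps of your proposal are genuinely wrong.

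First, the shortcut ``equivalently, it suffices to show $\{a^{n^2}\}$ is not indexed'' fails: $\{a^{n^2}\mid n\ge1\}$ \emph{is} an indexed language. A grammar witnessing this uses $n^2=\sum_{k=1}^{n}(2k-1)$: with an end marker $\$$, take $S\to A^{\$}$, $A\to A^{f}\mid C$, $C^{f}\to BBaC$, $C^{\$}\to\varepsilon$, $B^{f}\to aB$, $B^{\$}\to\varepsilon$. Then $C[f^{n}\$]\Rightarrow^{*}a^{2n-1}C[f^{n-1}\$]\Rightarrow^{*}a^{n^{2}}$. So the homomorphic reduction to a unary language throws away precisely the $\#$-structure that makes the shrinking-lemma argument go through; you must argue on $L$ itself.

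Second, your claim that ``the wave geometry is rigid enough that the cascade \dots\ closes consistently under the support arch only when the number of blocks equals the common block length'' is false. With $k$ blocks of common length $2m$, the construction of Figure~\ref{fig:cyclicw} produces $m$ nested $k$-waves for \emph{every} pair $(k,m)$; nothing in the matchings ties $k$ to $2m$. The paper handles this with a genuinely second-order device: it existentially quantifies a set $X$ encoding the diagonal path (first position of block~$1$, then follow $M_1$-then-$M_2$ steps, shifting by one position per block) and checks in FO that $X$ reaches the last position of the last block exactly once, which happens iff $k=2m$. Your proposed monadic predicates ``marking block boundaries and the phases of the waves'' are not enough; the diagonal-path predicate is where the EMSO (as opposed to FO) strength is actually spent.
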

\begin{proof}[Proof sketch]
Consider  $\Sigma=A \cup \set{\#}$ for any alphabet $A$,  
and the set $L$ of all words of the form 
$\#u_1 \# u_2\# \ldots u_{n}\#$, for $n\geq 1$, 
such that for all $i\in [1,n-1]$,   $u_i \in A^n$,  and if 
$u_i = a_1\ldots a_n$, then $u_{i+1}= a_n a_1 \ldots a_{n-1}$. 
Using the Shrinking Lemma given in \cite{DBLP:journals/tcs/Gilman96} for indexed languages, it can easily be proved that $L$ is not an indexed language. In addition, one can write an EMSO-formula defining 2-wave words $(\#u_1 \# u_2\# \ldots u_{n}\#, M_1,M_2)$  such that $(M_1,M_2)$ forms $n/2$ embedded $n$-waves
(see an example on Figure \ref{fig:cyclicw}). 
\end{proof}

\section{Conclusion}
\label{sec:conclusion}

A natural perspective of this work consists in studying
its extension to $k$-wave words. We believe that the
determinization property should hold for this class too.
This will require to improve
our arguments, as the present proof seems intricate to be
adapted to $k$-waves. 

For unbounded waves, automata are not determinizable: Bollig proved in \cite{DBLP:journals/lmcs/Bollig08} that regular \tnwl are not closed under complementation, and are then not determinizable.   
His proof uses
the encoding of grids in \tnwl, and can thus be adapted to (unbounded) wave words.

Another perspective consists in characterizing the subclass of  indexed languages capturing languages $\exists \ww L$, when $L$ is a regular $\ww_2$ language, and determine if   $\exists \ww L$ is still indexed when $L$  is a regular $\ww_k$ language,  for $k>2$.

%
%


\newpage
\appendix



\section{Proof of Lemma~\ref{lemma:twowaves}}

\twowaves*

\begin{proof}
The right to left implication is easy.

We show the left to right implication. Let $\omega=(w,M_1,M_2)$ be a 2-wave.
We show, by induction on $n\leq |u|$, the  following property:
\begin{enumerate}
\item let $I \subseteq [|w|]$ such that $|I|=n$ and $I$ is wpa, then
$\tilde{\omega}_I$ belongs to $L(\mathsf{W})$,
\item let $(I_1,I_2) \subseteq [|w|]$ such that $|I_1|+|I_2|=n$,  
and $(I_1,I_2)$ is wpa, then $( \tilde{\omega}_{|I_1}, \tilde{\omega}_{|I_2})$ belongs to $L(\mathsf{H})$.
\end{enumerate}

For $n=0$ the property follows from the cases $\epsilon$
and $(\epsilon,\epsilon)$ respectively.

Assume the property holds for $n$, and let us show it holds for $n+1$.

We start with the first statement.
We consider a wpa interval $I$ of size $n+1$ and study the letter at the first position (denoted $i$):
\begin{itemize}
\item if it is an internal position, the result easily follows.
\item otherwise, it must be a call-call. Then we consider the matching return-return,
say at position $j$.
Again several cases need to be considered:
\begin{itemize}
\item if $j$ is strictly before the last position of $I$, then the subinterval of $I$
with positions $>j$ is wpa. Hence, induction property can be applied.
\item otherwise, $j$ is at the last position. We can also identify the matching return call (position $k$) and call return (position $\ell$). We define $(I_1,I_2)$ corresponding to the four positions 
$i,k$ and $\ell,j$ respectively, which is wpa.
Again, there are two cases:
\begin{itemize}
\item If the two positions are not consecutive ($k < \ell -1$), then $(I_1,I_2)$ is strictly smaller, so the other property can be applied.
In addition, the interval $[k+1,\ell -1]$ is also wpa. Again, induction property can be applied.
We conclude using the last case of grammar of $\mathsf{W}$.
\item If the two positions are consecutive. We define $(I'_1,I'_2)$ with $I'_1$ starting at $i+1$, ending at $k-1$,
and $I'_2$ starting at $\ell+1$, ending at $j-1$. It is wpa, strictly smaller, induction case 2 can be applied.
Last case of $\mathsf{H}$ can be applied to deduce that $( \tilde{\omega}_{|I_1}, \tilde{\omega}_{|I_2})$ belongs to $L(\mathsf{H})$. We easily conclude 
for $I$ using that $\epsilon$ belongs to $L(\mathsf{W})$, and the last case of $\mathsf{W}$
\end{itemize}
\end{itemize}
\end{itemize}

We turn to the second property.
We consider a pair $(I,J)$, wpa, of length $n+1$. 
If $I=\emptyset$, then $J$ is wpa and $(\tilde{\omega}_{|I},  \tilde{\omega}_{|J})$ is obtained by applying the third case of grammar of $\mathsf{H}$ with $(x,y)=(\epsilon,\epsilon)$, $w_1=w_2=w_3=\epsilon$ and $w_4=  \tilde{\omega}_{|J}$. 

Otherwise, we study the letter at first position of $I$, that we denote by $i$.
If it's an internal, again it's easy using he third case of grammar of $\mathsf{H}$. So we suppose it's a call-call.
Let us denote by $j$ the matching return-return.
We consider two cases:
\begin{itemize}
\item If $j$ belongs to $I$, then we can split $I$ into $I'$ and $I''$, obtained by restricting $I$ to positions less than, or equal to (resp. strictly greater than) $j$. Then, 
$I'$ is wpa, and $(I'',J)$ is too, and of size at most $n$. 
We easily conclude, using induction hypothesis on both, and combining them using third case
of $\mathsf{H}$.
\item Otherwise, $j$ belongs to $J$. We distinguish several cases:
\begin{itemize}
\item if $j$ is not the last position of $J$, then we can split $J$ as we split $I$ before,
and conclude similarly.
\item otherwise, $j$ is at the last position of $J$, and we identify the two other matching positions,
namely the return call $k$, and the call return $\ell$. We distinguish two cases:
\begin{itemize}
\item if $k$ is at the last position of $I$, and $\ell$ is at the first position of $J$,
then we can apply last case of $\mathsf{H}$ to conclude.
\item otherwise, we will be able to split $I$ (resp. $J$)
into $I'$, $I''$ (resp. $J''$, $J'$) such $(I',J')$ and $(I'',J'')$ are both wpa, and non-empty, hence of size
at most n. Using induction hypothesis, we conclude using second case 
of $\mathsf{H}$.
\end{itemize}
\end{itemize}
\end{itemize}

\end{proof}



\section{Proofs of Section~\ref{sec:automata}}
\label{app:closure}

\normalform*
\begin{proof}[Sketch] 
We define $Q'= Q(\{\epsilon\} \cup P \cup PP)$,  $Q'_0=Q_0$, $Q_f=Q_f (\{\epsilon\} \cup P \cup PP)$. 
We construct $\Delta'$ from $\Delta$ such that there is a one to one correspondence between runs of $A$ and  runs of $A'$. More precisely, for all $\omega=(w,M_1,M_2)$,  $(\ell,h^1,h^2)$ is a run of $A$ over $\omega$ iff  $\forall 1\leq i \leq |w|$,  the sequence $\ell'_{i}=\ell_i h^1_{i}h^2_{i}$ is  run of $A'$ over $\omega$ (if $h^1_{i}$ or $h^2_{i} $ is undefined, it is replaced by $\epsilon$). 
Formally,
 for all production $(q_1, p_1, \ldots, p_{\mathsf{in}_{x,y}} a, p'_1, \ldots, p'_{\mathsf{out}_{x,y}}, q_2)\in \trans{x}{y}$, 
$$\{ (q_1', qp_1 \cdots p_{\mathsf{in}_{x,y}}, a, q_2p'_1 \cdots p'_{\mathsf{out}_{x,y}})\mid q_1'\in q_1(\{\epsilon\} \cup P \cup PP), q \in Q \}\in  \Delta'{}^x_y.$$
\end{proof}

\closure*
\begin{proof}
Suppose $L'$ and $L''$ to be recognized respectively by  \tnwa  $A'=(Q', Q_0', Q_f', \Sigma, {\Delta'})$  and  $A''=(Q'', Q_0'', Q_f'', \Sigma, {\Delta''})$ in post form. 

We define the \emph{product} of $A'$ and $A''$ to be  the 2NWA $A =(Q' \times Q'', Q_0'\times Q_0'', Q_f' \times Q_f'', \Sigma, \Delta)$ where  for all $x,y\in \{c,r ,\text{int}\}$,  $\trans{x}{y}$ is the set of transitions 
$$ ((q'_1q''_1), (p_1'p_1''), \ldots , (p'_{\mathsf{in}_{x,y}}p''_{\mathsf{in}_{x,y}}), a ,  (r_1'r_1''), \ldots , (r'_{\mathsf{in}_{x,y}}r''_{\mathsf{in}_{x,y}}),  (q_2'q_2''))$$ such that 
\begin{itemize}
\item $ (q'_1, p_1', \ldots , p'_{\mathsf{in}_{x,y}}, a , r_1', \ldots , r'_{\mathsf{in}_{x,y}},  q_2') \in \transd{\Delta'}{x}{y}$, and 
\item $ (q''_1, p_1'', \ldots , p''_{\mathsf{in}_{x,y}}, a , r_1'', \ldots , r''_{\mathsf{in}_{x,y}},  q_2'') \in \transd{\Delta''}{x}{y}
 .$
\end{itemize}
   Given a 2-nested word $w$ of length $n$, a $Q'$-sequence $(\ell'_i)_{i\in\Ientff{0}{n}}$, a $Q''$-sequence $(\ell''_i)_{i\in\Ientff{0}{n}}$: 
$\ell'$ is a run of $A'$ over $w$ and $\ell''$ is a run of $A''$ over $w$ iff $(\ell'_i,\ell''_i)_{i\in\Ientff{0}{n}}$
is a run of $A$ over $w$. \\


For the union of languages, we suppose that $Q' \cap Q'' \neq \emptyset$ and  define the \emph{sum} of $A'$ and $A''$ to be  the 2NWA $A =(Q'\cup Q'', Q_0'\cup Q_0'', Q_f'\cup Q_f'', \Sigma, \Delta' \cup \Delta'')$.  
 Given a 2-nested word $w$ of length $n$, a $Q'$-sequence $(\ell'_i)_{i\in\Ientff{0}{n}}$, a $Q''$-sequence $(\ell''_i)_{i\in\Ientff{0}{n}}$: 
$\ell'$ is a run of $A'$ over $w$ or  $\ell''$ is a run of $A''$ over $w$ iff $(\ell'_i,\ell''_i)_{i\in\Ientff{0}{n}}$
is a run of $A$ over $w$. \\

For the direct image by non erasing alphabetic morphism $h$, it suffices to replace in transitions each letter $a$ by $h(a)$. For the inverse image, we proceed similarly  but by duplicating transitions: one for each letter $b$ such that $h(b)=a$. 
\end{proof}

\section{Proof of Determinization}\label{sec:proof}

\subsection{Notations}
We start with additional definitions.

\begin{definition}
Let $\cpl$ be a matching relation, $(i,j)\in M$ and $k\in [n]$:
\begin{itemize} 
\item  $k$ is \emph{on the surface} of arch $(i, j)$ (or $(i, j)$ \emph{covers} $k$) iff
$$k\in \Ientfo{i}{j} \ \text{and} \ \qqs i', j'\ (\match{i',j'}\et i<i'\leq k)\imp j'\leq k$$ 
\item if $\match{k,l}$,  arch $(k, l)$ is \emph{on the surface} of  $(i, j)$ iff $l$ is on the surface of $(i, j)$
\end{itemize}
\end{definition}


\begin{definition}
The \emph{surface function} $s_\cpl$ 
yields from any position $k$ (including $0$) the call position of the arch that covers $k$, or $0$ if $k$ is not covered by any arch. In particular, $s_\cpl(0)=0.$
\end{definition}
\begin{definition}
The \emph{surface call-call} $s^c_c$ 
yields from any position $k$ (including $0$) the call-call position of the arch that covers $k$, or $0$ if $k$ is not covered by any arch. In particular, $s^c_c(0)=0.$
\end{definition}
\begin{remark} the reader can check the following properties: 
\begin{itemize}
\item if $k>0$ is internal then  $s_\cpl(k)=s_\cpl(k-1)$
\item if $k>0$ is a call then  $s_\cpl(k)=k$
\item if $k>0$ is a return and $M(i,k)$ then   $s_\cpl(k)=s_\cpl(i-1)$
\item if $i:=s_{\cpli}(k)>0$ then $i$ is in $[n]^c_c$ and $s^c_c(k)=i$ or $i$ is in $[n]^c_r$ and $s^c_c(k)=(\cplii\circ\cpli)^{-1}(i)$
\end{itemize}
\end{remark}

\subsection{Main proof}

\determinisation*
\begin{proof}
Let $\Ientff{s}{f} \subseteq [n]$ be an interval wpa, 
and $S, S' \in Q'$ such that $\exec{S}{\omega, \Ientff{s}{f}}{S'}{A'}$.
Observe that as $A'$ is deterministic, it has a unique
run on the $2$-wave word $\omega_{|\Ientff{s}{f}}$.
We let $(L_i)_{i\in \Ientff{s-1}{f}}$ denote the (linear) states of this run.
In particular, we have $L_{s-1}=S$ and $L_f=S'$. As $A'$ is deterministic,
hierarchical states are completely determined from linear ones.

We start with the proof of the reverse implication:
\indirect*
\begin{proof}

We define: 
$$\begin{array}{rcl}
     \overline{R}: \Ientff{s-1}{f} & \to     & \overline{\mathcal{R}}^1 \cup \overline{\mathcal{R}}^2 \\
                                 i & \mapsto & \left\{\begin{array}{ll}
                                                        \overline{r} & \text{if} \ s_{\cpli}(i) < s \\ 
                                                        \ell_{s_{\cpli}(i)} &\text{if}\ s_{\cpli}(i)\in \Ientff{s}{f}^c_c\\
                                                        \ell_{s^c_c(i)} \ell_{m^{-1}_{\cplii}(s_{\cpli}(i))} \ell_{s_{\cpli}(i)} & \text{if}\ s_{\cpli}(i)\in \Ientff{s}{f}^c_r\\
                                                      \end{array}\right.\\
{}                                 & {}       & {}\\
     \underline{R}: \Ientff{s-1}{f} & \to     & \underline{\mathcal{R}} \\
                                  i & \mapsto & \left\{\begin{array}{ll}
                                                         \underline{r} & \text{if} \ s_{\cplii}(i)< s \\ 
                                                         \ell_{s_{\cplii}(i)} &\text{if}\ s_{\cplii}(i)\in\Ientff{s}{f}^2_c\\
                                                       \end{array}\right.
  \end{array}$$

First, using these definitions, and the fact that $\Ientff{s}{f}$ is wpa, it is easy to
verify that $\overline{R}(f) = \overline{r}$
and $\underline{R}(f) = \underline{r}$.

Second, we observe that if $(i_1, i_2, i_3, i_4)$ is a 2-wave within $\Ientff{s}{f}$ then $\overline{R}(i_2)\preceq\overline{R}(i_3-1)$ :
\begin{itemize}
\item if $s_{\cpli}(i_2)=s_{\cpli}(i_3-1)$ e.g. $i_2$ and $i_3-1$ are covered by the same $\cpli$-arch then $\overline{R}(i_2)=\overline{R}(i_3-1)$
\item if $s_{\cpli}(i_2)<s_{\cpli}(i_3-1)$ then $s_{\cpli}(i_2) = s^c_c(i_3-1)$ e.g. $i_2$ and $i_3-1$ are covered by the same wave (but not the same arch) then:
$$\overline{R}(i_2) = \ell_{s_{\cpli}(i_2)} = \ell_{s^c_c(i_3-1)}\preceq\ell_{s^c_c(i_3-1)} \ell_{m^{-1}_{\cplii}(s_{\cpli}(i_3-1))} \ell_{s_{\cpli}(i_3-1)}=\overline{R}(i_3-1)$$
\end{itemize}

We prove by induction over $k\in  \Ientff{s-1}{f}$ that $(H_{k})$:   $\forall  s-1 \leq i \leq k$, 
$(\overline{R}(i),\underline{R}(i), \ell_i)\in L_i$. 
For $k=s-1$,  the property follows from the hypothesis $(\overline{r},\underline{r},q)\in L_{s-1}$.

Suppose $k\geq s$,  and $( \overline{R}(i), \underline{R}(i), \ell_{i})\in L_{i}$, for all $s-1\leq i\leq k-1$. 

We consider the following cases: 

\begin{itemize}
\item if $k$ is an internal position then from construction of $\delta_{\text{int}}$ and  $(H_{k-1})$:  $L_k$ contains 
$(\overline{R}(k-1),\underline{R}(k-1),\ell_k)$. 
 In addition, $\overline{R}(k)= \overline{R}(k-1)$ and $\underline{R}(k)= \underline{R}(k-1)$. 

\item if $k\in \Ientff{s}{f}^c_c$, then from the construction of $\delta^c_c$ and $(H_{k-1})$,  $(\ell_k, \ell_k, \ell_k)\in L_k$. 
 In addition, $s_{\cpli}(k)=s_{\cplii}(k)=k$ hence $\overline{R}(k) = \underline{R}(k)=\ell_k$. 

 \item if $k\in \Ientff{s}{f}^r_c$, then $k=i_2$ in a wave $(i_1,i_2,i_3,i_4)$ included in
 $\Ientff{s}{f}$ as it is wpa.  Since $\ell$ is a run and 
$(H_{k-1})$ holds, $(\ell_{k-1}, \ell_{i_1}, a_k,\ell_k)\in \trans{r}{c}$,  and 
\begin{equation}
\label{eq1}
(\ell_{i_1-1},a_{i_1},\ell_{i}) \in \trans{c}{c}, \ 
(\overline{R}(i_{1}-1),\underline{R}(i_1-1),\ell_{i_1-1}) \in L_{i_1-1}, \ (\ell_{i_1},\ell_{i_1},\ell_{i_1})\in L_{i_1} 
 \end{equation}
 
 From construction of $\delta^r_c$, $L_k$ contains 
$(\overline{R}(i_1-1), \ell_k, \ell_k)$. 
 In addition, $s_{\cpli}(k)=s_{\cpli}(i_1-1)$ hence $\overline{R}(k)=\overline{R}(i_1-1)$ and $s_{\cplii}(k)=k$. It follows that $\underline{R}(k)=\ell_{k}$.
 
 
\item if $k \in \Ientff{s}{f}^c_r$, then $k=i_3$ in a wave $(i_1,i_2,i_3,i_4)$  included in
 $\Ientff{s}{f}$ as it is wpa. Since $\ell$ is a run and 
$(H_{k-1})$ holds, $(\ell_{k-1}, \ell_{i_2}, a_k,\ell_k)\in  \trans{c}{r}$, (\ref{eq1}), and 
\begin{equation}
\label{eq2}
(\ell_{i_2-1}, \ell_{i_1} ,a_{i_2},\ell_{i_2}) \in  \trans{r}{c}, \ 
( \overline{R}(i_2), \underline{R}(i_2-1), \ell_{i_2-1})\in L_{i_2-1}, \ ( \overline{R}(i_2), \underline{R}(i_2),\ell_{i_2})\in L_{i_2} 
 \end{equation}

From construction of $ \delta^c_r$, 
$(\ell_k, \underline{R}(i_2-1), \ell_k) \in L_k$. 
 In addition, $s_{\cpli}(i_3)=i_3$ hence $\overline{R}(i_3)=\ell_{i_1}\ell_{i_2}\ell_{i_3}$ and $s_{\cplii}(i_3)=s_{\cplii}(i_2-1)$. It follows that $\underline{R}(i_3)=\underline{R}(i_2-1)$.


\item if $k  \in \Ientff{s}{f}^r_r$, then $k=i_4$ in a wave $(i_1,i_2,i_3,i_4)$  included in
 $\Ientff{s}{f}$ as it is wpa. Since $\ell$ is a run and 
$(H_{k-1})$ holds, $(\ell_{k-1}, \ell_{i_3},\ell{i_1}n a_k,\ell_k)\in \trans{r}{r}$,  $(\ell_k, \ell_k, \ell_{k-1})\in L_{k-1}$, (\ref{eq1}), (\ref{eq2}) and 
\begin{equation*}
\label{eq2}
(\ell_{i_3-1}, \ell_{i_2}, a_{i_3},\ell_{i_3}) \in \trans{r}{c}, \ 
(\overline{R}(i_1-1), \ell_k, \ell_{i_3-1}) \in L_{i_3-1}, \ ( \ell_k, \overline{R}(i_2-1), \ell_k)\in L_{i_3} 
 \end{equation*}

From construction of $\delta^r_r$, 
$(\overline{R}(i_1-1), \underline{R}(i_1-1), \ell_k) \in L_{k}$. 
 In addition, $s_{\cpli}(i_4)=s_{\cpli}(i_3-1)$ hence $\overline{R}(i_4)=\overline{R}(i_3-1)$ and $s_{\cplii}(i_4)=s_{\cplii}(i_1-1)$ hence $\underline{R}(i_4)=\underline{R}(i_1-1)$.

\end{itemize}
\end{proof}

We turn to the direct implication, and consider
$\overline{r} \in \overline{\mathcal{R}}$, $\underline{r} \in \underline{\mathcal{R}}$
and $q'\in Q$.

We proceed by structural induction on wpa interval $\Ientff{s}{f}$ $(\text{IH}_1)$.


If $s=f$, since  $\Ientff{s}{f}$ is a wpa interval, $f$ is an internal position, 
 \begin{align*}
 (\overline{r},\underline{r},q')\in S' &\ssi \exists q\ (\overline{r},\underline{r},q) \in S \land (q, a_f, q')\in\Delta_{\text{int}} \\
 {} &\ssi \exists q\ (\overline{r},\underline{r},q)\in S \land \exec{q}{\omega, \Ientff{s}{f}}{q'}{A}
 \end{align*}
 
 Else, if there exists $i \in\Ientof{s}{f}$ such that $\Ientfo{s}{i}$ and $\Ientff{i}{f}$ are wpa, then $(\text{IH}_1)$ can be applied to both interval. Formally, by calling $S'':=L_{i-1}$ it follows :
 \begin{align*}
 (\overline{r},\underline{r},q')\in S' &\ssi \exists q''\ (\overline{r},\underline{r},q'')\in S'' \land \exec{q''}{\omega, \Ientff{i}{f}}{q'}{A} \\
 {} &\ssi \exists q, q''\ (\overline{r},\underline{r},q)\in S\land  \exec{q}{\omega, \Ientfo{s}{i}}{q''}{A} \land\exec{q''}{\omega, \Ientff{i}{f}}{q'}{A} \\
 {} &\ssi \exists q\ (\overline{r},\underline{r},q)\in S \land \exec{q}{\omega, \Ientff{s}{f}}{q'}{A}
 \end{align*}
 
 Else, since neither $s$ nor $f$ can be internals (previous case), and since $s$ (resp. $f$) cannot be a return (resp. call) of any kind, $s\in\Ientff{s}{f}^c_c$, $f\in\Ientff{s}{f}^r_r$ and $\cplii(s)=f$ (if not then $\Ientff{s}{f}$ can be split in 2 non-empty wpa intervals thus falling - again - under the previous case). 
Then, there exist $s', f'$ such that $(s, s', f', f)$ is a 2-wave thus defining the limit case for which the induction hypothesis has to be proven. 
 To conclude this case, we will prove the following property of $2$-waves:
\direct*
We defer the proof of this Lemma, and conclude that of Proposition~\ref{lemme de determinisation}
 in the limit case. We have $(\overline{r}, \underline{r}, q')\in L_f$ with $\Ientff{s}{f}$ a wpa interval and $(s, s', f', f)$ a 2-wave. Let us rename $s, s', f', f$ by $i_1, i_2, i_3, i_4$ then by construction of $\delta^r_r$, $(\overline{r}, \underline{r}, q')\in L_f$ implies the existence of $\overline{r}_1, \underline{r}_2, q, q'_1, q_2, q'_2, q_3, q'_3, q_4$ such that :
$${\psi_4}[\overline{r},\ \underline{r},\ q,\ q'\ \vert\ \overline{r}_3,\ \underline{r}_0,\ q_1,\ q'_4]$$
is satisfied. We apply lemma \ref{determinisation non wpa} giving us :
$$(\overline{r}_1, \underline{r}, q)\in L_s,\ \overline{r}_1\preceq\overline{r},\ (\overline{r}, q'_2, q_3)\in L_{i_3-1}\mbox{ and }\exec{q,q_3}{\omega, \Ientff{i_1}{i_2}, \Ientff{i_3}{i_4}}{q'_2,q'}{A}$$
Since $\Ientoo{i_2}{i_3}$ is wpa and $(\overline{r}, q'_2, q_3)\in L_{i_3-1}$ we apply $(\text{IH}_1)$, there exist $q''_2$ such that $(\overline{r}, q'_2, q''_2)\in L_{i_2}$ and $\exec{q''_2}{\omega, \Ientoo{i_2}{i_3}}{q_3}{A}$. By construction of $\delta^r_c$, we have $q''_2=q'_2$ and altogether,
$$\exec{q,q_3}{\omega, \Ientff{i_1}{i_2}, \Ientff{i_3}{i_4}}{q'_2,q'}{A}\mbox{ and }\exec{q'_2}{\omega, \Ientoo{i_2}{i_3}}{q_3}{A}$$
yields :
$$\exec{q}{\omega, \Ientff{s}{f}}{q'}{A}$$
By contradiction, if $\overline{r}_1\prec\overline{r}$ then $\overline{r}_1\in\overline{\mathcal{R}}^1$ and $\overline{r}\in\overline{\mathcal{R}}^2$ implying that $i_1-1$ and $i_3-1$ are not at the surface of the same $\cpli$-arch this would involve from the $\cpli$-arch covering $i_1-1$ to be returned in $\Ientoo{i_2}{i_3}$ and called at a position strictly before $i_1$ contradicting the wpa nature of $\Ientff{s}{f}$. Hence as $\overline{r}_1\preceq\overline{r}$ we have $\overline{r}_1=\overline{r}$ and therefore $(\overline{r}, \underline{r}, q)\in L_s$.
\end{proof}

We return to the proof of Lemma \ref{determinisation non wpa}.
To this end, we introduce some notations.
When defining the construction of the deterministic \tnwa, we introduced formula $(\phi_j)_{j\in[4]}$ with abstract names ($S_j, S'_j, \alpha_j, \dots$). In order to prove Lemma \ref{determinisation non wpa},
 we will need to introduce notations that allow us to link the construction more closely to the run $(L_i)_{i\in\Ientff{s-1}{f}}$ and a 2-wave within $\Ientff{s}{f}$. 

Let $(i_{x j})_{j\in[4]}$ be a 2-wave, where $x\in\set{\eps, \bullet, \circ}$ is a tag ($\eps$ for no tag). We derive formula $(\psi_{x m})_{m\in[4]}$ from $(\phi_m)_{m\in[4]}$ by replacing $S_j, \alpha_j$ by their counterpart $(L_{i_{x j}-1}, a_{i_{x j}})$ ($j\in [m]$) and tagging the remaining variables with $x$. Formally:
$$\psi_{x m}:={\phi_m}\left[
                        \begin{array}{l}
                          L_{i_{x j}-1},\  a_{i_{x j}},\  q_{x j},\  q'_{x j}\ \vert\ S_j,\ \alpha_j, \ q_j, \ q'_j ; \ j\in[m] \\ 
                          \overline{r}_{x 1} \ \overline{r}_{x 3} \ \underline{r}_{x 0} \ \underline{r}_{x 2} \ \vert \ \overline{r}_1, \ \overline{r}_3, \ \underline{r}_0, \ \underline{r}_2
                        \end{array}\right]$$

\begin{example} Considering 2-waves $(i_j)_{j\in[4]}$ and $(i_{\bullet j})_{j\in[4]}$: 
\begin{align*}
\psi_3 &= {\phi_3}[ L_{i_j-1},\  a_{i_j}\ \vert\ S_j,\ \alpha_j; \ j\in[3]]\\
\psi_{\bullet 4}&={\phi_4}\left[
                        \begin{array}{l}
                          L_{i_{\bullet j}-1},\  a_{i_{\bullet j}},\  q_{\bullet j},\  q'_{\bullet j}\ \vert\ S_j,\ \alpha_j, \ q_j, \ q'_j ; \ j\in[4] \\ 
                          \overline{r}_{\bullet 1}, \ \overline{r}_{\bullet 3}, \ \underline{r}_{\bullet 0}, \ \underline{r}_{\bullet 2} \ \vert \ \overline{r}_1, \ \overline{r}_3, \ \underline{r}_0, \ \underline{r}_2
                        \end{array}\right]\end{align*}
\end{example}


\begin{proof}[Proof of Lemma \ref{determinisation non wpa}]
We proceed by induction on $i_4-i_1$ $(\text{IH}_2)$. Let $k\in\Ientfo{i_1}{i_2}$, $K\in\Ientof{i_3}{i_4}$ and $p, q\in Q$ we define:
$$H(k, K, p, q):= \bigwedge\left\{\begin{array}{l}
k=i_1\land K=i_4 \mbox{ or } K\in\Ientoo{i_3}{i_4}^c_r\\
(k, K)\in\cplii \land s_{\cpli}(K-1) = i_3 \land (q'_1q'_2q'_3,p,q)\in L_{K-1}\\
\Ientof{i_1}{k}\cup\Ientfo{K}{i_4}\mbox{ is wpa and }\exec{q'_1, q}{\omega, \Ientof{i_1}{k}, \Ientfo{K}{i_4}}{p,q_4}{A}
\end{array}\right.$$
In particular $H(i_1,i_4,q'_1,q_4)$ Holds.

We state an additional technical Lemma: 
\begin{lemma} \label{lemmeHI3} If $H(k, K, p, q)$ then $\Ientoo{i_3}{K}$ is wpa or there exist $k'\in\Ientfo{i_1}{i_2}$, $K'\in\Ientoo{i_3}{K}$ and $p', q'\in Q$ such that $H(k', K', p', q')$
\end{lemma}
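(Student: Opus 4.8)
The plan is to prove the statement by a structural analysis of the two matchings inside $\Ientoo{i_3}{K}$, combined with the induction hypothesis $(\text{IH}_2)$ of Lemma~\ref{determinisation non wpa} and the already established indirect lemma. First I would dispose of the easy alternative: if $\Ientoo{i_3}{K}$ is wpa we are done, so assume it is not. The hypothesis $H(k,K,p,q)$ gives $s_{\cpli}(K-1)=i_3$, i.e.\ $K-1$ lies on the surface of the $\cpli$-arch $(i_3,i_4)$. A short non-crossing argument then shows that $\Ientoo{i_3}{K}$ contains no pending $\cpli$-arch: a pending $\cpli$-call would contradict that $K-1$ is on the surface, and a pending $\cpli$-return would cross $(i_3,i_4)$. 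Hence the only possible source of non-wpa-ness is a pending $\cplii$-arch.

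Next I would pin down the shape of the pending $\cplii$-arches. Since $(k,K)\in\cplii$ and every position of $\Ientoo{i_3}{K}$ lies in $\Ientoo{k}{K}$, non-crossing of $\cplii$ with $(k,K)$ forces every $\cplii$-call of $\Ientoo{i_3}{K}$ to close inside $\Ientoo{i_3}{K}$; so there is no pending $\cplii$-call, and the witness must be a pending $\cplii$-return $l$. Using non-crossing with $(k,K)$ and with the bottom arch $(i_2,i_3)$ one gets that its partner $m=\imatchdeux{l}$ satisfies $k<m<i_2$. Among all pending $\cplii$-returns I would take the innermost one (these arches are linearly nested, again by non-crossing), and set $K'=l$, $k'=m$.

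The crux is the following structural claim: this innermost $K'$ is a call-return lying on the surface of $(i_3,i_4)$, so that $s_{\cpli}(K'-1)=i_3$. I would prove it by ruling out the alternatives with non-crossing. If the arch $(m,l)$ were a support arch of a $2$-wave, its companion bottom arch would produce a pending $\cplii$-return strictly inside $(m,l)$, contradicting innermost-ness; the remaining sub-case, in which the companion call lies after $i_3$, is excluded because the associated top arch would have to cross $(i_3,i_4)$. Hence $(m,l)$ is a bottom arch and $l=K'$ is a call-return, say the $i_3$-corner of a sub-$2$-wave $(i_1'',m,l,i_4'')$. A similar case analysis shows $l$ is not covered by any deeper $\cpli$-arch (a covering first or second top arch is each excluded by a crossing of $\cpli$ or of $\cplii$), so $l$ is on the surface; since $l$ is a $\cpli$-call on the surface, its predecessor $l-1$ is on the surface too, giving $s_{\cpli}(K'-1)=i_3$. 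Along the way I would also record that $i_1<i_1''<m<i_2$ and $i_3<l<i_4''<i_4$, so that the sub-$2$-wave is strictly shorter than $(i_1,\dots,i_4)$.

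Finally I would verify the three conjuncts of $H(k',K',p',q')$. The call-return condition, the matching $(k',K')\in\cplii$, and $s_{\cpli}(K'-1)=i_3$ have just been obtained. For the wpa condition, I would check that $\Ientof{i_1}{k'}\cup\Ientfo{K'}{i_4}$ is wpa by combining the wpa-ness of the old shell $\Ientof{i_1}{k}\cup\Ientfo{K}{i_4}$ with the shell of the sub-$2$-wave, the two being reconciled across the arch $\cplii(m,l)$. For the run, and for the membership $(q'_1q'_2q'_3,p',q')\in L_{K'-1}$, I would build the $A$-run over the enlarged shell by splicing the run given by $H(k,K,p,q)$ with a run over the sub-$2$-wave shell obtained from $(\text{IH}_2)$ and over the residual wpa intervals obtained from $(\text{IH}_1)$, the boundary states being matched through the transition steps $\Delta^r_c$ and $\Delta^c_r$ (the prefix compatibility $\overline{r}_1\preceq\overline{r}_3$ appearing in $\phi_3$ is exactly what guarantees that the two reference states reconcile). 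Taking $p'=\underline{R}(K'-1)$ and $q'=\ell_{K'-1}$ for this run and applying the indirect lemma then yields the required triple in $L_{K'-1}$, completing $H(k',K',p',q')$. The main obstacle is this last step: correctly decomposing the newly exposed region into the sub-$2$-wave shell plus residual wpa intervals and threading a single $A$-run through them while keeping the reference-state bookkeeping consistent with the deterministic run $(L_i)$ of $A'$.
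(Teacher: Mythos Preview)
Your structural analysis—isolating a pending $\cplii$-return and showing it must be the call-return corner of a sub-$2$-wave $(i_1'',m,l,i_4'')$—is sound, but the choice of the \emph{innermost} pending return breaks the run construction. Between the old shell $(k,K)$ and your new shell $(k',K')=(m,l)$ there can sit several nested sub-$2$-waves, not just one; then your ``residual wpa intervals'' are not wpa at all. Concretely, take the outer wave $(1,10,11,20)$ together with two nested sub-waves $(2,3,18,19)$ and $(4,5,16,17)$ (positions $6$--$9$ and $12$--$15$ internal). With $(k,K)=(1,20)$, the innermost pending $\cplii$-return in $\Ientoo{11}{20}$ is $l=16$, so your $(k',K')=(5,16)$ and $i_4''=17$. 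The new shell is $\{2,3,4,5,16,17,18,19\}$; after removing the sub-wave $(4,5,16,17)$ you are left with $\{2,3,18,19\}$, which is precisely the \emph{other} sub-wave and certainly not a union of wpa intervals. So you cannot reach the sub-wave via $(\text{IH}_1)$, and you cannot invoke $(\text{IH}_2)$ on it either, because $(\text{IH}_2)$ requires as input a triple in $L_{i_4''}$ that you have no way to obtain (the only triple you hold is in $L_{K-1}$, and $\Ientoo{i_4''}{K}$ is not wpa here). Your fallback—first build the $A$-run, then extract the triple via the indirect lemma—is therefore circular: the run needs $(\text{IH}_2)$, which needs the triple.

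The paper avoids this by peeling off the \emph{outermost} sub-wave rather than the innermost one: it sets $K''=\min\{m:\Ientoo{m}{K}\text{ is wpa}\}$, argues $K''$ is a return-return, and takes $(k',K')=(i_{\bullet 2},i_{\bullet 3})$ of the sub-wave ending at $K''$. The point of this choice is that $\Ientoo{K''}{K}$ is wpa \emph{by construction}, so $(\text{IH}_1)$ transports the known triple from $L_{K-1}$ to $L_{K''}$; unfolding $\delta^r_r$ at $K''$ then yields both the triple $(q'_1q'_2q'_3,q'_{\bullet 2},q_{\bullet 3})\in L_{i_{\bullet 3}-1}$ needed for $H$ and the data needed to fire $(\text{IH}_2)$. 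In the example above the paper picks $K''=19$ and $(k',K')=(3,18)$, handling one sub-wave per iteration. Your argument can be repaired by switching to this outermost choice (or, equivalently, by first walking back from $K-1$ across a maximal wpa interval before naming the sub-wave).
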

The purpose here is to encapsulate in one go, the way we are going to build a run on $A$ inside the $2$-wave inductively. Since we are depending on $(\text{IH}_1)$ to extract run on intervals wpa, we start from the last known linear state $K-1$, we extract the longest wpa interval finishing on $K-1$ and find $\Ientoo{i_{\bullet 4}}{K}$ (resp. $\Ientoo{i_3}{K}$) if $i_{\bullet 4}$ is the return-return position of a $2$-wave whose call-call position was covered by $i_1$ (resp. if $\Ientoo{i_3}{K}$ is wpa.) then $(\text{IH}_2)$ gives us the run for the matched $\Ientff{i_1}{i_2}$, $\Ientff{i_3}{i_4}$, and we finish applying $(\text{IH}_1)$ to $\Ientoo{i_1}{i_{\bullet 1}}$.

\begin{figure}[h]
$$\scalebox{0.8}{
\begin{tikzpicture}
\tikzset{node distance=0.8cm,}
\node (0) {$\begin{array}{c} \phantom{k}\\q_1 \end{array}$}; 
\node[right of = 0] (1) {$\begin{array}{c} i_1\\q'_1 \end{array}$};
\node[right of = 1] (2) {$\dots$};
\node[right of = 2] (3) {$\begin{array}{c} \phantom{k}\\ \phantom{q} \end{array}$};
\node[right of = 3] (4) {$\begin{array}{c} k \\ p \end{array}$};
\node[right of = 4] (5) {$\begin{array}{c} i_{\bullet 1}\\q'_{\bullet 1} \end{array}$};
\node[right of = 5] (6) {$\begin{array}{c} i_{\bullet 2}\\q'_{\bullet 2} \end{array}$};
\node[right of = 6] (7) {$\begin{array}{c} i_2\\q'_2 \end{array}$};
\node[right of = 7] (8) {$\dots$};
\node[right of = 8] (9) {$\begin{array}{c} i_3 \\q'_3 \end{array}$};
\node[right of = 9] (10) {$\begin{array}{c} i_{\bullet 3}\\q'_{\bullet 3} \end{array}$};
\node[right of = 10] (11) {$\begin{array}{c} i_{\bullet 4}\\q'_{\bullet 4} \end{array}$};
\node[right of = 11] (11a) {$\begin{array}{r} K\!\!-\!\!1 \\ q\end{array}$};
\node[right of = 11a, xshift = -0.25cm] (12) {$\begin{array}{c} K \\ \phantom{q}\end{array}$};
\node[right of = 12] (13) {$\begin{array}{c} \phantom{k}\\ \phantom{q} \end{array}$};
\node[right of = 13] (14) {$\dots$};
\node[right of = 14] (14a) {$\begin{array}{r} i_4\!\!-\!\!1 \\ q_4 \end{array}$};
\node[right of = 14a, xshift = -0.25cm] (15) {$\begin{array}{c} i_4 \\ q'_4 \end{array}$};

\draw[-,thick] (1.north) edge[bend left=60] (7.north);
\draw[-,thick] (9.north) edge[bend left=55] (15.north);
\draw[-] (3.north) edge[bend left=90] (4.north);
\draw[-] (5.north) edge[bend left=90] (6.north);
\draw[-] (10.north) edge[bend left=90] (11.north);
\draw[-] (12.north) edge[bend left=90] (13.north);

\draw[-,thick] (1.south) edge[bend right=60] (15.south);
\draw[-] (3.south) edge[bend right=90] (13.south);
\draw[-] (4.south) edge[bend right=90] (12.south);
\draw[-] (5.south) edge[bend right=90] (11.south);
\draw[-] (6.south) edge[bend right=90] (10.south);
\draw[-,thick] (7.south) edge[bend right=55] (9.south);
\end{tikzpicture}}$$
\caption{2-wave $(i_1, i_2, i_3, i_4)$ and notations for Lemma \ref{lemmeHI3}.}
\end{figure}
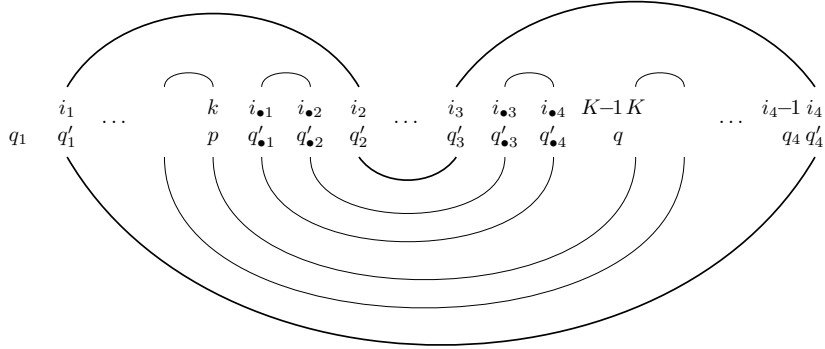

\begin{proof}[Proof of Lemma \ref{lemmeHI3}]
If $i_3=K-1$ then $\Ientoo{i_3}{K}$ is empty, hence wpa and the lemma is true. Let us suppose now that $i_3<K-1$:\\
$K''=\text{min}\set{m\in\Ientof{i_3}{K};\ \Ientoo{m}{K}\text{ is wpa}}$ is well define since $\Ientoo{K-1}{K}=\emptyset$ hence wpa, therefore $K''\leq K-1$. \\
The set $\Ientof{i_1}{k}\cup\Ientfo{K}{i_4}$ is wpa by $H(k, K, p, q)$ and $\Ientoo{K''}{K}$ is wpa  by construction, then their reunion $\Ientof{i_1}{k}\cup\Ientoo{K''}{i_4}$ is wpa. If $K''$ is a call (on any matching) it has to be returned in $\Ientoo{K''}{i_4}\subseteq\Ientof{i_1}{k}\cup\Ientoo{K''}{i_4}$ that is wpa, implying $K''\in\Ientof{i_1}{k}\cup\Ientoo{K''}{i_4}$ which is absurd. Since $K''$ being an internal would contradict its own minimality, $K''$ is a return-return position. \\
Let $k''$ such that $(k'',K'')\in \cplii$ we have $k''<i_2$ (otherwise $i_3<k''$ and $\Ientfo{k''}{K''}$ being wpa contradict minimality of $K''$).\\
Finally there exist a 2-wave $(i_{\bullet j})_{j\in[4]}$ such that $(i_{\bullet 1}, i_{\bullet 4})=(k'', K'')$.\\
As $(q'_1q'_2q'_3,p,q)\in L_{K-1}$ and $\Ientoo{K''}{K}$ wpa, $(\text{IH}_1)$ yields $q'_{\bullet 4}$ such that $(q'_1q'_2q'_3,p,q'_{\bullet 4})\in L_{K''}$ and $\exec{q'_{\bullet 4}}{\omega, \Ientoo{K''}{K}}{q}{A}$. Since $(q'_1q'_2q'_3,p,q'_{\bullet 4})\in L_{i_{\bullet 4}}$ by construction of $\delta^r_r$, we can exhibit $\overline{r}_{\bullet 1}$, $\underline{r}_{\bullet 2}$, $q_{\bullet 1}$, $q'_{\bullet 1}$, $q_{\bullet 2}$, $q'_{\bullet 2}$, $q_{\bullet 3}$, $q'_{\bullet 3}$, $q_{\bullet 4}$ such that $\psi_{\bullet 4} [q'_1 q'_2 q'_3, q'_1\vert \overline{r}_{\bullet 3}, \underline{r}_{\bullet 0}]$ is satisfied.\\
By virtue of $(\text{IH}_2)$, lemma \ref{determinisation non wpa} applies so we also have : 
$$\exec{q_{\bullet 1},q_{\bullet 3}}{\omega,  \Ientff{i_{\bullet 1}}{i_{\bullet 2}}, \Ientff{i_{\bullet 3}}{i_{\bullet 4}}}{q'_{\bullet 2},q'_{\bullet 4}}{A}$$ 
Since $r_{\bullet 1}\preceq q'_1 q'_2 q'_3$ and $r_{\bullet 1}\in \overline{\mathcal{R}}^1$ ($i_1$ is a call-call), we have $r_{\bullet 1} = q'_1$ and $(q'_1,p,q_{\bullet 1})\in L_{i_{\bullet 1}-1}$. Since $\cplii(k, K)$, $\cplii(k'', K'')$ and $\Ientoo{K''}{K}$ is wpa, $\Ientoo{k}{k''}$ is wpa. Applying $(\text{IH}_1)$ yields $p''$ such that $(q'_1,p,p'')\in L_{k}$, $\exec{p''}{\omega, \Ientoo{k}{k''}}{q_{\bullet 1}}{A}$ and by construction of $\delta^c_c$, $p=p''$\\
Altogether, we have shown:
$$\exec{p}{\omega, \Ientoo{k}{i_{\bullet 1}}}{q_{\bullet 1}}{A}\ ;\ \exec{q_{\bullet 1},q_{\bullet 3}}{\omega,  \Ientff{i_{\bullet 1}}{i_{\bullet 2}}, \Ientff{i_{\bullet 3}}{i_{\bullet 4}}}{q'_{\bullet 2},q'_{\bullet 4}}{A}\ ; \ \exec{q'_{\bullet 4}}{\omega, \Ientoo{_{\bullet 4}}{K}}{q}{A}$$
And therefore:
$$\exec{p,q_{\bullet 3}}{\omega,  \Ientof{k}{i_{\bullet 2}}, \Ientfo{i_{\bullet 3}}{K}}{q'_{\bullet 2},q}{A}$$
Which combined with: 
$$\exec{q'_1, q}{\omega, \Ientof{i_1}{k}, \Ientfo{K}{i_4}}{p,q_4}{A}$$
Yields:
$$\exec{q'_1,q_{\bullet 3}}{\omega,  \Ientof{i_1}{i_{\bullet 2}}, \Ientfo{i_{\bullet 3}}{i_4}}{q'_{\bullet 2},q_4}{A}$$
and $H(i_{\bullet 2}, i_{\bullet 3}, q'_{\bullet 2}, q_{\bullet 3})$ closes the proof of Lemma~\ref{lemmeHI3}.
\end{proof}
We are now ready to conclude the proof of Lemma~\ref{determinisation non wpa}.
Starting from $(k, K)=(i_1, i_4)$ and applying Lemma~\ref{lemmeHI3} until its hypothesis ceases to hold we exhibit $k\in\Ientfo{i_1}{i_2}$, $K\in\Ientof{i_3}{i_4}$ and $p, q\in Q$ such that $H(k, K, p, q)$ holds and $\Ientoo{i_3}{K}$ is wpa. Since $(q'_1q'_2q'_3,p,q)\in L_{K-1}$ applying $(\text{IH}_1)$ yields $q''_3\in Q$ such that $(q'_1q'_2q'_3,p,q''_3)\in L_{i_3}$ and $\exec{q''_3}{\omega, \Ientoo{i_3}{K}}{q}{A}$. By construction of $\delta^c_r$, we have $q''_3 = q'_3$.\\
As $(q'_1q'_2q'_3,p,q'_3)\in L_{i_3}$ there exists : $\overline{\tilde r}_1, \overline{\tilde r}_3, \underline{\tilde r}_0,\tilde q_1, \tilde q_2, \tilde q_3$ satisfying : 
$${\psi_3}\left[\begin{array}{l}
                     \tilde q_1,\ \tilde q_2,\ \tilde q_3\ \vert\ q_1,\ q_2,\ q_3  \\ 
                     \overline{\tilde r}_1, \ \overline{\tilde r}_{3},\ \underline{\tilde r}_0,\ p \ \vert\ \overline{r}_1,\ \overline{r}_3,\ \underline{r}_0, \ \underline{r}_2
                  \end{array}\right]$$
Granting us with : $(q'_1, p, \tilde q_2)\in L_{i_2-1}$ and $(\tilde q_2, q'_1, a_{i_2}, q'_2)\in\Delta^r_c$.\\
Since $\Ientoo{k}{i_2}$ is wpa and $(q'_1, p, \tilde q_2)\in L_{i_2-1}$ we apply $(\text{IH}_1)$, there exists $p'$ such that $(q'_1, p, p')\in L_{k}$ and $\exec{p'}{\omega, \Ientoo{k}{i_2}}{\tilde q_2}{A}$. By construction of $\delta^r_c$, we have $p=p'$, and altogether:
$$\exec{q'_3}{\omega, \Ientoo{i_3}{K}}{q}{A}\ ;\ \exec{q'_1, q}{\omega, \Ientof{i_1}{k}, \Ientfo{K}{i_4}}{p,q_4}{A}\ ;\ \exec{p}{\omega, \Ientoo{k}{i_2}}{\tilde q_2}{A}$$
Yields:
$$\exec{q'_1, q'_3}{\omega, \Ientof{i_1}{i_2}, \Ientfo{i_3}{i_4}}{\tilde q_2,q_4}{A}$$
Since $\psi_4$ gave us : $(q_1, a_{i_1}, q'_1)\in\Delta^c_c$, $(q_3, q'_2, a_{i_3}, q'_3)\in\Delta^c_r$, $(q_4, q'_3, q'_1, a_{i_4}, q'_4)\in\Delta^r_r$ and since $(\tilde q_2, q'_1, a_{i_2}, q'_2)\in\Delta^r_c$ those transitions combined with the afore mentioned ``inner'' run allows us to conclude the proof of 
Lemma~\ref{determinisation non wpa}.
\end{proof}

\section{MSO-definability of regular languages of 2-wave words}

\begin{theorem}
For all MSO-sentence $\phi$ over $(\Sigma, <, M_1, M_2)$, the set  $L(\phi)$ of all 2-wave words $(u,M_1,M_2)$ such that 
$(u,M_1,M_2) \models \phi$ is regular. 
\end{theorem}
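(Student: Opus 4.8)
The plan is to follow the standard Büchi--Elgot--Trakhtenbrot argument, by structural induction on the formula $\phi$, using the closure properties of Proposition~\ref{proposition:closure} together with closure under complementation (Proposition~\ref{prop:closureComplement}). To cope with free variables I would fix, for a subformula $\psi$ whose free variables are among $x_1,\dots,x_p\in\mathcal{V}_1$ and $X_1,\dots,X_q\in\mathcal{V}_2$, an extended alphabet $\Sigma_\psi=\Sigma\times\{0,1\}^{p+q}$, where the extra bits of the letter at a position record, for each free variable, whether that position is the (unique) witness of the first-order variable, respectively belongs to the second-order variable. A key observation is that the call/return/internal status of a position, for both $M_1$ and $M_2$, depends only on the matchings and not on the letter; hence a $2$-nested word over $\Sigma_\psi$ has exactly the same matching structure as its projection over $\Sigma$, so the notion of $2$-wave word and all our automaton constructions lift verbatim to $\Sigma_\psi$. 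The statement I would prove by induction is that, for every $\psi$, the set of $2$-wave words over $\Sigma_\psi$ encoding a pair (word, valuation) that satisfies $\psi$ is a regular language of $2$-wave words.

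For the base cases, I would first intersect with the regular set of \emph{well-formed} encodings, namely those in which every first-order bit marks exactly one position; this constraint is definable by a finite-state (hence $2$-nested word) automaton that ignores the matchings. The atomic formulas $Q_a(x)$, $x<y$ and $X(x)$ are then recognized by automata that merely inspect the extended letters. The only atomic case using the nesting is $M_i(x,y)$: here a \tnwa\ checks that the position marked $x$ is a call of $M_i$ and that the position marked $y$ is its matching return. The automaton records, along the $M_i$-arch, the obligation to meet a $y$-marked position at the matching return, which is exactly what the hierarchical states of a \tnwa\ are for. Each base language is therefore regular.

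For the inductive step I use Boolean and projection closure. A conjunction $\psi_1\wedge\psi_2$ is handled by pulling both languages to a common extended alphabet via reciprocal images of the bit-forgetting morphisms (Proposition~\ref{proposition:closure}) and then intersecting them (Proposition~\ref{proposition:closure}). A negation $\neg\psi$ is handled by Proposition~\ref{prop:closureComplement}: I take the complement \emph{relative to} the regular set of well-formed $2$-wave-word encodings over $\Sigma_\psi$, by complementing and re-intersecting with that set. Existential quantification $\exists x\,\psi$, and likewise $\exists X\,\psi$, amounts to erasing the corresponding bit of every letter, which is a non-erasing alphabetic morphism $h\colon\Sigma_\psi\to\Sigma_{\exists x\,\psi}$; its direct image preserves regularity by Proposition~\ref{proposition:closure}. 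Applying these operations from the leaves up, the closed formula $\phi$ yields $L(\phi)$ as a regular language of $2$-wave words over $\Sigma$.

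The main obstacle is the negation step: closure under complementation for regular languages of $2$-wave words is far from elementary and rests entirely on the determinization result (Theorem~\ref{determinisation 2-vagues}) via Proposition~\ref{prop:closureComplement}. This is precisely where the restriction to $\ww_2(\Sigma)$ is essential, since over arbitrary $2$-nested words complementation fails. A secondary technical point is that complementation and projection must always be carried out relative to, and keep us within, the MSO-definable regular set of well-formed $2$-wave-word encodings, so that the invariant ``regular language of $2$-wave words'' is preserved along the whole induction; this is the reason I systematically re-intersect with that regular set after each complementation, and why I rely on the coincidence of the matching structure over $\Sigma_\psi$ and over $\Sigma$.
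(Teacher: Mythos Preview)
Your proposal is correct and follows essentially the same approach as the paper's proof: structural induction on the formula using the extended alphabet $\Sigma\times\{0,1\}^n$ to encode valuations, with the closure properties of Proposition~\ref{proposition:closure} for the Boolean connectives and projection, and Proposition~\ref{prop:closureComplement} (hence the determinization Theorem~\ref{determinisation 2-vagues}) for negation. The only cosmetic difference is that the paper first passes to the equivalent $\mathrm{MSO}_0$ fragment (second-order variables only, with a $\mathrm{Sing}(X)$ predicate), thereby avoiding your explicit well-formedness constraint for first-order variables; both treatments are standard and interchangeable.
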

\begin{proof}
We follow the proofs of \cite{DBLP:reference/hfl/Thomas97} and  \cite{DBLP:journals/jacm/AlurM09} for MSO over words and nested words. 
In order to simplify the proof, we consider only the restricted but equivalent  MSO$_0$-logic in where first-order variables are cancelled: we add the atomic formula $\textup{Sing}(X)$ meaning $X$ is a singleton and replace each atomic formula $A(x_1,\ldots, x_n, Y_1,\ldots, Y_m)$ by $A(X_1,\ldots, X_n, Y_1,\ldots, Y_m)$ meaning that each $X_{i}$ is a singleton $\{x_i\}$ and $A(x_1,\ldots, x_n, Y_1,\ldots, Y_m)$ holds. 

A model of a formula $\phi(X_1,\ldots, X_n)$  (that is, whose free variables are   $X_1,\ldots, X_n$) is given by a pair consisting of:
\begin{itemize}
\item a 2-wave word over the alphabet $\Sigma$ : $(a_1\cdots a_k, M_1,M_2)$, 
\item a valuation fonction, associating each $X_i$  to a subset  $I_i \subseteq [k]$, for $i=1,\ldots , n$. 
\end{itemize}
 We encode a such a model by  the 2-wave word $(v, M_1,M_2)$ over the alphabet $\Sigma \times \{0,1\}^n$ :  $v=(a_1, \beta_1) \cdots (a_k,\beta_k)$ where  for all $i\in [1,k]$, $\beta_i= b_1\cdots b_n$ and   for all $j\in [1,n]$, $b_j=1$ iff $i \in I_j$.
  
We denote by $L(\phi)$ the set of all 2-wave words satisfying $\phi$, and  show by structural induction that for all $\phi$, $L(\phi)$ is regular.  For atomic formulas $X \subseteq Y, \textup{Sing}(X), X < Y,  Q_a(X),  M_1(X,Y), M_2(X,Y)$ we proceed as for finite automata (see \cite{DBLP:reference/hfl/Thomas97}). 
For example, 
 for $M_\epsilon(X,Y)$, $\epsilon=1,2$,  it suffices to construct an automaton recognizing every 2-nested words $(v,M_1,M_2)$, $v \in \Sigma\times \{0,1\}^2$ such that there are exactly two positions $i$ and $j$ such that $\pi_2(v[i])=1$   and $\pi_3(v[j])=1$, and $(i,j)\in M_\epsilon$ (here $\pi_i$ denoted the projection on the $i$-th component, then for a letter $\alpha=(a, b_0b_1)$, $\pi_2(\alpha)$ is $b_0$ and  $\pi_3(\alpha)$ is $b_1$).

Disjunction, negation, and conjonction follow for the closure properties of regular \tnwl.  Also, if $\phi=\exists X_i \psi(X_1,\ldots X_n)$, models of $\phi$ are obtained by deleting the $i+1$-th component, that is, by applying a non erasing alphabetic morphism.  
\end{proof}



\section{Proofs of Section~\ref{sec:decision}}

\lemmatw*
\begin{proof}(sketch) Following rules of  grammar introduced for visibly 2-wave words (Lemma~\ref{lemma:twowaves}), we show by induction the following property:
\begin{itemize}
\item if $I=\interv{x_1,y_1}$ is wpa, $\omega_{|I}$ admits a tree decomposition of width $\leq 11$ containing a node $\{x_1,y_1\}$.
\item if $(I_1=\interv{x_1,y_1},I_2=\interv{x_2,y_2})$ are matched intervals, then
the graph corresponding to
$(\omega_{|I_1},\omega_{|I_2})$ 
admits a tree decomposition
of width $\leq 11$ containing 
a node $\{x_1,x_2,y_1,y_2\}$.
\end{itemize} 

Here  $\omega_{|I}$ refers to the graph induced by $\omega$ restricted to nodes in $I$. From rules of the grammars: 
\begin{itemize} 
\item either $I=(\interv{x_1,y_1},\interv{x_2,y_2})$ is wpa because  $I'=(\interv{x_1+1,y_1-1},\interv{x_2+1,y_2-1})$ is wpa and $x_1,y_1,x_2,y_2$ is a 2-wave. Then, let $t'=(V',E')$ be  the tree decomposition of $\omega_{|I'}$ given by induction hypothesis, $t=(V'\cup \{v_1,v_2\}, E' \cup \{(v_1,v_2),(v_2,v')\})$ is a tree decomposition of $\omega_{|I}$ of width less than (or equal to) 11, where: 
\begin{itemize}
\item $v_1=\{ x_1,y_1,x_2,y_2\}$ 
\item $v_2=\{ x_1,y_1,x_2,y_2,x_1+1,y_1-1,x_2+1,y_2-1\}$  (we deal here with the case where $I'$ is a pair of non-empty intervals)
\item $v'$ is the node $\{x_1+1,y_1-1,x_2+1,y_2-1\} \in V'$. 
\end{itemize}
\item Otherwise, every non trivial $I$ is constructed from  $I_1, ..., I_m$, where each $I_i$ denotes a wpa interval 
or a pair of matched intervals. From induction hypothesis each $\omega_{|I_i}$ admits a tree decomposition $t_i=(V_i,E_i)$ of width $\leq 11$. Then $t=(V'\cup \{v,v'\}, E' \cup \{(v,v')\} \cup \{(v',v_i)\}_{i\in [m]})$ is a tree decomposition of $\omega_{|I}$ of width less than (or equal to) 11, where: 
\begin{itemize}
\item $v$ is the set of all bounds of $I$
\item $v_i \in V_i$ is the set of all bounds of $V_i$, for $i\in [m]$
\item $v'$ is the union of all $v_i$, for $i\in [m]$ (it necessarily also contains $v$). 
\end{itemize}
Let us check that in all cases, $|v|,|v'| \leq 12$: 
\begin{itemize}
\item (rule $w_1w_2$) $I=\interv{x_1,y_1}$ , $I_1=\interv{x_1, y_1'}$ and $I_2=\interv{y'_1+1, y_1}$, then 
$v=\{x_1,y_1\}$ and $v'=\{x_1,y_1,y_1', y'_1+1\}$;
\item (rule $xwy$) $I=\interv{x_1,y_1}$ , $I_1=(\interv{x_1, y_1'},  \interv{x_1', y_1} )$ and $I_2=\interv{y'_1+1, x_1'-1}$, then 
$v=\{x_1,y_1\}$ and $v'=\{x_1,y_1,y_1',x_1', y'_1+1, x_1'-1\}$;
\item rule ($x_1x_2,y_1y_2$) $I$ is composed from two pairs of intervals, then $|v|=4$ and $|v' |= 8$;
\item rule ($w_1x w_1', w_2 y w_2'$) $I$ is composed of a pair of intervals and of four single intervals,  then $|v|=4$ and $|v' |= 12$.  
\end{itemize}
\end{itemize}
%
%
\end{proof}

\section{Proofs of Section~\ref{sec:discussion}}

\paragraph*{Proof of Proposition \ref{prop:wavesToLIL}}

We prove here that removing matching relations to  regular 2-wave word languages yields linear indexed languages (\lil). 

Let us recall basic notions on Dyck words, and fix notations: 

Given an alphabet $\Gamma$, we denote by $\widebar{\Gamma}$ the  disjoint copy $\widebar \Gamma = \set{ \widebar a \mid a \in \Gamma}$.  			
	We adopt the following conventions:  $\widebar {\widebar {a}} = a$ for all $a\in \Gamma$, $\widebar{\eps}=\eps$ and  for any word  $u=\alpha_1\cdots \alpha_n \in (\Gamma \cup \widebar{\Gamma})^*$,  
	$\widebar{u}=\widebar{\alpha_n} \cdots \widebar{\alpha_1}$.  
	
\begin{definition}(Dyck words) Given an alphabet $\Gamma$, the set $\dyck_{\Gamma}$ of Dyck words over $\Gamma$ is the set of all words $u \in (\Gamma \cup \widebar{\Gamma})^*$ such that
$u \in \dyck_{\Gamma}$
\begin{itemize}
\item either $u=\epsilon$,  
\item or there exists $u_1, u_2 \in \dyck_{\Gamma}$ and $a\in \Gamma$ such that $u=au_1\widebar{a}u_2$.   	
\end{itemize}
\end{definition} 

Note that this inductive definition is non-ambiguous: if  $u=au_1\widebar{a}u_2$ and  $u=au_1'\widebar{a}u_2'$ and $u,u_1u_2,u_1',u_2' \in \dyck_{\Gamma}$), then 	$u_1=u_1'$ and $u_2=u_2'$. 
	
%
%

Dyck words are closely related to nested words:  

\begin{lemma} 
Let $u=a_1\cdots a_n \in \dyck_{\Gamma}$, and $M$ be the relation defined for all $i<j$ by $M(i,j) \Leftrightarrow  (a_i =\widebar{a_j}$ and $a_{i+1} \cdots a_{j-1} \in \dyck_{\Gamma})$. Then $(u,M)$ is a nested word whose each position is a call or a return.   
\end{lemma}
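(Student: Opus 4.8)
The plan is to argue by structural induction on the unambiguous Dyck decomposition recalled just above. For $u=\eps$ the relation $M$ is empty, which is a matching relation of length $0$ with no positions to classify. For the inductive step, write $u = a\,u_1\,\widebar a\,u_2$ with $a\in\Gamma$ and $u_1,u_2\in\dyck_\Gamma$, and set $p=|u_1|$, so that position $1$ carries $a$, positions $2,\dots,p+1$ carry $u_1$, position $p+2$ carries $\widebar a$, and positions $p+3,\dots,n$ carry $u_2$. Write $M_1,M_2$ for the relations attached by the lemma to $u_1$ and $u_2$. The heart of the argument is the claim that $M$ splits into exactly three pieces: the single pair $(1,p+2)$, the copy of $M_1$ shifted by $1$ on the block $\{2,\dots,p+1\}$, and the copy of $M_2$ shifted by $p+2$ on the block $\{p+3,\dots,n\}$.

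Two of these pieces are routine. If $2\le i<j\le p+1$, then the letters $a_i,a_j$ and the factor $a_{i+1}\cdots a_{j-1}$ coincide with the corresponding data inside $u_1$, so $M(i,j)\iff M_1(i-1,j-1)$; the same reindexing gives the analogue on the $u_2$-block. Moreover $a_1=\widebar{a_{p+2}}$ and $a_2\cdots a_{p+1}=u_1\in\dyck_\Gamma$, so $(1,p+2)\in M$. Once the splitting is established, the conclusion follows mechanically: by the induction hypothesis $M_1$ and $M_2$ are matching relations whose positions are all calls or returns, so each of axioms 1--3 for $M$ is checked blockwise (no pair mixes indices from two different pieces, which is precisely what forbids crossings and double matchings), and every position of $u$ is a call or a return, since position $1$ is a call, position $p+2$ a return, and every other position inherits its status from $u_1$ or $u_2$.

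The real work, and the main obstacle, is showing that no \emph{spurious} pair arises, i.e.\ that no match crosses a block boundary: position $1$ is matched to nothing but $p+2$, and no index of the $u_1$-block is matched to $p+2$ or into the $u_2$-block. For this I would first isolate three standard facts about Dyck words, each proved by a short induction on the decomposition: (a) every prefix of a word in $\dyck_\Gamma$ has nonnegative height $h(k)$, counting opening minus closing letters among $a_1\cdots a_k$, and a factor lying in $\dyck_\Gamma$ must be balanced with nonnegative partial heights relative to its own start; (b) a nonempty Dyck word begins with a letter of $\Gamma$ and ends with a letter of $\widebar\Gamma$; and (c) the quotient property, that if $xy\in\dyck_\Gamma$ and one of $x,y$ is in $\dyck_\Gamma$, then so is the other. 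From the decomposition one also records $h(k)\ge 1$ for $1\le k\le p+1$ and $h(p+2)=0$.

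These facts kill every cross match. A putative pair $(i,j)$ with $i$ in the $u_1$-block and $j\ge p+2$ would require $a_{i+1}\cdots a_{j-1}\in\dyck_\Gamma$, yet this factor reaches position $p+2$ at relative height $h(p+2)-h(i)=-h(i)\le -1$, contradicting (a). A pair $(i,p+2)$ with $i\ge 2$ would force the suffix $a_{i+1}\cdots a_{p+1}$ of $u_1$ into $\dyck_\Gamma$ while $a_i=a\in\Gamma$; by (c) the complementary prefix $a_2\cdots a_i$ would then be a nonempty Dyck word ending in the opening letter $a$, contradicting (b). The same combination of (a)--(c) rules out $(1,k)$ for $k\ne p+2$, and a pair whose smaller index lies in the $u_2$-block cannot escape that block, as there is no later block. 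Assembling the three pieces yields the splitting, and with it the lemma. I expect the delicate point to be exactly this boundary bookkeeping: ensuring the Dyck-factor clause in the definition of $M$ is neither too permissive (spurious cross matches) nor disturbed by reindexing, which is why facts (a)--(c) are separated out first.
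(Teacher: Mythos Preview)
The paper does not actually prove this lemma; it is stated as a folklore link between Dyck words and nested words, with no accompanying argument. Your structural induction along the unambiguous decomposition $u=a\,u_1\,\widebar a\,u_2$ is the standard and correct route.

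There is, however, one case missing from your cross-block analysis: pairs $(p+2,j)$ with $j\ge p+3$, i.e.\ with smaller index at the $\widebar a$ position. Position $p+2$ is not in the $u_2$-block, so your last clause (``a pair whose smaller index lies in the $u_2$-block cannot escape that block'') does not cover it, and none of your other cases does either. In fact this gap exposes a small defect in the lemma as stated: with the paper's convention $\widebar{\widebar a}=a$, the condition $a_i=\widebar{a_j}$ is symmetric, so for $u=a\widebar a a\widebar a$ (hence $p=0$) one gets $a_2=\widebar a=\widebar{a_3}$ with the empty intermediate factor Dyck, whence $(2,3)\in M$; then position $2$ is simultaneously a return (via $(1,2)$) and a call (via $(2,3)$), violating axiom~2. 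The intended definition of $M$ should additionally require $a_i\in\Gamma$ (equivalently $a_j\in\widebar\Gamma$). Under that reading your missing case is immediate, since $a_{p+2}=\widebar a\notin\Gamma$, and the rest of your argument goes through. You should flag the correction explicitly.

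One minor wording slip: in your first cross case you write $j\ge p+2$, but the height argument needs the factor $a_{i+1}\cdots a_{j-1}$ to reach position $p+2$, i.e.\ $j\ge p+3$; you do treat $j=p+2$ separately just after, so this is only notational.
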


Rather than considering linear indexed grammars, we will  use an  homomorphic characterization of linear indexed languages. Such characterizations of languages are  very common, for example,  the Chomsky-Sch\"utzenberger theorem states that context-free languages are exactly languages $h(R \cap \dyck_{\Gamma})$ where $h$ is a morphism, $R$ a regular set, and $\Gamma$ an alphabet. Similarly, it is proved in \cite{DBLP:journals/iandc/FrataniV19} that indexed languages are exactly languages  $h(R \cap \dyck_{\Gamma} \cap g^{-1}(\dyck_{\Gamma'}))$, where  $R,h$  
as above, and $g$ is a \emph{symmetric} morphism, that is, $g$ is alphabetic, and  $g(\widebar{a}) = \widebar{g(a)}$. 

Remark that from a given a word $u$ in $\dyck_{\Gamma} \cap g^{-1}(\dyck_{\Gamma'})$, we can construct a $\tnw$ $(u,M_1,M_2)$ where arches in $M_2$ correspond  to $\dyck_\Gamma$, and $M_1$ to  $g^{-1}(\dyck_{\Gamma'})$.

A $\tnw$ such constructed is in fact a \ww. Regarding linear indexed languages, there exists also a homomorphic characterization of the form $h(R \cap \dyck_{\Gamma} \cap g^{-1}(\dyck_\Gamma))$. The morphism $g$ is not exactly symmetric since a counter is added to letters in order to constrain the length of waves: 

\begin{theorem}\cite{w88}
A language $L$ is linear indexed iff there exists an alphabet $A$, a regular language $R$ and a morphism $h$  such that
$$ L= h(R \cap \dyck_{\Gamma_A} \cap   g_A^{-1}(\dyck_{\Gamma_A})),$$
where : $\Gamma_A= A_1 \cup A_2$, $A_i=\{(a,i) \mid a\in A\}$, for $i=1,2$ and $g_A$ is the morphism defined by 
\begin{minipage}{0.6 \textwidth}
$$\begin{array}{lll} 
& (a,1) \mapsto  (a, 1) & \widebar{(a,1)} \mapsto  \widebar{(a,2)}\\
g_A: & & \\
& (a,2) \mapsto \widebar{(a,1)} &   \widebar{(a,2)} \mapsto  {(a,2)}\\
\end{array}$$
\end{minipage}
\begin{minipage}{0.4 \textwidth}
\scalebox{0.7}{
\begin{tikzpicture}
\tikzset{node distance=1.7cm,}
\node (1) {};
\node[right of = 1, xshift =-1cm ] (2) {$(a,1)$};
\node[right of = 2] (3) {$(a,2)$};
\node[right of = 3] (4) {$\widebar{(a,2)}$};
\node[right of = 4] (5) {$\widebar{(a,1)}$};

\node [above of = 1, yshift = -1cm] (21) {};
\node[above of = 2, yshift = -1cm] (22) {$(a,1)$};
\node[above of = 3,yshift = -1cm] (23) {$\widebar{(a,1)}$};
\node[above of = 4, yshift = -1cm] (24) {${(a,2)}$};
\node[above of = 5, yshift = -1cm] (25) {$\widebar{(a,2)}$};

 Dessin des arches
\draw[-] (2.south) edge[bend right=90] (5.south);
\draw[-] (3.south) edge[bend right =90] (4.south);
\draw[-] (22.north) edge[bend left =90] (23.north);
\draw[-] (24.north) edge[bend left =90] (25.north);
\draw[->] (1) edge node[left]{$g_A$}  (21);
\end{tikzpicture}}
\end{minipage}
\end{theorem}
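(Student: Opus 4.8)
The plan is to prove the two implications separately, using as a bridge the two matchings that the Dyck conditions induce: the direct condition $\dyck_{\Gamma_A}$ gives the matching $\cplii$, while $g_A^{-1}(\dyck_{\Gamma_A})$ gives $\cpli$, and the whole role of the counter-augmented morphism $g_A$ is that together they force exactly the $2$-wave shape. I would first record the semantic anchor of the proof: on the elementary block $(a,1)(a,2)\widebar{(a,2)}\widebar{(a,1)}$ the direct Dyck matching pairs positions $\{(1,4),(2,3)\}$, a support arch and a bottom arch for $\cplii$, whereas applying $g_A$ yields $(a,1)\widebar{(a,1)}(a,2)\widebar{(a,2)}$, whose Dyck matching pairs $\{(1,2),(3,4)\}$, the two top arches for $\cpli$. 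This is precisely a $2$-wave, and more generally every wave produced by $g_A$ has length $2$ because the counter in $\{1,2\}$ cannot be iterated further.

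For the direction from linear indexed languages to the homomorphic form, the plan is to encode derivation trees. I would first put the generating linear indexed grammar $G$ into a Greibach-style normal form, so that each production reads and writes a bounded amount of information and emits its terminal at a well-identified place. A derivation then carries two bracketings: one recording the nesting of production applications, which I route to $\cplii$ (the direct condition $\dyck_{\Gamma_A}$), and one pairing each stack push with the matching pop along the unique spine on which the index stack is transmitted, which I route to $\cpli$ (the condition $g_A^{-1}(\dyck_{\Gamma_A})$). Linearity is crucial here: since a copy production $X\to X^\bullet Y$ or $X\to XY^\bullet$ transmits the stack to a single child, a pushed symbol is never duplicated before being popped, so the arch following it stays of length $2$, exactly the constraint encoded by the counter of $g_A$. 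I would then take $A$ to encode the grammar data, let $R$ be a regular language checking local consistency of consecutive encoded symbols against the productions of $G$, and let $h$ erase the structural symbols and keep the terminal yield; the two inclusions $L(G)\subseteq h(R\cap \dyck_{\Gamma_A}\cap g_A^{-1}(\dyck_{\Gamma_A}))$ and its converse are then verified by reading derivations off accepted words.

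For the reverse implication I would argue by closure properties. First I would exhibit an explicit linear indexed grammar generating the core language $\dyck_{\Gamma_A}\cap g_A^{-1}(\dyck_{\Gamma_A})$: its context-free backbone produces the outer $\dyck_{\Gamma_A}$ bracketing while the index stack tracks the inner $g_A$-Dyck nesting, with the counter forbidding waves longer than $2$, which matches the single-child transmission of copy productions. Then, using that linear indexed languages form a full AFL, in particular are closed under intersection with regular languages and under homomorphic images, I conclude that $h(R\cap \dyck_{\Gamma_A}\cap g_A^{-1}(\dyck_{\Gamma_A}))$ is again linear indexed.

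The main obstacle is the forward direction: finding a normal form clean enough that a purely local, hence regular, consistency check together with the two Dyck conditions reconstructs all valid derivations and no spurious ones, and proving that the counter of $g_A$ pins the wave length to exactly $2$ rather than merely bounding it. The delicate bookkeeping is to show that the support and bottom arches of $\cplii$ terminate at the same return position as the two top arches of $\cpli$, so that the encoded stack discipline coincides with the transmission of the index along a single spine of the derivation tree.
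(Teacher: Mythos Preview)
The paper does not prove this theorem: it is stated with the citation \cite{w88} (Weir's 1988 thesis) and taken as a known result from the literature, then used as a tool in the proof of Proposition~\ref{prop:wavesToLIL}. There is therefore no ``paper's own proof'' to compare your proposal against.

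That said, your outline is broadly consistent with how such characterizations are established, and in particular your reverse direction (exhibit the core language as linear indexed, then invoke AFL closure under intersection with regular sets and homomorphic image) is exactly the strategy the paper itself deploys in its Lemma~\ref{lemma:weir} for a closely related statement, citing the closure properties of~\cite{DBLP:journals/tcs/DuskeP84}. Your semantic anchor---that on the block $(a,1)(a,2)\widebar{(a,2)}\widebar{(a,1)}$ the two Dyck conditions yield precisely the four arches of a $2$-wave---is also the right picture, and the paper draws the same diagram next to the theorem statement.

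The forward direction you sketch is where the real work lies, and you correctly flag it as the obstacle. Your appeal to a ``Greibach-style normal form'' for linear indexed grammars is the standard move, but note that the paper itself remarks (just before Proposition~\ref{prop:wavesToLIL}) that it does not know whether linear indexed grammars admit a double Greibach normal form; this is why the paper only proves one inclusion in Proposition~\ref{prop:wavesToLIL} rather than an equality. So if you intend to actually carry out the forward direction, you should be aware that the normal-form step is nontrivial and would need to be sourced from Weir's thesis or worked out explicitly, not assumed.
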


%
%
%
%

To ease the proof of Proposition \ref{prop:wavesToLIL}, we will use a slightly different morphism that erases some letters.

\begin{definition}\label{def:morph}
Let $A,B$ be two disjoint alphabets, we fix  $A_i=\{(a,i) \mid a\in A\}$, for $i=1,2$, and $\Gamma_{A,B}=A_1\cup A_2 \cup B$.
Then  $g_{A,B}: \Gamma_{A,B}^* \rightarrow (A_1\cup A_2\cup \widebar{A_1}\cup \widebar{A}_2)^*$ is the morphism defined for all $a\in A$, $b\in B$  by: 
$$\begin{array}{lccccccc} 
& (a,1) &\mapsto&  (a, 1) &  & \widebar{(a,1)} &\mapsto&  \widebar{(a,2)}   \\
g_{A,B} : & (a,2) &\mapsto& \widebar{(a,1)} &  & \widebar{(a,2)} &\mapsto&  {(a,2)}  \\
& b &\mapsto& \eps & & \widebar{b} &\mapsto& \eps\\
\end{array}$$
\end{definition}

\begin{lemma}\label{lemma:weir} If a language $L$ can be written $$ L= h(R \cap \dyck_{\Gamma_{A,B}} \cap   g_{A,B}^{-1}(\dyck_{\Gamma_{A,B}})),$$
for some disjoint alphabets $A,B$, regular language $R$ and  morphism $h$,  then $L$ is a linear indexed language. 
\end{lemma}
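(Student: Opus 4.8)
The plan is to reduce the given representation, which uses the \emph{erasing} morphism $g_{A,B}$, to one of the exact shape required by Weir's theorem (\cite{w88}), which uses the \emph{non-erasing} symmetric-with-counter morphism $g_{A'}$ for a suitable alphabet $A'$. Writing $K = R \cap \dyck_{\Gamma_{A,B}} \cap g_{A,B}^{-1}(\dyck_{\Gamma_{A,B}})$ so that $L = h(K)$, it suffices to exhibit an alphabet $A'$, a regular language $R'$ and a morphism $h'$ with
$$L = h'\bigl(R' \cap \dyck_{\Gamma_{A'}} \cap g_{A'}^{-1}(\dyck_{\Gamma_{A'}})\bigr),$$
since \cite{w88} then immediately yields that $L$ is linear indexed.

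First I would set $A' = A \cup B$ (a legitimate alphabet, as $A$ and $B$ are disjoint), so that $\Gamma_{A'} = A'_1 \cup A'_2$ now also provides counted copies $(b,1),(b,2)$ of each $b \in B$. The heart of the argument is an injective coding morphism $\phi$ from $(\Gamma_{A,B}\cup\widebar{\Gamma_{A,B}})^*$ to $(\Gamma_{A'}\cup\widebar{\Gamma_{A'}})^*$ that leaves the $A$-letters untouched and unfolds each plain $B$-bracket into a properly nested pair: $\phi(b) = (b,1)(b,2)$ and $\phi(\widebar b) = \widebar{(b,2)}\,\widebar{(b,1)}$. Since openings are sent to openings and closings to matching closings in LIFO order, a routine structural induction gives $\mathrm{Range}(\phi) \cap \dyck_{\Gamma_{A'}} = \phi(\dyck_{\Gamma_{A,B}})$; that is, $\phi$ transports the first Dyck condition faithfully.

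The key step — and the main obstacle — is the second Dyck condition, where the erasing of $B$ by $g_{A,B}$ clashes with the length-preserving nature of $g_{A'}$. The point is that $\phi$ is designed precisely so that $g_{A'}\circ\phi$ agrees with $g_{A,B}$ up to the insertion of harmless pairs: on $A$-letters one has $g_{A'}(\phi(\cdot)) = g_{A,B}(\cdot)$, whereas $g_{A'}(\phi(b)) = (b,1)\widebar{(b,1)}$ and $g_{A'}(\phi(\widebar b)) = (b,2)\widebar{(b,2)}$ are \emph{self-contained adjacent matched pairs} over the (disjoint) $B$-part of the alphabet. Hence $g_{A'}(\phi(w))$ is obtained from $g_{A,B}(w)$ by inserting such pairs, and I would invoke two elementary facts: inserting an adjacent matched pair over a fresh letter into a Dyck word preserves the Dyck property, and erasing a sub-alphabet's matched brackets from a Dyck word leaves a Dyck word over the remaining alphabet. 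Together these give the equivalence $g_{A'}(\phi(w)) \in \dyck_{\Gamma_{A'}} \iff g_{A,B}(w) \in \dyck_{\Gamma_{A,B}}$.

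Finally I would assemble the pieces. Injectivity of $\phi$ and the two equivalences give $\phi(K) = \phi(R) \cap \dyck_{\Gamma_{A'}} \cap g_{A'}^{-1}(\dyck_{\Gamma_{A'}})$, where $R' := \phi(R)$ is regular since regular languages are closed under morphisms, and the range constraint is absorbed because $\phi(R) \subseteq \mathrm{Range}(\phi)$. Defining $h'$ letterwise by $h'((a,i)) = h((a,i))$ (and symmetrically on the barred $A$-letters) and, on each $B$-codeword, concentrating the value on the first symbol ($h'((b,1)) = h(b)$, $h'((b,2)) = \eps$, and $h'(\widebar{(b,2)}) = h(\widebar b)$, $h'(\widebar{(b,1)}) = \eps$) ensures $h' \circ \phi = h$, so that $h'(\phi(K)) = h(K) = L$. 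This exhibits $L$ in Weir's form, and the lemma follows. The only delicate verification is the second-Dyck equivalence above; everything else is bookkeeping with injective codings and closure of regular languages under morphisms.
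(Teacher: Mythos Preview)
Your proposal is correct and follows essentially the same approach as the paper: both encode each $b\in B$ as the pair $(b,1)(b,2)$ (and $\widebar b$ as $\widebar{(b,2)}\,\widebar{(b,1)}$) so as to replace the erasing morphism $g_{A,B}$ by the non-erasing $g_{A\cup B}$ required by Weir's theorem. The paper phrases this via the decoding morphism $h_{A,B}$ (the left inverse of your $\phi$) and the regular range $R_{A,B}=\mathrm{Range}(\phi)$, first showing that $\dyck_{\Gamma_{A,B}}\cap g_{A,B}^{-1}(\dyck_{\Gamma_{A,B}})$ itself is \lil and then concluding for $L$ via closure of \lil under intersection with regular sets and morphic image; you instead push $R$ through $\phi$ in the same step and land directly in Weir's form, which is slightly more self-contained but otherwise the same argument.
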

\begin{proof}(Sketch)
One can prove that  $$g^{-1}_{A,B}(\dyck_{\Gamma_{A,B}}) \cap  \dyck_{\Gamma_{A,B}}=  
 h_{A,B}(R_{A,B} \cap g_{A\cup B}^{-1}(\dyck_{\Gamma_{A\cup B}}) \cap  \dyck_{\Gamma_{A\cup B}} ),$$ where 
 $R_{A,B} = (A_1 \cup A_2 \cup \widebar{A_1} \cup \widebar{A_2}  \cup \{(b,1)(b,2), \widebar{(b,2)}\widebar{(b,1)} \}_{b\in B})^*$, and 
$$\begin{array}{lcccccccr} 
h_{A,B} : & (a,i) &\mapsto& {(a,i)} &  & \widebar{(a,i)} &\mapsto&   \widebar{(a,i)} & \text{for $a\in A, i=1,2 $}  \\
& (b, 1) &\mapsto& b & & \widebar{(b,1)} &\mapsto& \widebar{b}& \text{for $b\in B$} \\
& (b, 2) &\mapsto& \epsilon & & \widebar{(b,2)} &\mapsto& \epsilon& \text{for $b\in B$} \\
\end{array}$$

Then  $g^{-1}_{A,B}(\dyck_{\Gamma_{A,B}}) \cap  \dyck_{\Gamma_{A,B}}$ is a \lil, and using closure properties of \lil \cite{DBLP:journals/tcs/DuskeP84}, so is $L$. 
\end{proof}

In the following we need a restricted version of \tnwa such that in any run,  all hierarchical states occurring in a cycle are equal. 

\begin{definition} A     \tnwa  $A=(Q, Q_0, Q_f, \Sigma, P, \Delta)$ is \emph{nice} if transitions satisfies: 
 $$\trans{x}{y} \subseteq \bigcup_{p\in P} Q \times \{ p\}^{\mathsf{in}_{x,y}}    \times \Sigma  \times \{ p\}^{\mathsf{out}_{x,y}}   \times Q, \ \mbox{ for all $x,y\in \{c,r\}.$}$$
\end{definition}

\begin{lemma}  
For every \tnwa $A$, there exists a nice  \tnwa $A'$ such that  $L_{ \ww_2}(A)=L_{ \ww_2}(A')$.
\end{lemma}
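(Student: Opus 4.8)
The plan is to leave the linear component of $A$ untouched and to bundle, into a single hierarchical state, the four hierarchical states that $A$ assigns to the four arches of one $2$-wave. Concretely, I would set $A'=(Q, Q_0, Q_f, \Sigma, P', \Delta')$ with $P' := P^4$, and arrange that within each $2$-wave $(i_1,i_2,i_3,i_4)$ all four arches carry the \emph{same} tuple $p=(s_1,s_2,s_3,s_4)$, whose components are meant to be the $A$-labels $h^1_{i_1}$ (top arch $(i_1,i_2)$), $h^2_{i_1}$ (support arch $(i_1,i_4)$), $h^2_{i_2}$ (bottom arch $(i_2,i_3)$) and $h^1_{i_3}$ (top arch $(i_3,i_4)$). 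The transition relation $\Delta'$ is then obtained by a guess-and-check pattern on these tuples:
\begin{align*}
\Delta'{}^c_c &:= \set{(q,a,p,p,q') \mid p=(s_1,s_2,s_3,s_4),\ (q,a,s_1,s_2,q')\in\Delta^c_c},\\
\Delta'{}^r_c &:= \set{(q,p,a,p,q') \mid p=(s_1,s_2,s_3,s_4),\ (q,s_1,a,s_3,q')\in\Delta^r_c},\\
\Delta'{}^c_r &:= \set{(q,p,a,p,q') \mid p=(s_1,s_2,s_3,s_4),\ (q,s_3,a,s_4,q')\in\Delta^c_r},\\
\Delta'{}^r_r &:= \set{(q,p,p,a,q') \mid p=(s_1,s_2,s_3,s_4),\ (q,s_4,s_2,a,q')\in\Delta^r_r},
\end{align*}
and $\Delta'{}^\text{int}_\text{int} := \Delta^\text{int}_\text{int}$ (the mixed internal cases are irrelevant below and may be left empty). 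By inspection, every transition in the four non-internal relations places the \emph{same} $p$ on all of its hierarchical slots, so $A'$ is nice.

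Before proving correctness I would record two structural facts about $2$-wave words, both immediate from the fact that every arch lies in a $2$-wave. First, every position is of one of the five types cc, rc, cr, rr or int-int, so only the five relations above are ever used. Second, every arch belongs to a \emph{unique} $2$-wave: the role of an arch (first or second top, support, or bottom) is fixed by the types of its two endpoints, and the remaining positions of its wave are then determined by the wave equations $i_4=\cplii(i_1)$, $i_3=\cplii(i_2)=\cpli^{-1}(i_4)$, and so on. Consequently the map sending each arch to the tuple of its wave is well defined.

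Correctness, namely $L_{\ww_2}(A)=L_{\ww_2}(A')$, then follows from a bijection between runs over a fixed $\omega\in\ww_2(\Sigma)$ that preserves the linear states, hence acceptance. In the forward direction, from a run $(\ell,h^1,h^2)$ of $A$ I label, for each wave $(i_1,i_2,i_3,i_4)$, all four of its arches with $p:=(h^1_{i_1},h^2_{i_1},h^2_{i_2},h^1_{i_3})$; each clause of $\Delta'$ is then satisfied exactly because the corresponding clause of $\Delta$ was. In the backward direction, from a run of $A'$ I first argue that within each wave all four arches carry one common tuple $p=(s_1,s_2,s_3,s_4)$: niceness forces pushed $=$ popped at $i_2$ and at $i_3$, equal pushes at $i_1$, and equal pops at $i_4$, which chains the four arch-labels together. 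I then read back $h^1_{i_1}:=s_1$, $h^2_{i_1}:=s_2$, $h^2_{i_2}:=s_3$, $h^1_{i_3}:=s_4$, and the $\Delta'$-clauses turn into the matching $\Delta$-clauses. These two maps are mutually inverse, giving the claimed equality of languages.

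The only delicate point is the backward direction: one must be sure that every run of $A'$ on a $2$-wave word is \emph{forced} to use a single tuple on all four arches of each wave, so that the decomposition into components is unambiguous. This is precisely where niceness is combined with the wave structure (the closed cycle of arches meeting at the common return-return position $i_4$), and it is also why the construction would not be correct on arbitrary $2$-nested words, whose arches need not form such cycles.
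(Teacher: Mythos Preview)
Your construction is correct and is essentially the same as the paper's: both take $P'=P^4$, guess the quadruple of hierarchical states of a $2$-wave at the call-call position, and verify components at the three remaining positions (the paper orders the tuple as $(h^1_{i_1},h^2_{i_2},h^1_{i_3},h^2_{i_1})$ rather than your $(h^1_{i_1},h^2_{i_1},h^2_{i_2},h^1_{i_3})$, but this is an inessential permutation). Your write-up is in fact more detailed than the paper's sketch, making explicit the structural facts about $2$-wave words and the chaining argument that forces all four arches of a wave to carry the same tuple in the backward direction.
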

\begin{proof}(Sketch)  From $A=(Q, Q_0, Q_f, \Sigma, P, \Delta)$, we construct a nice  \tnwa $A'$  whose hierarchical states belongs to $P^4$, and encode all four hierarchical states labelling a  2-wave in a run of $A$. Thus, these states has to be "guessed" 
 from the call-call transition.  
 Formally, we replace each transition : 
 \begin{itemize}
\item $(q, a, p_1, p_4, q')\in \trans{c}{c}$ by $\{ (q, a, (p_1,p_2,p_3,p_4),   (p_1,p_2,p_3,p_4), q') \mid p_2,p_3 \in P\}$
\item $(q, p_1, a, p_2, q')\in \ \trans{r}{c}$  by  $\{ (q, (p_1,p_2,p_3,p_4),  a,  (p_1,p_2,p_3,p_4), q') \mid p_3,p_4 \in P\}$
\item $(q, p_2, a, p_3, q')\in \trans{c}{r}$ by  $\{ (q, (p_1,p_2,p_3,p_4),  a,  (p_1,p_2,p_3,p_4), q') \mid p_1,p_4 \in P\}$
\item  $(q, a, p_3, p_4, q')\in \trans{r}{r}$ by  $\{ (q,    (p_1,p_2,p_3,p_4) , (p_1,p_2,p_3,p_4),a, q') \mid p_1,p_2 \in P\}$.
\end{itemize}
\end{proof}

Associated with Theorem \ref{MSO-waves}, the following property proves  Proposition \ref{prop:wavesToLIL}
\begin{proposition} Let L be a regular set of $\ww_2$, the language $\existsw (L) = \{u \mid (u,M_1,M_2) \in L, \mbox{ for some }M_1,M_2\}$  is a  \lil. 
\end{proposition}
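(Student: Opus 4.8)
The plan is to reduce to a nice \tnwa and then cast the word projection into the homomorphic shape handled by Lemma~\ref{lemma:weir}. Using the reduction to nice automata recalled above, I may assume $L = L_{\ww_2}(\mathcal{A})$ for a nice \tnwa $\mathcal{A}=(Q,Q_0,Q_f,\Sigma,P,\Delta)$. I first note that on $2$-wave words the only position types are call-call, return-call, call-return, return-return and internal-internal, so only the transition sets $\trans{c}{c}$, $\trans{r}{c}$, $\trans{c}{r}$, $\trans{r}{r}$ and $\Delta_{\text{int}}$ are relevant. The aim is to produce disjoint alphabets $A$, $B$, a regular language $R$ and a morphism $h$ with
$$\existsw (L) = h\big(R \cap \dyck_{\Gamma_{A,B}} \cap g_{A,B}^{-1}(\dyck_{\Gamma_{A,B}})\big),$$
after which Lemma~\ref{lemma:weir} concludes. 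Everything is guided by the correspondence recalled before the statement: in a word of $\dyck_{\Gamma} \cap g^{-1}(\dyck_{\Gamma'})$ the Dyck condition induces $\cplii$ (support and bottom arches), while $g^{-1}(\dyck)$ induces $\cpli$ (the two top arches), the combination being exactly a $2$-wave.

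First I would fix the encoding. Each $2$-wave of a run is written with the four letters $(a,1)$, $(a,2)$, $\widebar{(a,2)}$, $\widebar{(a,1)}$ at its call-call, return-call, call-return and return-return positions, where $A$ is the set of \emph{wave gadgets}: quadruples $(\tau_1,\tau_2,\tau_3,\tau_4) \in \trans{c}{c}\times\trans{r}{c}\times\trans{c}{r}\times\trans{r}{r}$ that are hierarchically compatible, meaning that the states pushed and popped along the four arches of the wave coincide; by niceness this amounts to sharing a single hierarchical state. Internal positions are written as adjacent pairs $b\widebar{b}$ with $b \in B := \Delta_{\text{int}}$. The morphism $h$ sends the four variant letters to the input symbols of $\tau_1,\dots,\tau_4$ respectively, sends $b$ to its input symbol and $\widebar{b}$ to $\epsilon$, so that $h$ restores the underlying word letter by letter and in order. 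Finally $R$ keeps one linear state in memory and checks, reading left to right, that the source recorded by each letter (in the field selected by its variant) equals the current state before updating it to the recorded target, that each $b$ is immediately followed by $\widebar{b}$, and that the run starts in $Q_0$ and ends in $Q_f$; all of these are plainly regular.

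Then I would verify the displayed equality by two inclusions. For the forward inclusion, an accepting run on some $\omega = (v,\cpli,\cplii) \in L$ is turned, reading positions in order, into a word $u$; non-crossing of $\cpli$ and $\cplii$ makes $u \in \dyck_{\Gamma_{A,B}}$ and $g_{A,B}(u) \in \dyck_{\Gamma_{A,B}}$, validity of the run gives $u \in R$, and $h(u)=v$. For the converse, starting from $u \in R \cap \dyck_{\Gamma_{A,B}} \cap g_{A,B}^{-1}(\dyck_{\Gamma_{A,B}})$, the two matchings that $\dyck_{\Gamma_{A,B}}$ and $g_{A,B}^{-1}(\dyck_{\Gamma_{A,B}})$ induce on the $A$-letters form a $2$-wave structure, while the erased adjacent $B$-pairs contribute only internal positions and leave this structure untouched; hence $h(u)$ is the word of a genuine $2$-wave word $\omega$. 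The state condition in $R$ turns the recorded states into a run of $\mathcal{A}$ on $\omega$, the sharing of a common $a$ that the two Dyck conditions force on the four positions of each wave makes their transitions agree on the common hierarchical state (so they realise a legal wave of the nice automaton), and the accept condition places $\omega$ in $L$; therefore $h(u) \in \existsw (L)$.

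I expect the main obstacle to be precisely this converse decoding: proving rigorously that $\dyck_{\Gamma_{A,B}} \cap g_{A,B}^{-1}(\dyck_{\Gamma_{A,B}})$, once the transparent $B$-pairs are contracted, yields exactly a $2$-wave structure, and in particular that the nesting cannot pair equally labelled positions coming from distinct waves. This is where the exact shape of $g_{A,B}$ and of the two-level counter capping wave length at $2$ must be exploited, leaning on the wave-characterization recalled before the statement. Once the equality is in place, Lemma~\ref{lemma:weir} gives at once that $\existsw (L)$ is a \lil.
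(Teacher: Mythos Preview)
Your proposal is correct and follows the same overall strategy as the paper: reduce to a nice \tnwa, exhibit $\existsw(L)$ in the shape $h(R \cap \dyck_{\Gamma_{A,B}} \cap g_{A,B}^{-1}(\dyck_{\Gamma_{A,B}}))$, and invoke Lemma~\ref{lemma:weir}. The difference lies only in the choice of encoding. The paper takes $A=P$ and $B=Q\cup\Sigma$, so that each transition is expanded into a short word over $\Gamma_{A,B}$; linear-state consistency is then enforced \emph{by the Dyck condition itself} (adjacent $q'\,\widebar{q}$ pairs must match), and the single hierarchical state of the wave---guaranteed by niceness---serves directly as the $A$-letter. You instead take $A$ to be the set of hierarchically compatible transition quadruples and $B=\Delta_{\text{int}}$, pushing linear-state consistency into the regular language $R$ and recovering hierarchical consistency from the fact that the two Dyck conditions force all four positions of a wave to carry the same gadget $a$. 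Both encodings work; the paper's is lighter (smaller $A$, and $R$ is almost trivial), while yours makes the run reconstruction in the converse direction slightly more direct. The point you flag as the main obstacle---that $\dyck_{\Gamma_{A,B}}\cap g_{A,B}^{-1}(\dyck_{\Gamma_{A,B}})$ really yields a $2$-wave structure---is exactly the content of the paper's Lemma~\ref{lem:sub2}, which the paper also treats rather tersely.
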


\begin{proof}
Let $A=(Q, Q_0, Q_f, \Sigma, P, \Delta)$ be a nice \nwa. 
Let  $\omega=(u,\cpli,\cplii)$ to be a 2-wave word of length $n$.  Remark that because $A$ is nice, in a given run $(\ell, h^1,h^2)$ of $A$ over $\omega$, $h^2$ is fully determined by $h^1$, because states labelling two arches having a common  extremity are equal. Then a such a run will be simply given by  $(\ell, h)$, where $h=(h_i)_{i \in [n]^{c} \cup [n]_{c} }$ is defined by  $h_i=h^1_i$ if $i \in [n]^{c}$ and  $h_i=h^2_i$ if $i \in [n]_{c}$. This definition is  consistent since $h^1_i=h^2_i$ if $i  \in [n]_{c}^{c}$.

Using notation introduced Definition \ref{def:morph}, 
we set $A=P$, $B=Q\cup \Sigma$,  and $\Gamma_{A,B} = A_1\cup A_2 \cup B$.  We define a morphism $\tau: \Delta^* \rightarrow  (\Gamma_{A,B} \cup  \widebar{\Gamma_{A,B}})^*$:  
\begin{itemize}
\item if $\delta= (q,a,q') \in \trans{ \text{int}}{ \text{int}}$ then $\tau(\delta) =\widebar{q}  a \widebar{a} q'$
\item if $\delta=(q, a, p, p, q') \in \trans{c}{c}$ then   $\tau(\delta) =\widebar{q} a \widebar{a} (p,1) q'$
\item if $\delta=(q, p, a, p, q') \in \trans{r}{c}$ then   $\tau(\delta) =\widebar{q} a \widebar{a} (p,2) q' $
\item if $\delta=(q, p, a, p, q') \in \trans{c}{r}$ then   $\tau(\delta) =\widebar{q}  a \widebar{a} \widebar{(p,2)} q' $
\item if $\delta=(q, p, p, a, q') \in \trans{r}{r}$ then   $\tau(\delta) =\widebar{q}  a \widebar{a}  \widebar{(p,1)} q' .$
\end{itemize}

For $q\in Q$, let  $R_q$  be the set words in $Q_0 \tau(\Delta^*)$  whose last letter is $q$. Clearly $R_q$  and $R=\bigcup_{q\in Q_f} R_q$ are regular languages and we consider the following two morphisms: 
\begin{itemize}
\item $f:(\Gamma_{A,B} \cup \widebar{\Gamma_{A,B}})^* \rightarrow \Sigma^*$  mapping every $a\in \Sigma$ to $a$ and all the other letters to $\varepsilon$ 
\item $g_{A,B}$ is the morphism given Definition \ref{def:morph}. 
\end{itemize}

The rest of the proof is dedicated to the the following result: $$\existsw(L_{\ww_2}(A)) =f(R \cap \dyck_{\Gamma_{A,B} } \cap g_{A,B}^{-1}(\dyck_{\Gamma_{A,B}} )).$$ According to Lemma \ref{lemma:weir}, this will prove that $\existsw(L_{\ww_2}(A))$ is a \lil. 
We split the proof in two subproofs: Lemma \ref{lem:sub1} shows that  $\existsw(L_{\ww_2}(A))  \subseteq f(R \cap \dyck_{\Gamma_{A,B} } \cap g_{A,B}^{-1}(\dyck_{\Gamma_{A,B}} ))$, and the converse inclusion follows from Lemma \ref{lem:sub2}.  

 \begin{lemma}\label{lem:sub1}
 For all $\omega=(u,M_1,M_2) \in L_{\ww_2}(A)$, there exists $w\in R \cap \dyck_{\Gamma} \cap g_{A,B}^{-1}(\dyck_\Gamma)$ such that $f(w)=u$. 
 \end{lemma}
 \begin{proof}

 Let  $\omega=(u,\cpli,\cplii)$ to be a 2-wave word of length $n$ and $s=(\ell,h)$ be an accepting  run of $A$ over $\omega$.  We associate  $s$ to the word  $w= \ell_0\tau(\delta_1 \cdots \delta_n)\widebar{\ell_n}$ where for all $i\in [n]$, $\delta_i$ is the transition applied wrt $s$. Obviously, $w \in R$ (more precisely,  $w \in R_{\ell_n}$ and ${\ell_n}\in Q_f$ because  $s$ is accepting) and $f(w)=u$.  Then, it remains to prove that $w \in \dyck_{\Gamma} \cap g_{A,B}^{-1}(\dyck_\Gamma)$.

 Let $s, f $ be indexes satisfying $1 \leq s \leq f \leq n$,  we consider the property $(H_{s,f})$: 
 \begin{enumerate}
 \item if $\interv{s,f}$ does not contain pending $M_2$ arches, then  $\ell_{s-1}\tau(\delta_{s} \cdots \delta_{f})\widebar{\ell_f}  \in \dyck_{\Gamma}$,
 \item  if $\interv{s,f}$ does not contain pending $M_1$ arches, then  $g_{A,B}(\ell_{s-1}\tau(\delta_{s} \cdots \delta_{f})\widebar{\ell_f}) \in \dyck_{\Gamma}$.
 \end{enumerate}
 In the case $s=1$, $f=n$,  $\ell_{s-1}\tau(\delta_{s} \cdots \delta_{f})\widebar{\ell_f}=w$ and  $\interv{1,n}$ is  without pending $M_1$-arches and   $M_2$-arches.  
 Then  $(H_{1,n})$ implies that $w \in \dyck_{\Gamma} \cap g_{A,B}^{-1}(\dyck_\Gamma)$.
 
Let us prove $(H_{s,f})$ by induction over $f-s$. We suppose that $u=a_1\cdots a_n$ and fix  $v=\ell_{s-1}\tau(\delta_{s} \cdots \delta_{f})\widebar{\ell_f}  $.\\ 
 (Base) If  $f=s$ and $s$ in internal (otherwise there are pending $M_1$-arches and  $M_2$-arches),  $v= \ell_{s-1}\tau(\delta_{s}) \widebar{\ell_{f}}= \ell_{s-1} \widebar{\ell_{s-1}} a_s\widebar{a_s} \ell_{f} \widebar{\ell_{f}} \in  \dyck_{\Gamma}$ and  $g_{A,B}(v)=\varepsilon \in \dyck_{\Gamma}$. \\
 (Induction) If $f>s$, we consider five cases according to $s$: 
 \begin{itemize}
 \item  if $s$ is internal, then $v=\ell_{s-1} \widebar{\ell_{s-1}} a_s\widebar{a_s} v'$ with $v' = \ell_{s} \tau(\delta_{s+1} \cdots \delta_{f})\widebar{\ell_f}$: 
  \begin{enumerate} 
 \item  if $\interv{s,f}$ does not contain pending $M_2$ arches, then  because $s$ is internal,  $\interv{s+1,f}$ does not contains pending $M_2$ arches. From hypothesis,  $v' \in \dyck_{\Gamma}$ then so is $v$. 
 \item  if $\interv{s,f}$ does not contain pending $M_1$ arches, then  because $s$ is internal,  $\interv{s+1,f}$ does not contains pending $M_1$ arches. From hypothesis,  $g_{A,B}(v') \in \dyck_{\Gamma}$, then  so is $g_{A,B}(v)=g_{A,B}(v')$. 
\end{enumerate}
\item if $s \in [n]^c_c$, there exists $i_2,i_3,i_4$ such that $s,i_2,i_3,i_4$ is a 2-wave, then there exists $p\in P$ such that: 
\begin{enumerate}
\item if  $\interv{s,f}$ does not contain pending $M_2$ arches, then $s < i_4 \leq f$, and $\interv{s+1,i_4-1}, \interv{i_4+1,f}$ do not contain pending $M_2$ arches. In addition $v$ can be decomposed in 
$$v= \ell_s \widebar{\ell_s} a\widebar{a} (p,1) v_1 a_{i_4} \widebar{a_{i_4}} \widebar{(p,1)}v_2,$$ where  $v_1=  \ell_{s+1}  \tau(\delta_{s+1} \cdots  \delta_{i_4-1})$ and  $v_2= \ell_{i_4}  \tau(\delta_{i_4+1} \cdots \delta_{f}) \widebar{\ell_f}$. \\ 
From induction hypothesis, $v_1$ and $v_2$ belong to $\dyck_{\Gamma}$, then so is 
$v$
\item   if  $\interv{s,f}$ does not contain pending $M_1$ arches, then $s < i_2 \leq f$, and $\interv{s+1,i_2-1}, \interv{i_2+1,f}$ do not contain pending $M_1$ arches. In addition $v$ can be decomposed in 
$$v= \ell_s \widebar{\ell_s} a\widebar{a} (p,1) v_1 a_{i_2} \widebar{a_{i_2}} {(p,2)}v_2,$$ where  $v_1=  \ell_{s+1}  \tau(\delta_{s+1} \cdots  \delta_{i_2-1})  \widebar{\ell_{i_2-1}}$ and $v_2= \ell_{i_2}  \tau(\delta_{i_2+1} \cdots \delta_{f}) \widebar{\ell_f}$. 
From induction hypothesis, $g_{A,B}(v_1)$ and $g_{A,B}(v_2)$ belong to $\dyck_{\Gamma}$, then so is $g_{A,B}(v)= (p,1) g_{A,B}(v_1)  \widebar{(p,1)}g_{A,B}(v_2)$. 
\end{enumerate}
\item if $s \in [n]^r_c$, then $\interv{s,f}$ contains pending $M_1$-arches. If it does not contain pending $M_2$-arches, there exists $s<i_3\leq f$ such that $M_2(s,i_3)$, $i_3 \in [n]^c_r$  and $\interv{s+1,i_3-1}, \interv{i_3+1,f}$ do not contain pending $M_2$ arches. In addition $v$ can be decomposed in 
$$v= \ell_s \widebar{\ell_s} a\widebar{a} (p,2) v_1 a_{i_3} \widebar{a_{i_3}} \widebar{(p,2)}v_2,$$ where  $v_1=  \ell_{s+1}  \tau(\delta_{s+1} \cdots  \delta_{i_3-1})  \widebar{\ell_{i_3-1}}$ and $v_2= \ell_{i_3}  \tau(\delta_{i_3+1} \cdots \delta_{f}) \widebar{\ell_f}$.  

From induction hypothesis, $v_1$ and $v_2$ belong to $\dyck_{\Gamma}$, then so is $v$. 

\item if $s \in [n]^c_r$ then $\interv{s,f}$ contains pending $M_2$-arches. If it does not contain pending $M_1$-arches, there exists $s<i_4\leq f$ such that $M_1(s,i_4)$, $i_4 \in [n]^r_r$  and $\interv{s+1,i_4-1}, \interv{i_4+1,f}$ do not contain pending $M_1$ arches. In addition $v$ can be decomposed in 
$$v= \ell_s \widebar{\ell_s} a\widebar{a} \widebar{(p,2)} v_1 a_{i_4} \widebar{a_{i_4}} \widebar{(p,1)}v_2,$$ where  $v_1=  \ell_{s+1}  \tau(\delta_{s+1} \cdots  \delta_{i_4-1})  \widebar{\ell_{i_4-1}}$ and $v_2= \ell_{i_4}  \tau(\delta_{i_4+1} \cdots \delta_{f}) \widebar{\ell_f}$.  

From induction hypothesis,  $g_{A,B}(v_1)$ and $g_{A,B}(v_2)$ belong to $\dyck_{\Gamma}$, then so is $g_{A,B}(v)= (p,2) g_{A,B}(v_1)  \widebar{(p,2)}g_{A,B}(v_2)$.  
\item  if $s \in [n]^r_r$, $\interv{s,f}$ contains pending $M_1$-arches and $M_2$-arches. 
 \end{itemize}
 \end{proof}

\begin{lemma}\label{lem:sub2}
For all $w\in R \cap \dyck_{\Gamma} \cap g_{A,B}^{-1}(\dyck_\Gamma)$, there exists a $\ww_2$ $\omega=(u,M_1,M_2)$ such that $f(w)=u$ and $u\in L(A)$.
\end{lemma}
\begin{proof}
Let $w\in R \cap \dyck_{\Gamma} \cap g_{A,B}^{-1}(\dyck_\Gamma)$. From definition of $R$, there exists $n\geq 0$ and transitions $\delta_1,\ldots, \delta_n$ such that  $w = q_0 \tau (\delta_1 \cdots \delta_n)q_f$, with $q_0\in Q_0$ and $q_f \in Q_F$.
 
For every non internal transition $\delta$, we denote by $\pi_2(\delta)$ the projection of $\tau(\delta)$ in $P_1\cup P_2 \cup \widebar{P_1} \cup \widebar{P_2}$, and by   $\pi_1(\delta)$  the word $g_{A,B}(\tau(\delta))$ (it also belongs to $(P_1\cup P_2 \cup \widebar{P_1} \cup \widebar{P_2})^*$. 
Note that because $w\in  \dyck_{\Gamma} \cap g_{A,B}^{-1}(\dyck_\Gamma)$, $ \pi_1(\delta_1\cdots \delta_n)$ and  $ \pi_2(\delta_1\cdots \delta_n)$
are Dyck words. 

 We define the $\tnw$ $\omega=(u,M_1,M_2)$ by: 
\begin{itemize}
\item $u =a_1\cdots a_n= f(w)$
\item for all $i<j$,  $M_1(i,j) \Leftrightarrow  (\delta_i$ is not internal and $ \pi_1(\delta_i) =\widebar{\pi_1(\delta_j)}$ and $\pi_1(\delta_{i+1} \cdots \delta_{j-1}) \in \dyck_{\Gamma})$. 
\item for all $i<j$,  $M_2(i,j) \Leftrightarrow  (\delta_i$ is not internal and $ \pi_2(\delta_i) =\widebar{\pi_2(\delta_j)}$ and $\pi_2(\delta_{i+1} \cdots \delta_{j-1}) \in \dyck_{\Gamma})$. 
\end{itemize}
Remark that from construction $i\in [n]^x_y$ iff $\delta_i \in [n]^x_y$. 

Now let $s=(\ell, h)$ be sequences defined  from $\omega$ and $w$ by : $\ell_0=q_0$, for all $i \in [n]$, $\ell_i$ is the last component of $\delta_i$, and for all $i\in [n]^c$, $h_i= \pi_1(\delta_i)$, for all $i\in [n]_c$, $h_i= \pi_2(\delta_i)$. 

We let the reader check that for all $i\in [n]$, $run^A_j(\omega,s)$, and   $s$ is an accepting run of $A$ over $\omega$. 
Now, it remains to prove that $\omega$ is a $\ww_2$.  From construction, for all $i\in [n]$
\begin{itemize}
\item if $i \in [n]_c^c$, there exists $p\in P$ such that $\pi_1(\delta_i)=(p,1)$ and $\pi_2(\delta_i)=(p,1)$ 
\item if $i \in [n]_c^r$, there exists $p\in P$ such that $\pi_1(\delta_i)=\widebar{(p,1)}$ and $\pi_2(\delta_i)=(p,2)$
\item if $i \in [n]_r^c$, there exists $p\in P$ such that $\pi_1(\delta_i)={(p,2)}$ and $\pi_1(\delta_i)=\widebar{(p,2)}$ 
\item  if $i \in [n]_r^r$, there exists $p\in P$ such that $\pi_1(\delta_i)=\widebar{(p,2)}$ and $\pi_1(\delta_i)=\widebar{(p,1)}$
\end{itemize}
It follows that every arch $(x,y)$ belongs to a 2-wave $i_1,i_2,i_3,i_4$ such that 
$\pi_1(\delta_{i_1}\delta_{i_2}\delta_{i_3}\delta_{i_4})=(p,1)\widebar{(p,1)}(p,2)\widebar{(p,2)}$ and  $\pi_2(\delta_{i_1}\delta_{i_2}\delta_{i_3}\delta_{i_4})=(p,1)(p,2)\widebar{(p,2)}\widebar{(p,1)}$. 
\end{proof}

\end{proof}

\paragraph*{Proof of Proposition \ref{EMOonINW}}

\EMOonINW*

\begin{proof} Consider  $\Sigma=A \cup \set{\#}$ for any alphabet $A$,  and the set $L$ of all words of the form 

$\#u_1 \# u_2\# \ldots u_{n}\#$, for $n\geq 1$, such that for all $i\in [1,n-1]$,   $u_i \in A^n$,  and if 
$u_i = a_1\ldots a_n$, then $u_{i+1}= a_n a_1 \ldots a_{n-1}$. 

Using the Shrinking Lemma given in \cite{DBLP:journals/tcs/Gilman96} for indexed languages, it can easily prove that $L$ is not an indexed language. Indeed, this Lemma ensures that for any indexed language $L$ and any positive integer $m$, there exists $k>0$, such that each word $w \in L$ with $|w|\geq k$ can be decomposed as a product of $r$  nonempty factors (with $m<r\leq k$), and such that each choice of $m$ factors  among the $r$ occurs in a   proper subproduct which lies in $L$. 

Then, choose $m=1$,  $w=\#u_1 \# u_2\# \ldots u_{n}\#$, with $n > k$, and consider a decomposition $w=w_1\cdots w_r$. As, $r \leq k$, at least one factor $w_i$ must contains two or more $\#$'s. Then $w_i$ can be decomposed in $w_i=u\#u_j\#v$ with $1\leq j \leq n$. The length of any word in $L$ containing the factor $w_i$ being the same as that of $u$, $w_i$ cannot occur in a proper subproduct of  $w_1\cdots w_r$.\\

Let us now check that $L$ is $\exists \mathit{waves}MSO$-definable. We consider the wave-nested-word language $L_{NW}$ whose word-projection is $L$, and consisting in all  $(u=\#u_1 \# u_2\# \ldots u_{n}\#, M_1,M_2)$, where $M_1$ and $M_2$ are defined as follows  (for sake of simplicity, we suppose that $n$ is odd: $n=2m$ and $m\geq 2$): 
\begin{itemize}
\item for each $u_i$, consider the unique nested word $(u_i,M_{u_i})$  such that each node in $[1,m]$ 
is a push node, and  each node in $[m+1,2m]$ is a pop node. Then  $M_1$ is union of all the $M_{u_i}$ (with the appropriate translation on node numbers); 
\item Now, for each $i\in [1,n-1]$ let us consider the unique nested word $(u_i\#u_{i+1},M'_{u_i})$  such that each node in $[m+1,2m]$ 
is a push node, and each node in $[2m+2,3m+1]$ is a pop node.    
We define $M_2$ as union of all the $M'_{u_i}$ (with the appropriate translation on node numbers)  and of  $\{(i, |u|-i), 2\leq i \leq m+1)\}$. 
\end{itemize}

Such a nested word is represented  Figure \ref{fig:cyclicword2}. 

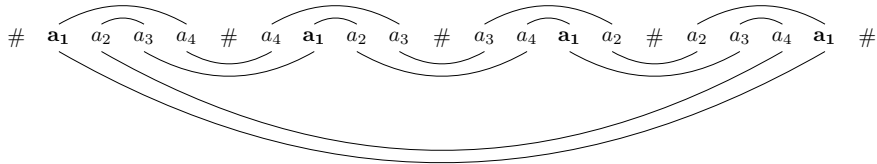
\begin{figure}[h]
\scalebox{0.8}{
\begin{tikzpicture}
\tikzset{node distance=0.7cm,}
\node (d) {$\#$};
\node[right of = d] (1) {$\mathbf{a_1}$};
\node[right of = 1] (2) {$a_2$};
\node[right of = 2] (3) {$a_3$};
\node[right of = 3] (4) {$a_4$};
\node[right of = 4] (d1) {$\#$};
\node[right of = d1] (11) {$a_4$};
\node[right of = 11] (12) {$\mathbf{a_1}$};
\node[right of = 12] (13) {$a_2$};
\node[right of = 13] (14) {$a_3$};
\node[right of = 14] (d2) {$\#$};
\node[right of = d2] (21) {$a_3$};
\node[right of = 21] (22) {$a_4$};
\node[right of = 22] (23) {$\mathbf{a_1}$};
\node[right of = 23] (24) {$a_2$};
\node[right of = 24] (d3) {$\#$};
\node[right of = d3] (31) {$a_2$};
\node[right of = 31] (32) {$a_3$};
\node[right of = 32] (33) {$a_4$};
\node[right of = 33] (34) {$\mathbf{a_1}$};
\node[right of = 34] (d4) {$\#$};
\draw[-] (1.north) edge[bend left =30] (4.north);
\draw[-] (2.north) edge[bend left =30] (3.north);

\draw[-] (11.north) edge[bend left =30] (14.north);
\draw[-] (12.north) edge[bend left =30] (13.north);

\draw[-] (21.north) edge[bend left =30] (24.north);
\draw[-] (22.north) edge[bend left =30] (23.north);

\draw[-] (31.north) edge[bend left =30] (34.north);
\draw[-] (32.north) edge[bend left =30] (33.north);


\draw[-] (1.south) edge[bend right =30] (34.south);
\draw[-] (2.south) edge[bend right =30] (33.south);
\draw[-] (3.south) edge[bend right =30] (12.south);
\draw[-] (4.south) edge[bend right =30] (11.south);
\draw[-] (13.south) edge[bend right =30] (22.south);
\draw[-] (14.south) edge[bend right =30] (21.south);
\draw[-] (23.south) edge[bend right =30] (32.south);
\draw[-] (24.south) edge[bend right =30] (31.south);

\end{tikzpicture}}
\caption{Example of a word in $L$}
\label{fig:cyclicword2}
\end{figure}

First we construct a formula FO formula  defining the set of all wave-NW $(\#u_1\#... \#u_k\#, M_1,M_2)$  such that $k\geq 2$,  there exists $m\geq 2$ such that  $|u_i|=2m$, for $1\leq i \leq k$, and $M_1,M_2$ are defined as above.  Now we can ensure  the property  that if 
$u_i = a_1\ldots a_n$, then $u_{i+1}= a_n a_1 \ldots a_{n-1}$,  since the labeling has to satisfy the following local properties: if $M_1(x,y) \wedge M_2(y,z) \wedge M_1(z,t)$ then
\begin{itemize}
\item  $u(x)=u(z+1)$ and  $u(y-1)=u(t)$
\item if $u(y+1)=\#$, then $u(y)=u(z)$
\end{itemize}
Until now, we have constructed a FO formula. Now, it remains to check that $k=2m$.  For this purpose, we follow the  path of the first letter of the first factor, and check that it reaches the end of the word (then $k$ is a multiple of $2m$), and it is the only one time it reaches the end of a factor (then $k< 4m$).  To follows  the path  of this letter,  we define the set $X$  of all elements $x$ satisfying one of the following property:
  \begin{itemize} 
\item $x$ is the first position of $u_1$; 
\item  there exist $y \in X$, and $z$  such that $(y,z) \in M_1$ and $(z,x-1) \in M_2$;
\item  there exist $y \in X$, and $z$  such that $(y,z) \in M_2$ and $(z,x-1) \in M_1$. 
\end{itemize}  If $X$ contains the last position of the factor $u_k$ and it is the only  one $x \in X$ indexing the last letter of a factor $u_i$, for $1 \leq i \leq k$, then $k=2m$.   
Symbols indexed by elements  of  $X$ are represented in bold  in  Figure \ref{fig:cyclicword}. 
\end{proof}

\end{document}